\newtheorem{theorem}{Theorem}
\newtheorem{lemma}[theorem]{Lemma}
\newtheorem{observation}[theorem]{Observation}
\newtheorem{corollary}[theorem]{Corollary}
\newtheorem{claim}{Claim}
\newtheorem{fact}{Fact}
\newcommand{\set}[1]{\left\{ #1 \right\}}
\newcommand{\ang}[1]{\left\langle #1 \right\rangle}
\newcommand{\edgeD}{\Delta_{E}}
\newcommand{\omegaS}{\Delta_{\omega}}
\newcommand{\edgeDbar}{\bar{\Delta}_{E}}
\newcommand{\omegaSbar}{\bar{\Delta}_{\omega}}
\newcommand{\multipl}{\omega}
\newcommand{\cD}{\mathcal{D}}
\newcommand{\ignore}[1]{}
\newcommand{\remove}[1]{}
\def\inline#1:{\par\vskip 3pt\noindent{\bf #1:}\hskip 10pt}
\def\midinline#1:{\par\noindent{\bf #1:}\hskip 10pt}
\def\dnsinline#1:{\par\vskip -7pt\noindent{\bf #1:}\hskip 10pt}
\def\blackslug{\hbox{\hskip 1pt \vrule width 4pt height 8pt
		depth 1.5pt \hskip 1pt}}
\def\QED{\quad\blackslug\lower 8.5pt\null\par}
\long\def\commentstart #1\commentend{}
\def\pos{\mbox{\sf pos}}
\def\BBB{{\sc BBB}}
\long\def\hidestart #1\hideend{}
\title{%\Large
\textbf{Approximate Realizations for} \\
\textbf{Outerplanaric Degree Sequences}%
\thanks{This work was supported by US-Israel BSF grant no. 2022205.}
}
\author{
Amotz Bar-Noy%
\thanks{The Graduate Center of the City University of New York, NY 10016, USA.
%Partly supported by ARL Cooperative Grant, ARL Network Science CTA,
%W911NF-09-2-0053.
}
\\
{\tt\small amotz@sci.brooklyn.cuny.edu}
\and
Toni B\"ohnlein%
\thanks{Huawei, Zurich, Switzerland.}
\\
{\tt\small toniboehnlein@web.del}
\and
David Peleg%
\thanks{Department of Computer Science and Applied Mathematics,
The Weizmann Institute of Science, Rehovot 76100, Israel.}
\\
{\tt\small david.peleg@weizmann.ac.il}
\and
Yingli Ran\footnotemark[4]
\\
{\tt\small yingli.ran@weizmann.ac.il}
\and
Dror Rawitz
\thanks{Faculty of Engineering, Bar Ilan University, Ramt-Gan 52900, Israel}
\\
{\tt\small dror.rawitz@biu.ac.il}
}
\author{Amotz Bar-Noy\inst{1} \and
	Toni B\"ohnlein\inst{2} \and
	David Peleg\inst{3} \and
	Yingli Ran\inst{3} \and
	Dror Rawitz\inst{4}}
\institute{City University of New York (CUNY), USA.
%\\
%\email{amotz@sci.brooklyn.cuny.edu}
\and
Huawei, Zurich, Switzerland.
%\\
%\email{toniboehnlein@web.de}
\and
Weizmann Institute of Science, Israel.
%\\
%\email{\{david.peleg,yingli.ran\}@weizmann.ac.il}
\and
Bar Ilan University, Ramat-Gan, Israel.
%\\
%\email{dror.rawitz@biu.ac.il}
}
\begin{document}

\maketitle

\begin{abstract}
We study the question of whether a sequence
$d = (d_1,d_2, \ldots, d_n)$ of positive integers
%where $d_1\le n-1$
is the degree sequence of some outerplanar (a.k.a. 1-page book embeddable)
graph $G$.
If so, $G$ is an outerplanar realization of $d$ and $d$ is an outerplanaric
sequence.
The case where $\sum d \leq 2n - 2$ is easy, as $d$ has a realization
by a forest (which is trivially an outerplanar graph).
In this paper, we consider the family $\cD$ of all sequences $d$
of even sum $2n\leq \sum d \le 4n-6-2\multipl_1$, where $\multipl_x$ is
the number of $x$'s in $d$.
(The second inequality is a necessary condition for a sequence $d$
with $\sum d\geq 2n$ to be outerplanaric.)
We partition $\cD$ into two disjoint subfamilies, $\cD=\cD_{NOP}\cup\cD_{2PBE}$,
such that every sequence in $\cD_{NOP}$ is provably non-outerplanaric,
and every sequence in $\cD_{2PBE}$ is given a realizing graph $G$ enjoying
a 2-page book embedding
%which satisfies that $G=G_1 \cup G_2$ where $G_1$ is outer-planar and $G_2$ is bipartite outer-planar.
(and moreover, one of the pages is also bipartite).
%(and moreover, $G=G_1 \cup G_2$ where the first page $G_1$ is outer-planar and the second page $G_2$ is bipartite outer-planar).
%Moreover, $G$ satisfies that $G=G_1 \cup G_2$ where $G_1$ is outer-planar and $G_2$ is bipartite outer-planar.
%david{Agreed to include in $\cD$ only seq's with $\sum d\ge 2n$.
%Mention this here and in formal def of $\cD$.}
%
%david{This is a weak statement, because in fact we can trivially say
%something stronger: `` if $d$ is outerplanaric, then $d$ can be realized
%with a graph with $1$ page book embedding.''
%This is also why the title is silly.
%We want to claim something stronger:
%\\
%Consider the family $\cD$ of all sequences $d$ with $\sum d \le 4n-6$.
%We classify this family into two sub-classes, $\cD=\cD_{NOP}\cup\cD_{2PBE}$,
%such that $\cD_{NOP}$ contains non-outerplanaric sequences,
%and every sequence in $\cD_{2PBE}$ has a 2-page book embedding.}
\end{abstract}

\newpage
\setcounter{page}{1}

%%%%%%%%%%%%%%%%%%%%%
\section{Introduction}

\paragraph*{Background.}
%\subsection{Background}
%
Our study is concerned with degree sequences of graphs.
The degree of a vertex in a graph $G$ is the number of its incident edges,
and the sequence of vertex degrees of $G$ is denoted by $\deg(G)$.
%the sequence of its vertices' degrees.
%
Studied for more than 60 years, the \textsc{Degree Realization} problem
%plays a central part in the research of degree sequences. It asks
requires to decide, given a sequence $d$ of positive integers,
whether $d$ has a \emph{realizing graph}, namely, a graph $G$
such that $\deg(G)=d$,
%having a given sequence of positive integers as its degree sequence.
and to find such a graph if exists.
A sequence $d$ that has a realizing graph
%is indeed the degree sequence of a graph $G$
is called \emph{graphic}.
%and we say that $G$ is a realization of $d$.
%
One of the first characterizations of all graphic degree sequences
was given by Erd\"os and Gallai~\cite{EG60}.
%However, the authors did not provide the realizing graph for graphic sequences.
%david{No need to list next to each reference all the things that are NOT in the paper...}
About the same time, Havel and Hakimi~\cite{hakimi62,havel55} described
an algorithm that, given a sequence $d$, generates a realization
if $d$ is graphic or verifies that $d$ is not graphic.
Other studies focused on the uniqueness of the realization.
It is known that a sequence can have several non-isomorphic realizations.
As an example, consider the sequence\footnote{We use the shorthand
$a^k$ to denote a subsequence of $k$ consecutive $a$'s.} $(2^6)$, which
can be realized by a 6-vertex cycle or by two 3-vertex cycles.
This issue is not of our concern in the current paper.

%It is natural to ask what properties of a graph are determined
%by its degree sequence. For example, planar\cite{}, bipartite\cite{},
%connected, . For survey consider Rao~\cite{}.
The realization problem has also been investigated for
%special
various well known classes of graphs.
In particular, realization by planar graphs was considered
in several papers, but currently a complete solution seems out of reach.
The survey by Rao~\cite{Rao81} listed a number of partial results and
studies on related questions.
%Here, we are interested in
In this paper, we address the question
whether a given sequence has an \emph{outerplanar} realization.
Outerplanar graphs are
%special
planar graphs that can be embedded in the plane so that edges do not intersect
each other and additionally,
each vertex lies on the outer face
%of the embedding
%amotz{a vertex never resides inside a face. I am not sure my words are accurate}
(i.e., no vertex is surrounded by edges)
%inside an internal face)
\cite{syslo1979characterizations}.
%
%The problem of realizability by outerplanar graphs has been studied in the past
%but is still open ~\cite{Rao81}.
%dror{Cite Rao.}
%yingli{Is it?}
%david{Rao mentions planar as open, and lists about 25 papers with partial results.
%As far as I can see, none of those concerns outerplanar graphs. The arxiv version
%of Bose et al 2018 mention that ``maximal outerplanar'' is solved, and the journal
%version omits this citation. But they do not mention explicitly that outerplanar is open.}

A sequence $d\in\cD$ is \emph{outerplanaric}
if it has a realizing outerplanar graph.
%there exists an outerplanar graph $G$ such that $\deg(G) = d$.
Similarly, a sequence $d\in\cD$ is \emph{maximal outerplanaric (MOP)}
if it has a realizing \emph{maximal} outerplanar graph
(namely, an outerplanar graph with the property that adding any
edge to it violates outerplanarity).

Throughout, we denote by $d = (d_1, d_2, \ldots, d_n)$
%a sequence consisting of $d_1,d_2,\ldots,d_n$
a nonincreasing sequence of $n$ positive integers and denote
its \emph{volume} by  $\sum d = \sum_{i=1}^n d_i$.
We use the shorthand $a^k$ to denote a subsequence of $k$
consecutive integers $a$, and denote the multiplicity of degree $x$
in a sequence $d$ by $\multipl_x$.
%amotz{This sentence appeared in the middle of the paragraph, I moved it here after editing it}
%It is well-known that for every graph $G$, the degree sequence $d=\deg(G)$ satisfies that $\sum d$ is even.
By the well-known Handshaking Lemma, $\sum d$ is even for every
graphic sequence $d$.
%david{The entire paragraph speaks about sequences, not graphs.}
If $\sum d \leq 2n - 2$ and $\sum d$ is even, then $d$ has
a forest realization with $(2n - \sum_i d_i)/2$  components~\cite{GJT07}.
Therefore, we focus on $d$ with $\sum d\ge 2n$.
%Moreover,
We show later (Lemma \ref{lem:outerplanar-sum-bound-omega1})
that if $d$ is outerplanaric but not forestic (i.e., $\sum d > 2n - 2$), then
also
%\begin{equation}
%\label{eq:mop}
$\sum d \leq 4n - 6 - 2\multipl_1$,
and moreover, $\sum d=4n-6$ if and only if $d$ is maximal outerplanaric.
%dror{This holds only if $\sum d > 2n - 2$.}
%dror{citation?}
%\end{equation}
%Another necessary condition for degree sequences of outerplanar
%graphs is that $\multipl_2 \geq 2$.
Consequently, we focus on the family of nonincreasing sequences of $n$
positive integers
\[\cD ~=~ \left\{d=(d_1,\ldots,d_n)
~\mid~ \sum d ~\mbox{is even and}~ 2n\leq \sum d\le 4n-6-2\multipl_1 \right\}.\]
Let $\cD_{\le 4}\subseteq \cD$ be the subclass of sequences
whose largest degree satisfies $d_1\le 4$.

While we are unable to fully characterize the sequences of the family
$\cD$ with respect to outerplanarity, we obtain a classification result that can be
thought of as an ``approximate'' solution to the outerplanarity problem.
To do that, we employ the concept of \emph{book embeddings} which is used
in graph theory to measure how far a graph is \emph{away} from being outerplanar.
Given a graph $G=(V,E)$, let $E = E_1 \cup \ldots \cup E_p$ be
a partition of its edges such that each
%induced
subgraph $G_i = (V,E_i)$ is outerplanar.
%dror{How about $G_i = (V,E_i)$?}
%david{Correct. It is not an induced subgraph.}
%
For a book embedding of $G$, think of a book in which the pages (half-planes)
are filled by outerplanar embeddings of the $G_i$'s such that the vertices
are embedded on the spine of the book and in the same location on each page.
This constraint is equivalent to requiring that the vertices appear in the same
order along the cyclic order
(defined later as the vertex ordering on the outer face)
%outer-cycle
%david{In this definition, each page is a general outerplanar graph,
%so it might be disconnected, and in general might not have an outer-cycle}
of each of the outerplanar embeddings of the $G_i$'s.
The \emph{book thickness} or \emph{pagenumber}~\cite{bernhart1979book} is
the minimal number of pages for which a graph has a valid book embedding.
Note that a graph is outerplanar if and only if it has pagenumber $1$,
and it is known that the pagenumber of planar graphs
is at most $4$~\cite{Yannakakis-stoc86}.

%%%%%
\paragraph*{Our contribution.}
%
%We give
Our main result is
a classification of $\cD$ into two sub-families, $\cD=\cD_{NOP}\cup\cD_{2PBE}$,
such that
\begin{compactitem}
\item every sequence in $\cD_{NOP}$ is provably non-outerplanaric, and
\item every sequence in $\cD_{2PBE}$ is given (in polynomial time)
a realizing graph $G=(V,E)$,~ $E=E_1\cup E_2$,
enjoying a 2-page book embedding with pages
%$G[E_1]$ and $G[E_2]$}
$G_1$ and $G_2$
(hereafter referred to as a \emph{2PBE realization}). Moreover,
%one of the pages of $G$ is also bipartite, i.e.,
%$G=G_1 \cup G_2$
%david{I tried to make it more precise. See also previous paragraph, defining 2PBE.}
%where $G_1$ is outer-planar (i.e., 1PBE) and $G_2$
$G_2$
%, the second page of $G$,
is bipartite
%outer-planar
(hereafter, an \emph{OP+bi realization}),
and in some cases,
%the second page
$G_2$ consists of at most one or two edges
(hereafter, an \emph{OP+1} or \emph{OP+2} realization).
\end{compactitem}
%Note that every outerplanar graph admits a 1-page book embedding,
%while every planar graph admits a 4-page book embedding.
%
%Thus in a sense, we provide a (polynomial) ``approximation'' algorithm that
%produces a 2PBE realization for sequences whose 1-page status is undetermined,
%i.e., that were not classified as being outerplanaric or non-outerplanaric.
%We remark that for many (although not all) of the sequences in $\cD_{2PBE}$,
%we were in fact able to verify that they are outerplanaric,
%but this refined partial classification is omitted here.
Note that every \emph{outerplanaric} sequence in $D$ falls in $\cD_{2PBE}$.
Thus, the above result can be interpreted as yielding a polynomial time
algorithm providing an ``approximation'' for outerplanaric sequences,
in the sense that if the given sequence $d$ is outerplanaric
(i.e., has a 1-page realization), then the algorithm will produce
a 2-page realization for it.
We remark that for many of the outerplanaric sequences in $\cD$,
we are in fact able to find an outerplanar realization,
but as this refined classification is incomplete, it is omitted here.
%We do not have a full characterization of all outerplanaric sequences,
%or of all sequences realizable by a graph with a 2-page book embedding.
%amotz{However, we do not know provide a full characterization of all the sequences that can be realized by a graph that has a 2-page book embedding.}

Figure~\ref{fig:decision-tree} gives an overview of the algorithm.

%%%%%%%%%%%%%%%%%%%%%%%%%%%%%%%%%%%%%%%%%%%
%\begin{figure}[h]
%\begin{center}
%\includegraphics[width=4in]{}
%\vspace{-10pt}
%\end{center}
%\caption{\small\sf A flowchart of the resulting algorithm.
%david{LEMMA NUMBERS WERE CHANGED! FIX.}}
%\label{fig:decision-tree}
%\end{figure}
%%%%%%%%%%%%%%%%%%%%%%%%%%%%%%%%%%%%%%%%%%%%%

\tikzstyle{startstop} = [rectangle, rounded corners,
minimum width=3cm,
minimum height=1cm,
text centered,
draw=black,
fill=red!30]

\tikzstyle{io} = [trapezium,
trapezium stretches=true, % A later addition
trapezium left angle=70,
trapezium right angle=110,
minimum width=3cm,
text centered,
draw=black, fill=blue!30]

\tikzstyle{process} = [rectangle,
minimum width=2.5cm,
text centered,
text width=2.5cm,
draw=black,
fill=orange!30]

\tikzstyle{decision} = [diamond,
minimum width=4cm,
aspect=4,
text centered,
draw=black,
fill=green!30]

\tikzstyle{arrow} = [thick,->,>=stealth]

%%%%%%%%%%%%%%%%%%%%%%%%%%%%%%%%%%%%%%
\begin{figure}[t]
\centering
\begin{scriptsize}
\begin{tikzpicture}[node distance=2cm]

%\node (start) [startstop] {Start};
\node (io1) [io] {$d \in \mathcal{D}$, $d$ is graphic};
\node (dec1) [decision, below of=io1, yshift=0.6cm] {$d_{n-1} \leq 2$ and $d_{n-2} \leq 3$};
\node (pro1) [process, left of=dec1, xshift=-2.5cm] {Not outerplanaric \\ (\Cref{lem:op_nec_deg23})};
\node (dec2) [decision, below of=dec1, yshift=0.4cm] {$d \in \mathcal{D}_{\leq 4}$?};
\node (pro2) [process, right of=dec2, xshift=2.5cm] {Build 2PBE (\Cref{thm:d1leq4})};
\node (dec3) [decision, below of=dec2, yshift=0.6cm] {$d_2 \leq 3$?};
\node (pro3) [process, right of=dec3, xshift=2.5cm] {Build 2PBE (\Cref{small d2})};
\node (dec4) [decision, below of=dec3, yshift=0.6cm] {$2\Delta_E \geq \Delta_\omega$};
\node (pro4) [process, right of=dec4, xshift=2.5cm] {Build 2PBE (\Cref{outerplanar with degree 1})};
\node (pro5) [process, left of=dec4, xshift=-2.5cm] {Not outerplanaric \\ (\Cref{lem: between edge and multiplicity})};
%\node (out1) [io, below of=pro2a] {Output};
%\node (stop) [startstop, below of=out1] {Stop};

\draw [arrow] (io1) -- (dec1);
\draw [arrow] (dec1) -- node[anchor=south] {no} (pro1);
\draw [arrow] (dec1) -- node[anchor=east] {yes} (dec2);
\draw [arrow] (dec2) -- node[anchor=south] {yes} (pro2);
\draw [arrow] (dec2) -- node[anchor=east] {no} (dec3);
\draw [arrow] (dec3) -- node[anchor=south] {yes} (pro3);
\draw [arrow] (dec3) -- node[anchor=east] {no} (dec4);
\draw [arrow] (dec4) -- node[anchor=south] {yes} (pro4);
\draw [arrow] (dec4) -- node[anchor=south] {no} (pro5);
\end{tikzpicture}
\end{scriptsize}
\caption{\small\sf
%The resulting flowchart.
A flowchart of the resulting algorithm.
$\edgeD= (4n - 6 - \sum d) / 2$ and
$\omegaS=3\multipl_1+2 \multipl_2 + \multipl_3 - n$.
%which are used in section 3.}
}
\label{fig:decision-tree}
\end{figure}
%%%%%%%%%%%%%%%%%%%%%%%%%%%%%%%%%%%%%%

%amotz{Do we want to emphasize here that finding a full characterization
%for 2-PBE sequences remains open?}
%dror{Should we mention this?}
%yingli{I think it can be deleted.}
%david{I added it becaus I thought (1) it does not harm, and (2) in some context it may help (to deflect criticism that we ``missed'' a more refined analysis). We can talk about it.}

%We propose algorithms that construct outerplanar graphs with given
%degree sequences or efficient algorithms for recognizing such sequences,
%and constructing outerplanar (approximate) realizations for them.
%
%%\subsection{Our Results}
%
%%As we will see, the multiplicity of degree $2$ is important.
%\textbf{Our results}
%
%We provide full characterizations of the outerplanar sequences.
%First the maximal outerplanar sequence is studied.
%We show that our approach can be used to efficiently decide
%the sequences where $\multipl_2$ is a constant.
%In the second part of the paper, we consider the complementary
%case where $\multipl_2$ is large, and provide \emph{approximate realizations}.
%Specifically, we employ the concept of \emph{book embeddings},
%which is used in graph theory to measure how far a graph
%is \emph{away} from being outerplanar.

%%%%%%%%%%%%%%%%%%%%%%%%%%%%%%%%%%%%%%
\paragraph*{Additional related work.}
%\subsection{Additional Related Work}
%
%The Havel-Hakimi algorithm \cite{hakimi62,havel55} employs iterations where
%the vertex $i$ with the currently highest degree deficit $d_i$ (the ``pivot'')
%is connected to the $d_i$ vertices of currently highest degree deficit
%(excluding itself). In \cite{KleitmanW73} it is shown that the pivot can be
%chosen arbitrarily.
%
For the \textsc{Outerplanar Degree Realization} problem, a full characterization
of \emph{forcibly} outerplanar graphic sequences
(namely, sequences each of whose realizations is outerplanar)
was given in~\cite{choudum1981characterization}.
%
%a linear time algorithm for generation of scale-free
%outerplanar networks was given.
%dror{I don't understand what is the result of this paper.}
%
A characterization of the degree sequence of maximal outerplanar graphs
having exactly two 2-degree nodes was provided in~\cite{rafid2022generating}.
This was also mentioned in~\cite{bose2008characterization}.
A characterization of the degree sequences of maximal outerplanar graphs with
at most four vertices of degree 2 was given in~\cite{li2018degree}.

Sufficient conditions for the realization of $2$-trees were given in~\cite{LMNW06}.
A full characterization of the degree sequences of 2-trees was presented
in~\cite{bose2008characterization}.
A graph $G$ is a 2-tree if $G = K_3$ or $G$ has a vertex $v$ with degree 2,
whose neighbors are adjacent and $G[V \setminus \set{v}]$ is a 2-tree.
This definition implies, in particular, that 2-trees have $\sum d=4n-6$.
By Theorem 1 of~\cite{bose2008characterization}, if a sequence $d$ is
maximal outerplanaric, i.e., it is outerplanaric and satisfies $\sum d=4n-6$,
then $d$ can be realized by a 2-tree.
In~\cite{rengarajan1995stack}
%Rengarajan and Veni Madhavan
it was shown that every 2-tree has a 2-page book embedding.
Note that the class of 2-trees is not hereditary, i.e.,
subgraphs of 2-trees are not necessarily 2-trees.
Therefore, the result of~\cite{bose2008characterization}
does not apply to non-maximal degree sequences.
%dror{It should be mentioned that subgraphs of 2-trees are not necessarily 2-trees. Otherwise, the latest result would have solved our problem.}

For the \textsc{Planar Degree Realization} problem, regular planar graphic
sequences were classified in~\cite{hawkins1966certain}, and
planar bipartite biregular degree sequences were studied in~\cite{AdamsN19}.
In~\cite{SH77},  Schmeichel and Hakimi determined which
graphic sequences with $d_1-d_n=1$ are planaric, and
presented similar results for $d_1-d_n=2$,
with a small number of unsolved cases.
Some of the sequences left unsolved in~\cite{SH77} were later resolved
in~\cite{fanelli1980conjecture-maximal,fanelli1981unresolved-nonmaximal}.
Some additional studies on special cases of the planaric degree
realization problem were listed in~\cite{Rao81}.

%%%%%%%%%%%%%%%%%%%%%%%%%%%%%%%%%%%%%%%%%%
\section{Preliminaries}
\label{sec:2}
%dror{Where do we define $\multipl_i$?}
%david{In the introduction}

%%%%%%%%%%%%%%%%%%%%%%%%%%%%%%%%%%%%%%
\paragraph{Definitions and notation.}
%\subsection{Definitions and notation}
%
Given two $n$-integer sequences $d$ and $d'$, denote their componentwise
addition and difference, respectively, by
$d\oplus d' = (d_1+d'_1,\ldots,d_n+d'_n)$ and
%If $d_i\ge d'_i$ for every $1\le i\le n$,
%denote their (componentwise) difference by
$d\ominus d' = (d_1-d'_1,\ldots,d_n-d'_n)$.
%dror{Should we add that it is assumed that $d \geq d'$?}
%david{The definition does not require this assumption (although in practice
%we only use this notation when $d_i \geq d'_i$ for every $i$).}
For a nonincreasing sequence $d$ of $n$ nonnegative integers,
let $\pos(d)$ denote the prefix consisting of the positive integers of $d$
(discarding the zeros).
Given two sequences $d'$ and $d''$ of length $n'$ and $n''$ respectively, let
$d'\circ d'' = (d'_1,\ldots,d'_{n'},d''_1,\ldots,d''_{n''})$.

%Consider a nonincreasing sequence $d = (d_1, \ldots, d_n)$
%of positive integers such that $\sum d$ is an even number.

Consider a simple (undirected) graph $G = (V,E)$.
A (directed) \emph{circuit} in $G$ is an \emph{ordered set} of vertices
$C = \set{v_0,v_1,\ldots,v_{k-1}}$ (with possible repetitions) such that
%amotz{The following is implied because $G$ is simple}
%$v_i \ne v_{(i+1) \bmod k}$ and
$(v_i,v_{(i+1) \bmod k})\in E$ for every $i=0,\ldots,k-1$.
The directed edge set of the circuit $C$ is
$E(C) = \{\langle v_i,v_{(i+1) \bmod k}\rangle \mid i=0,\ldots,k-1\}$
and it is required
%such
that no directed edge may appear on it more than once.
If the vertices of $C$ are distinct then it is called a \emph{cycle}.
A \emph{triangle} is a cycle consisting of three edges.

For a graph $G = (V,E)$, one may define an accompanying embedding in the plane,
namely, a mapping of every vertex to a point in two-dimensional space, with
%amotz{are we committed for straight lines?}
%david{I think not. Does ``line segment'' mean ``straight''?}
a curve connecting the two endpoints $u,w$
of every edge $(u,w)\in E$.
%david{Otherwise it makes no sense to talk about the outer-cycle etc -
%every cycle could be made the external face).}
%dror{They are defined now.}
%
%amotz{We define outerplanarity withoud defining planarity, faces, and the outerface}
%david{True. Do you think we should? Maybe we can postpone this to the full paper?}
%
An \emph{outerplanar} graph $G$ is a graph that has a planar embedding
in which all the vertices occur on the outer face.
(This embedding can be transformed into a 1-page book embedding of $G$,
with all vertices placed on the \emph{spine} of the book.)
We refer to the order in which the vertices occur on the outer face as the
\emph{cyclic-order}
%(or the \emph{spine-order})
of $G$, and denote it by $C_{order}(G)$.
A \emph{maximal outerplanar} (MOP) graph $G$ is an outerplanar graph
such that adding any new edge to it results in a non-outerplanar graph.
It is
%known
folklore
%
%Reviewer asked to omit the reference.
%
%(cf. \cite{campos2013dominating}, see \Cref{obs: MOP-Hamiltonian-cycle})
%
that for a
%MOP
2-connected outerplanar graph,
the cyclic order corresponds to a Hamiltonian cycle in $G$.

%amotz{Do we do the same for 2PBE}
Hereafter, whenever we discuss a given outerplanar (resp., 2PBE) graph,
we implicitly assume that some outerplanar (resp., 2PBE) embedding for it
is given as well, and when constructing an outerplanar (resp., 2PBE)
graph, we also give an outerplanar (resp., 2PBE) embedding for it.

Given an outer-planar embedding, an \emph{external} edge
is an edge residing on the outer face.
That is,  $e=(u,w)$ is external if $u$ and $w$ are neighbors
on a circuit which is part of the outer face.
All other edges of $G$ are \emph{internal}.
An \emph{outer} (respectively, \emph{inner}) \emph{cycle} in $G$ is a cycle
all of whose edges are external (resp., internal).
Note that the outer face may be comprised of several outer-cycles
(as can be seen in Figure~\ref{f:realization}).
%An \emph{internal cycle} in a graph is a cycle consisting of internal edges.
%%%%located inside the outer face of the graph.
%david{(a) define for cycles in general, not just triangles.
%(b) The definition is unclear. Perhaps say that an internal cycle is one all
%of whose edges are internal. (c) Do we need also a name for a cycle consisting
%of some internal edges and some external edges?}

%david{Dror, I think the following text is mostly not needed. Notice that
%we discussed this in the previous paragraph, except for the final definition
%of the order, so I added the definition there. Take a look.}
%\\
%Observe that if $G$ is outerplanar, then it has a planar embedding in which
%%%%%%one can always place its
%the vertices occur on a (geometric) circle
%%%%%%circuit
%\emph{cycle}
%(even if $G$ is not a MOP and even if it is not connected).
%We refer to the order of the vertices on this circle as the
%%%%%%%outer circuit
%\emph{cyclic-order} of $G$, and denote it by
%%%%%%$C_{outer}(G)$
%$C_{order}(G)$.

%%%%%%%%%%%%%%%%%%%%%%%
\begin{figure}
%\vspace{-5pt}
\centering
%\begin{small}
%\begin{footnotesize}
%\includegraphics[width=2.5in]{}
\begin{tikzpicture}[scale=0.5]
\begin{scope}[
every node/.style={fill=black,circle,inner sep=2pt},
every edge/.style={draw=black,thick}]
\node(v1) at (0,0) {};
\node(v2) at (0,3) {};
\node(v3) at (2,1.5) {};
\node(v4) at (4,3) {};
\node(v5) at (6,3) {};
\node(v6) at (8.5,3.5) {};
\node(v7) at (10,1.5) {};
\node(v8) at (8.5,-0.5) {};
\node(v9) at (6,0) {};
\node(v10) at (4,0) {};
\node(v11) at (13,1.5) {};
\node(v12) at (15,3) {};
\node(v13) at (15,0) {};

\path (v1) [-] edge (v2);
\path (v1) [-] edge (v3);
\path (v2) [-] edge (v3);
\path (v3) [-] edge (v4);
\path (v3) [-] edge (v10);
\path (v4) [-] edge (v10);
\path (v4) [-] edge (v5);
\path (v5) [-] edge (v9);
\path (v5) [-] edge (v6);
\path (v5) [-] edge (v7);
\path (v6) [-] edge (v7);
\path (v7) [-] edge (v8);
\path (v7) [-] edge (v9);
\path (v8) [-] edge (v9);
\path (v9) [-] edge (v10);
\path (v11) [-] edge (v12);
\path (v11) [-] edge (v13);
\path (v12) [-] edge (v13);
\end{scope}

\node () at (0,-0.7) {$1$};
\node () at (0,3.7) {$2$};
\node () at (2,2.2) {$3$};
\node () at (4,3.7) {$4$};
\node () at (6,3.7) {$5$};
\node () at (8.5,4.2) {$6$};
\node () at (10.7,1.5) {$7$};
\node () at (8.5,-1.2) {$8$};
\node () at (6,-0.7) {$9$};
\node () at (4,-0.7) {$10$};
\node () at (12.3,1.5) {$11$};
\node () at (15,3.7) {$12$};
\node () at (15,-0.7) {$13$};
\end{tikzpicture}
%\end{small}
%\end{footnotesize}
\caption{\small\sf An outerplanar graph without degree-1 vertices.
The triangle $(5,7,9)$ is the only internal cycle,
while $(1,2,3)$, $(3,4,5,6,7,8,9,10)$, and $(11,12,13)$ are outer-cycles.
A possible cyclic-order is $(1,2,3,4,5,6,7,8,9,10,11,12,13)$.
}
\label{f:realization}
\end{figure}
%%%%%%%%%%%%%%%%%%%%%%%

\paragraph{Degree sequences of trees and forests.}
%\subsection{Degree Sequences of Trees and Forests}
%\label{sec:cat}
%\paragraph{Trees and Forests.}
Consider a nonincreasing sequence $d = (d_1, \ldots, d_n)$ of positive integers
such that $\sum d$ is an even number.
It is known that if $\sum d \leq 2n - 2$ then $d$ is graphic, and
moreover, $d$ can be realized by an acyclic graph (forest).
In this case, $d$ is called a \emph{forestic} sequence.
If $\sum d = 2n - 2$, then $d$ can be realized only by trees,
and $d$ is called a \emph{treeic} sequence.
A vertex of degree one is called a \emph{leaf}.

We make use of a special type of realizations of treeic
%and forestic
sequences by \emph{caterpillar graphs}.
In a caterpillar graph $G = (V,E)$, all the non-leaves vertices
are arranged on a path,
called the \emph{spine} of the caterpillar. Formally, the spine of
a caterpillar graph is an ordered sequence $S = (x_1, \ldots, x_s) \subseteq V$,
$s\ge 1$, in which $(x_i,x_{i+1}) \in E$ for $i = 1, \ldots, s-1$
(see Figure~\ref{fig:caterpillar-example} for an example).

%%%%%%%%%%%%%%%%%%%%%%%%%%%%%%%%%%%%%%%%%%%%%%%%%%%%%%%%
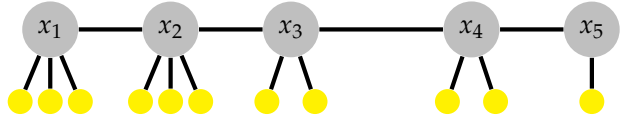
\begin{figure}[t]
\centering
\begin{small}
%\begin{footnotesize}
	\begin{tikzpicture}[scale=0.8]
			
		\def\pathlen{5}
		\def\pathspacing{2.3}
		\pgfmathtruncatemacro{\plenplusone}{\pathlen + 1}
		
		\foreach \x in {1,...,\pathlen} {
		\pgfmathtruncatemacro{\xcoord}{\x * \pathspacing}
		\node [minimum size= 3mm,fill=lightgray,circle] (v\x) at (\xcoord,0)  {$x_{\x}$};
		}
				
	\pgfmathtruncatemacro{\xcoord}{1 * \pathspacing}				\node [fill=yellow,circle] (leaf1) at ( \xcoord - .5, - 1.2)  {};
	\path [draw=black,ultra thick] (leaf1) [-] edge (v1);
					
	\node [fill=yellow,circle] (leaf2) at ( \xcoord, - 1.2)  {};
	\path [draw=black,ultra thick] (leaf2) [-] edge (v1);
					
	\node [fill=yellow,circle] (leaf3) at ( \xcoord + .5, - 1.2)  {};
	\path [draw=black,ultra thick] (leaf3) [-] edge (v1);
							
	\pgfmathtruncatemacro{\xcoord}{2 * \pathspacing}					
	\node [fill=yellow,circle] (leaf1) at ( \xcoord - .5, - 1.2)  {};
	\path [draw=black,ultra thick] (leaf1) [-] edge (v2);
					
	\node [fill=yellow,circle] (leaf2) at ( \xcoord, - 1.2)  {};
	\path [draw=black,ultra thick] (leaf2) [-] edge (v2);
					
	\node [fill=yellow,circle] (leaf3) at ( \xcoord + .5, - 1.2)  {};
	\path [draw=black,ultra thick] (leaf3) [-] edge (v2);		
							
	\pgfmathtruncatemacro{\xcoord}{3 * \pathspacing}

	\node [fill=yellow,circle] (leaf1) at ( \xcoord - .4, - 1.2)  {};
	\path [draw=black,ultra thick] (leaf1) [-] edge (v3);
					
	\node [fill=yellow,circle] (leaf3) at ( \xcoord + .4, - 1.2)  {};
	\path [draw=black,ultra thick] (leaf3) [-] edge (v3);		
				
	\pgfmathtruncatemacro{\xcoord}{4 * \pathspacing}				
	\node [fill=yellow,circle] (leaf1) at ( \xcoord - .4, - 1.2)  {};
	\path [draw=black,ultra thick] (leaf1) [-] edge (v4);
							
	\node [fill=yellow,circle] (leaf3) at ( \xcoord + .4, - 1.2)  {};
	\path [draw=black,ultra thick] (leaf3) [-] edge (v4);		
					
	\pgfmathtruncatemacro{\xcoord}{5 * \pathspacing}					
	\node [fill=yellow,circle] (leaf1) at ( \xcoord, - 1.2)  {};
	\path [draw=black,ultra thick] (leaf1) [-] edge (v5);
					
	\pgfmathtruncatemacro{\plenminusone}{\pathlen - 1}
	\foreach \x in {1,...,\plenminusone}{
						
		\pgfmathtruncatemacro{\xpp}{\x+1}
		\path [draw=black,ultra thick] (v\x) edge (v\xpp);
						
	}
\end{tikzpicture}
\end{small}
%\end{footnotesize}
\caption{\small\sf A caterpillar graph with degree sequence $(5,4^3,2,1^{11})$.
Spine vertices are gray, leaves are yellow.
}
\label{fig:caterpillar-example}
\end{figure}
%%%%%%%%%%%%%%%%%%%%%%%%%%%%%%%%%%%%%%%%%%%%%%%%%%%%%%%%%%%%%%%%%%%%%%

The other ingredient in the construction of a realization to a forestic
sequence is a collection of disjoint edges, hereafter referred to
somewhat informally as a \emph{matching}.

%amotz{Is there a reference for this observation? In any case it is not that trivial?}

\begin{observation}
\label{obs:caterpillar}
A forestic sequence $d = (d_1,d_2, \ldots, d_n)$ of positive integers can be
realized by a union of a caterpillar graph and a matching.
\end{observation}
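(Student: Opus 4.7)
The plan is an explicit construction. Partition the target vertices into \emph{non-leaves} (with $d_i\geq 2$) and \emph{leaves} (with $d_i=1$); let $n'=n-\multipl_1$ count the non-leaves and let $m=(2n-2-\sum d)/2\geq 0$ be the intended number of matching edges. Since $d$ is forestic with $\sum d$ even, $m$ is a nonnegative integer, and it reflects the identity that a forest on $n$ vertices with $\sum d/2$ edges has $m+1$ components.

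In the main case $n'\geq 1$, I would place all $n'$ non-leaf vertices on the spine $S=(x_1,\ldots,x_{n'})$ of a caterpillar, setting $\deg(x_i)=d_i$. The spine itself accounts for a total spine-degree of $2(n'-1)$, so the number of leaves that must be attached to $S$ is
\[
L ~=~ \Big(\sum d - \multipl_1\Big) - 2(n'-1) ~=~ \sum d + \multipl_1 - 2n + 2.
\]
I then give each of the two spine endpoints $d_i-1$ attached leaves and each internal spine vertex $d_i-2\geq 0$ attached leaves (collapsing to the star with $d_1$ leaves when $n'=1$). These counts are all nonnegative, sum to exactly $L$, and produce the prescribed degrees. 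The remaining $\multipl_1-L = 2n-2-\sum d = 2m$ leaves then pair up into a matching of $m$ edges.

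The one nontrivial check is $L\geq 0$. Since every non-leaf contributes at least $2$, we have $\sum d\geq 2n'+\multipl_1 = 2n-\multipl_1$, so $L = \sum d+\multipl_1-2n+2\geq 2$. The degenerate case $n'=0$ gives $d=(1^n)$ with $n\geq 2$ even, which is realized by a matching of $n/2$ edges and can be rewritten as a one-edge caterpillar (spine $(x_1)$ with one attached leaf) plus a matching of $n/2-1$ edges, so the statement still holds.

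The main obstacle, such as it is, is really just bookkeeping: simultaneously ensuring $L\geq 0$, that $\multipl_1-L=2m$ is a nonnegative even integer, and that the prescribed leaf allocation on the spine is internally consistent. All three drop out automatically from $\sum d$ being even and at most $2n-2$, so once the spine ordering is fixed the construction proceeds in a single pass.
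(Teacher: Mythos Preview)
Your proof is correct and follows essentially the same construction as the paper's: place the non-leaf vertices on the spine, attach the appropriate number of leaves to each, and pair the remaining degree-$1$ vertices into a matching. The only organizational difference is that the paper first treats the treeic case $\sum d = 2n-2$ and then reduces the general forestic case to it by peeling off pairs of $1$'s, whereas you compute the split between caterpillar-leaves and matching-leaves in a single pass; you also explicitly dispose of the degenerate case $n'=0$, which the paper glosses over.
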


\begin{proof}
First consider a treeic sequence $d$ (such that $\sum d = 2n - 2$).
Assume there are $n-s$ vertices of degree 1 in $d$.
Denote by $d^*$ the prefix of $d$ that contains all the degrees $d_i>1$.
%Let $d = d^* \circ (1^{n-s})$, where
It follows that $\sum d^* + (n - s)  = \sum d = 2n - 2$, or equivalently
$\sum d^* - 2s + 2 = n - s$.
To get the caterpillar realization of $d$, first arrange $s$ vertices
corresponding to the degrees of $d^*$ in a path.
The path edges contribute $2s - 2$ to the volume $\sum d^*$.
The missing $\sum d^* - 2s + 2 = n - s$ degrees are satisfied by attaching the
$n-s$ leaves to the path, which now forms the spine of the caterpillar
realization. Note that the order of the vertices on the spine can be arbitrary.

Now assume that $d$ is forestic but not treeic, i.e., $\sum d < 2n - 2$.
In this case, remove pairs of
%degree one vertices
1 degrees from $d$ until the volume of the reaming sequence $d'$ with
$n'$ vertices is $2(n'-2)$.
This must happen because each pair removal reduces the volume by $2$
while $2n-2$ decreases by $4$. Let $G'$ be the caterpilar realization of $d'$
as implied by the first part of the proof. To get the realization of $d$,
add $(n-n')/2$ edges to $G'$ to satisfy the $n-n'$ removed 1 degrees
by a matching.
\end{proof}

%%%%%%%%%%%%%%%%%%%%%%%%%%%%%%%%%%
\paragraph{Some known results.}
%\subsection{Some Known Results}
%{Degree Sequences of Outerplanar Graphs}
%
We next summarize known conditions for sequence outerplanarity.

\begin{lemma}[\bf \cite{West01}]
\label{lem:euler-sequence-outerplanar}
If $d = (d_1 , \ldots, d_n)$ is an outerplanaric degree sequence where $n \geq 2$, then
%\begin{equation*}
%\label{eq:euler-sequence-outerplanar}
$\sum d \leq 4 n - 6$,
%\end{equation*}	
with equality if and only if $d$ is maximal outerplanaric.
\end{lemma}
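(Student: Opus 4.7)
The plan is to reduce the bound $\sum d \le 4n-6$ to the equivalent edge bound $|E(G)| \le 2n-3$ (since $\sum d = 2|E(G)|$ by the Handshaking Lemma), and to prove this bound via a standard Euler-formula count applied to a maximal outerplanar super-graph of the realization.

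Given any outerplanar realization $G$ of $d$, I would iteratively add edges while preserving outerplanarity until obtaining a maximal outerplanar graph $\hat G$ on the same vertex set. Since each such addition increases the total degree by exactly $2$, we have $\sum d \le \sum \deg(\hat G)$ with equality if and only if $G = \hat G$, i.e., $G$ is already maximal outerplanar. It therefore suffices to establish that $\sum \deg(\hat G) = 4n - 6$ for every maximal outerplanar graph on $n \ge 2$ vertices; the lemma then follows for the general case.

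For $n = 2$, the only outerplanar graph on two vertices with positive degrees is the single edge, which is itself maximal outerplanar and satisfies $\sum d = 2 = 4\cdot 2 - 6$. For the maximal case with $n \ge 3$, I would invoke the folklore fact recalled in Section~\ref{sec:2}: a $2$-connected (maximal) outerplanar graph has its outer face bounded by a Hamiltonian cycle, so the outer face of $\hat G$ has length exactly $n$. Maximality then forces every interior face of $\hat G$ to be a triangle, since otherwise a chord could be added inside a longer interior face without crossing any existing edge, contradicting maximality. Letting $f$ denote the number of interior faces and counting edge-face incidences, each interior edge is shared by two triangular faces and each outer-face edge is shared by one triangle and the outer face, so $2|E(\hat G)| = 3f + n$. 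Combining this with Euler's formula $n - |E(\hat G)| + (f + 1) = 2$ and eliminating $f$ yields $|E(\hat G)| = 2n - 3$, hence $\sum \deg(\hat G) = 4n - 6$.

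The only delicate point is the $2$-connectivity of a maximal outerplanar graph on $n \ge 3$ vertices (needed so that its outer face is a Hamilton cycle). I would verify this by contradiction: if $\hat G$ had a cut vertex $v$ separating it into pieces, then in any outerplanar embedding one could pick outer-face neighbors of $v$ on opposite sides in different pieces and draw an edge between them through the outer region while preserving outerplanarity, contradicting maximality. With this ingredient in hand, the Euler-formula calculation above gives both the inequality and the ``if and only if'' characterization of equality simultaneously, completing the proof.
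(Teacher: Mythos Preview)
Your argument is correct and standard: extend to a maximal outerplanar graph, observe that its outer face is a Hamiltonian cycle and its inner faces are triangles, and apply Euler's formula to get $|E|=2n-3$. The equality characterization follows exactly as you describe.

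There is nothing to compare against, however: the paper does not prove Lemma~\ref{lem:euler-sequence-outerplanar} at all, but simply quotes it from West's textbook~\cite{West01}. If one wanted to extract an alternative argument from the paper's own toolkit, the natural route would be the inductive construction of MOPs recalled as Fact~\ref{lem: MOP construction}: starting from a triangle ($n=3$, $|E|=3=2\cdot 3-3$), each step adds one vertex and two edges, preserving $|E|=2n-3$. That approach sidesteps the need to argue $2$-connectivity and the triangulation of inner faces separately, at the cost of relying on the (also cited) structural characterization of MOPs. Your Euler-formula route is equally valid and perhaps more self-contained.
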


%It is known that every outerplanar graph has at least two vertices of degree two or less, and at least three vertices of degree three or less, see for example Sys{\l}o \cite{syslo1979characterizations}. This (or Obs. \ref{obs: MOP-w2}) implies the following necessary condition for outerplanaric sequences.

\begin{lemma}[\bf \cite{syslo1979characterizations}]
\label{lem:op_nec_deg23}
If $d = (d_1 , \ldots, d_n)$ is
%an outerplanaric degree sequence
outerplanaric, then
\begin{inparaenum}[(i)]
\item $d_{n-1} \leq 2$, and
\item $d_{n-2} \leq 3$.
\end{inparaenum}
Moreover, $d_{n-1}=d_n=2$ in a maximal outerplanaric sequence $d$.
\end{lemma}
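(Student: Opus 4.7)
The plan is to reduce the claim to structural properties of \emph{maximal} outerplanar (MOP) graphs. Given an outerplanar realization $G$ of $d$, the first step is to extend $G$, by only adding edges, to a MOP $G^{*}$ on the same $n$ vertices; since this operation only increases degrees, it is enough to prove the corresponding degree bounds for $G^{*}$. The trivial regimes $n=2$ for part~(i) and $n\le 3$ for part~(ii) follow immediately from the bound on the maximum degree in a small outerplanar graph, so I focus on the remaining cases.

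For part~(i), I will use the standard \emph{weak dual} of a MOP on $n\ge 3$ vertices. Such a graph is a 2-connected triangulation of an $n$-gon, and its weak dual $T$ (whose nodes are the internal triangles and whose edges connect triangles sharing a diagonal) is a tree on $n-2$ nodes. A leaf of $T$ corresponds to an internal triangle with exactly two of its edges on the outer face, and the vertex common to those two outer edges has degree $2$ in $G^{*}$. For $n=3$, $G^{*}$ is a single triangle with all three degrees equal to $2$; for $n\ge 4$, $T$ has at least two leaves, yielding at least two distinct degree-$2$ vertices in $G^{*}$, which is exactly $d_{n-1}\le 2$.

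For part~(ii), I will use a degree-sum counting argument. By Lemma~\ref{lem:euler-sequence-outerplanar}, $\sum \deg(G^{*}) = 4n-6$, and the MOP structure (for $n\ge 3$) ensures minimum degree at least $2$. Letting $n_i$ denote the number of degree-$i$ vertices and $n_{\ge 4} = \sum_{i\ge 4} n_i$, combining $n_2+n_3+n_{\ge 4} = n$ with $2n_2 + 3n_3 + \sum_{\deg(v)\ge 4}\deg(v) = 4n-6$ and the bound $\sum_{\deg(v)\ge 4}\deg(v) \ge 4 n_{\ge 4}$ yields $2n_2 + n_3 \ge 6$. If one had $n_2+n_3 \le 2$, then $2n_2+n_3 \le 2(n_2+n_3) \le 4$, a contradiction; hence $n_2+n_3\ge 3$, giving at least three degree-$\le 3$ vertices in $G^{*}$ and so $d_{n-2}\le 3$.

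Finally, the ``moreover'' clause is immediate: if $d$ is maximal outerplanaric, its realization is a MOP with minimum degree $2$ and at least two degree-$2$ vertices by part~(i), so $d_{n-1}=d_n=2$. The main technical hurdle in the plan is the extension step — verifying that every outerplanar graph admits an outerplanar super-graph that is MOP on the same vertex set. This is standard (take any 1-page book embedding on the spine and greedily add non-crossing arcs until saturation), but requires some care when $G$ is disconnected or has cut vertices, as the added arcs must simultaneously merge components and introduce the missing outer-cycle edges that 2-connect the result.
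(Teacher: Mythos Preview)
The paper does not give its own proof of this lemma: it is stated with a citation to \cite{syslo1979characterizations} and used as a black box, so there is no in-paper argument to compare against. Your proposal is therefore a self-contained proof of a cited result, and as such it is essentially correct.

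A couple of minor remarks. First, the extension step (embedding $G$ outerplanarly and saturating with non-crossing chords until maximal) is indeed routine, and your caveat about disconnected graphs and cut vertices is well placed; once all $n$ vertices are on a common spine, greedy saturation yields a triangulated $n$-gon regardless of the initial component structure, so no real difficulty arises there. Second, your weak-dual argument for part~(i) and the degree-sum count for part~(ii) are both standard and clean; note that the counting argument $2n_2+n_3\ge 6$ actually already subsumes part~(i), since $n_2\le 1$ would force $n_3\ge 4$ and hence $\sum\deg \ge 2\cdot 1 + 3\cdot 4 = 14$ on at most $5$ low-degree vertices, but the remaining $n-5$ vertices contribute at least $4(n-5)$, giving $\sum\deg \ge 4n-6$ only with equality constraints that force $n_2\ge 2$ anyway --- so the dual-tree route, while elegant, is not strictly needed. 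Finally, the ``moreover'' clause tacitly assumes $n\ge 3$ (for $n=2$ the unique MOP is a single edge with $d_n=d_{n-1}=1$); this is consistent with how the lemma is used later in the paper (always with $\multipl_2\ge 2$), but you may want to state the restriction explicitly.
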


Our next observation is that Lemma~\ref{lem:euler-sequence-outerplanar}
can be improved if $\multipl_1$, the multiplicity of degree $1$ in $d$
is taken into consideration. 
Specifically, for the case $\sum d > 2n-2$, we have the following.

\begin{lemma}
\label{lem:outerplanar-sum-bound-omega1}
If
%a sequence
$d = (d_1, \ldots, d_n)$
%of positive integers
is outerplanaric and $\sum d > 2n-2$, then $\sum d \leq 4n - 6 - 2\multipl_1$.
\end{lemma}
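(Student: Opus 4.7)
The plan is to reduce to \Cref{lem:euler-sequence-outerplanar} by peeling off the degree-$1$ vertices of an outerplanar realization. I would fix an outerplanar realization $G=(V,E)$ of $d$, let $L\subseteq V$ be its set of leaves (so $|L|=\multipl_1$), and consider the subgraph $G' = G[V\setminus L]$ on $n' = n-\multipl_1$ vertices. Since outerplanarity is closed under taking subgraphs, $G'$ is outerplanar, and provided $n'\geq 2$, applying \Cref{lem:euler-sequence-outerplanar} to the degree sequence $d'$ of $G'$ yields $\sum d' \leq 4n'-6 = 4(n-\multipl_1)-6$.

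The next step is to express $\sum d'$ in terms of $\sum d$. Let $M$ be the number of ``leaf-leaf'' edges of $G$, i.e., the $K_2$ components formed by two leaves; then the number of edges joining $L$ to $V\setminus L$ equals $\multipl_1-2M$. A direct degree-counting argument, starting from $\sum_{u\notin L}\deg_G(u) = \sum d -\multipl_1$ and then subtracting the contribution of each edge from $L$ to $V\setminus L$, gives
\[
\sum d' ~=~ \sum d - 2\multipl_1 + 2M.
\]
Substituting into the bound above yields $\sum d \leq 4n-6-2\multipl_1 - 2M \leq 4n-6-2\multipl_1$, as required.

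The only delicate point is to justify $n'\geq 2$, so that \Cref{lem:euler-sequence-outerplanar} is indeed applicable to $G'$. This is exactly where the hypothesis $\sum d > 2n-2$ is used: because $d$ is not forestic, any realization $G$ must contain a simple cycle, every vertex of which has degree at least $2$ and is therefore a non-leaf. Hence $G$ has at least three non-leaf vertices and $n'\geq 3$. Without the assumption $\sum d>2n-2$, configurations such as a star plus disjoint edges (where $n'=1$) would obstruct the leaf-peeling approach, which explains why the improved bound of \Cref{lem:outerplanar-sum-bound-omega1} holds only in this regime.
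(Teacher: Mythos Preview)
Your proof is correct and follows the same approach as the paper: remove the leaves from an outerplanar realization and apply \Cref{lem:euler-sequence-outerplanar} to the remaining subgraph. You are in fact slightly more careful than the paper, explicitly accounting for leaf--leaf edges via the parameter $M$ and justifying $n'\geq 2$ through the cycle argument, whereas the paper glosses over both points (it writes $\sum d = \sum d' + 2\multipl_1$ as an equality and does not separately verify $n'\geq 2$).
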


\begin{proof}
Let $d$ be as in the lemma and let $G = (V,E)$ be an outerplanar graph
that realizes $d$.	
Construct a graph $G'$ by deleting from $G$ all the $\multipl_1$ vertices
of degree $1$ together with their incident edges, and subsequently deleting the vertices whose degree was reduced to zero.
%dror{There may be zero degree vertices. For example, consider a graph composed of a $K_4$ and a $P_2$.}
The graph $G'$ may contain vertices whose degree became 1.
Observe that $G'$ is outerplanar, and let $d' = \deg(G')$.
Note that $n' \leq n - \multipl_1$, where $n'$ the number of vertices in $G'$.
Lemma~\ref{lem:euler-sequence-outerplanar} yields the claim, since ~~
\[
\sum d
= \sum d' + 2 \multipl_1 \leq 4(n-\multipl_1) - 6 + 2 \multipl_1
= 4n - 6  - 2 \multipl_1
~,
\]
where the first equality comes from the fact that $\multipl_1$ vertices
of degree $1$ together with their incident edges are deleted,
and the inequality is due to Lemma \ref{lem:euler-sequence-outerplanar}.
%dror{The inequality is due to Lemma 2.}
\end{proof}

\noindent
We remark that Lemma~\ref{lem:outerplanar-sum-bound-omega1} is false
without the assumption that $\sum d > 2n - 2$.
To see this, consider the treeic sequence $d = (a,1^a)$ which is (uniquely)
realized by a star graph and therefore is outerplanaric.
The process in the proof of Lemma~\ref{lem:outerplanar-sum-bound-omega1}
yields $G'$ consisting of one isolated vertex.
Note that Lemma~\ref{lem:euler-sequence-outerplanar} cannot be applied
in this case. Indeed, we have
\[\sum d = 2a > 2a-2 = 4(a+1) - 6 - 2a = 4n - 6 - 2\multipl_1~.\]

\smallskip
%\noindent
We now recall some facts concerning
maximal outerplanaric sequences (or MOP's).

\begin{fact}[\bf \cite{proskurowski1979minimum}]
\label{lem: MOP construction}
A graph is a MOP
%if'f~
if and only if
it can be constructed from a triangle by
a finite number of applications of
repeatedly  applying the following operation:
to the graph already constructed, add a new vertex in the exterior face
and join it to two adjacent vertices on
%the exterior
that face.
\end{fact}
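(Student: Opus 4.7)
The plan is to prove both implications by induction, with the forward direction being essentially definitional and the reverse direction requiring a \emph{peeling} argument that removes a degree-$2$ vertex at each step.

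For the ``if'' direction, I would induct on the number of operations applied. The base case is a single triangle, which is trivially a MOP on $3$ vertices. For the inductive step, suppose the graph $G$ built so far is a MOP with $n$ vertices and, by Lemma~\ref{lem:euler-sequence-outerplanar}, exactly $2n-3$ edges. Take $u,w$ adjacent on the outer face and add a new vertex $v$ joined to both. Draw $v$ inside the outer face, with the two new edges running close to the existing edge $(u,w)$; this yields a valid outerplanar embedding in which $v$ lies on the new outer face. The resulting graph $G'$ has $n+1$ vertices and $2n-1 = 2(n+1)-3$ edges, which by Lemma~\ref{lem:euler-sequence-outerplanar} is the maximum possible for an outerplanar graph on $n+1$ vertices, so $G'$ is MOP.

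For the ``only if'' direction, I would induct on $n$. The base case $n=3$ is the triangle. For $n \ge 4$, Lemma~\ref{lem:op_nec_deg23} guarantees a vertex $v$ of degree $2$ in $G$; let $u,w$ be its two neighbors. The key claim is that $u$ and $w$ are adjacent in $G$. To see this, recall that a MOP with $n\ge 3$ is $2$-connected, so its outer face is a Hamiltonian cycle, and $u,v,w$ appear consecutively on it. If $(u,w)\notin E(G)$, one could draw the chord $(u,w)$ inside the outer face, hugging $v$: any existing internal edge of $G$ is a chord of the Hamiltonian cycle with both endpoints different from $v$, and none of these chords can separate $u$ from $w$ ``around $v$'' since $v$ sits strictly between them on the cycle with no other incident edges. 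Hence $(u,w)$ could be added without violating outerplanarity, contradicting maximality. So $(u,w) \in E(G)$. Let $G' = G - v$; it has $n-1$ vertices and $2(n-1)-3$ edges, is outerplanar as a subgraph of $G$, and therefore (again by Lemma~\ref{lem:euler-sequence-outerplanar}) is itself MOP. By induction $G'$ is constructible by the operation; since $u$ and $w$ remain adjacent in $G'$ and lie on its outer face, one final application of the operation re-introduces $v$ and reproduces $G$.

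The main obstacle is the adjacency claim $(u,w)\in E(G)$, which is where outerplanarity (not just planarity) and the restriction $\deg(v)=2$ come together. Two ancillary facts used implicitly --- that $G$ is $2$-connected for $n\ge 3$ (so that the outer face is a Hamiltonian cycle) and that a degree-$2$ vertex actually exists (Lemma~\ref{lem:op_nec_deg23}) --- are the only external ingredients needed, and both are already available in the excerpt.
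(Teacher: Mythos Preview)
The paper does not prove this statement at all; it is quoted verbatim as a known result from \cite{proskurowski1979minimum} and then \emph{used} to prove Lemma~\ref{obs: MOP-Hamiltonian-cycle}. So there is no paper proof to compare against, and your argument is the standard one and is correct.

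One small correction to your closing remark: the fact that a MOP on $n\ge 3$ vertices is $2$-connected is \emph{not} available in the excerpt. The only place the Hamiltonian outer cycle appears is Lemma~\ref{obs: MOP-Hamiltonian-cycle}(2), and that lemma is proved \emph{from} Fact~\ref{lem: MOP construction}, so invoking it here would be circular. This is not a genuine gap, since $2$-connectedness of a MOP is trivial to establish directly (a cut vertex $x$ would leave two components whose outer faces share $x$; one could then add an edge between neighbors of $x$ in distinct components while preserving outerplanarity, contradicting maximality). Just make that one-line argument explicit rather than pointing to the excerpt.
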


We rely on the construction method of Lemma \ref{lem: MOP construction}
to prove the following simple but useful facts.
Parts (1) and (2) are known (cf. \cite{West01,campos2013dominating}.
We could not find a reference for Part (3), so we give a proof for completeness.
%; The proof will be given in the following.

%david{Actually, I could not find a reference for Part (1).
%Part (3) is a similar to exercise (6.1.7) in West's book.}
%yingli{I think the following lemma is not used to prove 2pbe.}

%amotz{It is very hard to read this lemma. I would itemize the facts.}

\begin{lemma}
\label{obs: MOP-Hamiltonian-cycle}
Consider a maximal outerplanar graph $G$ with $n\ge 4$ vertices.
\begin{compactenum}[(1)]
\item Every two vertices of degree 2 in $G$ are nonadjacent.
\item there is a outerplanar embedding of $G$ such that the boundary of the
  outer face is a Hamiltonian cycle $C$ and each inner face is a triangle.
%$G$ contains an Hamiltonian outer-cycle.
\item Removing the edges of $C$ from $G$ and discarding the resulting degree-0
vertices yields a connected outerplanar graph.
\end{compactenum}
\end{lemma}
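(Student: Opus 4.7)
The plan is to dispatch Parts (1) and (2) by citing the provided references, and to prove Part (3) by induction on $n$ using the peeling structure from Fact~\ref{lem: MOP construction}.

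For the base case $n=4$, the unique MOP has exactly one chord, so removing the outer Hamiltonian $4$-cycle leaves a single edge, which is trivially a connected subgraph on two vertices (after discarding the other two, now isolated). For the inductive step with $n \geq 5$, I would apply Fact~\ref{lem: MOP construction} in reverse to write $G = G' + v$, where $G'$ is a MOP on $n-1$ vertices and $v$ is a newly attached degree-$2$ vertex joined to two vertices $u, w$ that are adjacent on the outer face of $G'$. Comparing the canonical outerplanar embeddings of $G$ and $G'$: the Hamiltonian cycle $C$ of $G$ routes through $u, v, w$ consecutively, so the edge $uw$, which lay on the Hamiltonian cycle of $G'$, becomes a chord (i.e., internal) in $G$, while $uv$ and $vw$ become the new boundary edges. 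Consequently the set of internal edges of $G$ equals the set of internal edges of $G'$ together with $\{uw\}$, and the new vertex $v$ is isolated after removing $C$ and is therefore discarded.

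By induction, the internal edges of $G'$ form a connected subgraph $H'$ after discarding isolated vertices, so the remaining task is to show that appending $uw$ to $H'$ preserves connectivity. The only potential obstacle is the case in which neither $u$ nor $w$ belongs to $V(H')$, for then $uw$ would sit as a disjoint component alongside $H'$. But $u \notin V(H')$ and $w \notin V(H')$ would force both $u$ and $w$ to have degree exactly $2$ in $G'$, contradicting Part~(1) applied to $G'$ (note that $n-1 \geq 4$ and that $u, w$ are adjacent in $G'$). Hence at least one endpoint of $uw$ already lies in $H'$, appending $uw$ keeps the graph connected (possibly enlarging $V(H')$ by the other endpoint), and the inductive step closes. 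The main obstacle is precisely this case analysis on the endpoints of the newly internalized edge, which is exactly where Part~(1) is needed.
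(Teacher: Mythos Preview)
Your proposal is correct and follows essentially the same inductive argument as the paper: peel off a degree-$2$ vertex $v$ using Fact~\ref{lem: MOP construction}, observe that the internal edges of $G$ are exactly those of $G'$ together with the newly internalized edge $uw$, and use Part~(1) on $G'$ to guarantee that at least one of $u,w$ already lies in the inductively-connected chord graph. The only difference is organizational: the paper runs a single induction establishing all three parts simultaneously, whereas you cite (1) and (2) and induct only on (3); both are fine since the paper itself notes that (1) and (2) are known.
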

%} %BLOCKC

%david{The current proof of (3) does not use the highlighted part of (2),
%but Yingli thinks maybe this part of (2) can simplify the proof of (3) ?}
%yingli{I failed to show it is connected in a simple way.}

\begin{proof}
The lemma is proved by induction on the number of vertices $n$.
The claim clearly holds for the base case of $n=4$, where the 4-cycle
augmented by a single chord is the only MOP.
Now suppose the claim holds for all MOP graphs with at most $n$ vertices,
and consider a graph $G$ with $n+1$ vertices.

By \Cref{lem: MOP construction} $G$ was obtained by adding a vertex $v$
and edges $(v,u)$ and $(v,w)$ to some MOP graph $G'$ in which
$e=(u,w)$ is an external edge.
By the inductive hypothesis, the claim holds for $G'$, so there exists
an outer-cycle $C'$ of $G'$ which is Hamiltonian, and removing $C'$ yields
a connected graph $c(G')$.
Note that $u$ and $w$ both appear in $G'$, so by \Cref{lem:op_nec_deg23}
their degrees in $G'$ are 2 or higher, and moreover, at least one of
$\deg_{G'}(u)$ and $\deg_{G'}(w)$ is 3 or higher, since $G'$ satisfies
Property (1). This implies that at least one of $u$ and $w$ belongs to $c(G')$.
Adding $v$ to $G$ and connecting it to $u$ and $w$ increases their degrees
%of $u$ and $v$
in $G$ to 3 or higher, so $v$ has no neighbor of degree 2 in $G$.
Hence, property (1) holds in $G$ by the inductive hypothesis.
Also, inserting the vertex $v$ between $u$ and $w$ in $C'$
yields a Hamiltonian cycle $C$ for $G$, implying Property (2).
Finally, removing $C$ from $G$ yields the graph $c(G)=c(G')\cup\{e\}$,
which is connected since $c(G')$ is connected and it contains $u$ or $w$.
%dror{It should be mentioned that either $\deg(u) \geq 3$ or $\deg(v) \geq 3$,
%namely that it cannot be that both are degree-0 vertices in $G'$ after removing the cycle $C'$}
%$\hfill\Box$
\end{proof}

%\def\BLOCKD{
%%%%%%%%%%%%%%%%%%%%%%%%%%%%%%%%%%%%%%%%%
%In the following, we present bipartite bracket balancing algorithm,
%which is used in Lemma \ref{lem:BE}.
%david{The following paragraph describes algorithm BBB, which is used in proof of Lemma 10. I moved that proof to the appendix, so we can move this too. We should discuss this.}
%
Jao and West~\cite{Jao2010West} show that the number of internal triangles
of a MOP is related to $\multipl_2$.

\begin{lemma}{\bf \cite{Jao2010West}}
\label{lem:MOP-deg2-triangles}
Let $G$ be a maximal outerplanar graph on $n$ vertices, let $d = \deg(G)$,
and let $t$ be the number of internal triangles.
If $n \geq 4$, then $t = \multipl_{2} - 2$.	
\end{lemma}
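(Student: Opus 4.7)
The plan is to compute $t$ by a double-counting argument applied to a canonical MOP embedding. First I would invoke \Cref{obs: MOP-Hamiltonian-cycle}(2) to fix an outerplanar embedding of $G$ whose outer face is bounded by a Hamiltonian cycle $C$ and in which every inner face is a triangle. Since $G$ has $\sum d / 2 = 2n-3$ edges, and $C$ contributes $n$ of them, there are exactly $n-3$ chord (internal) edges; by the standard triangulated-polygon count, the number of triangular inner faces is $n-2$.

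Next I would classify each inner face $F$ by the number $x(F) \in \set{0,1,2}$ of its edges lying on $C$ (the value $3$ is impossible since $n \geq 4$), and let $t_i$ count the inner faces with $x(F) = i$. The internal triangles in the sense of the lemma are exactly those with $x(F)=0$, so $t = t_0$. Two elementary counts yield
\[
t_0 + t_1 + t_2 \;=\; n-2 \qquad \text{and} \qquad t_1 + 2 t_2 \;=\; n,
\]
where the first equation counts inner faces, and the second counts the incidences between external edges and inner faces (each external edge, being on $C$, lies in exactly one inner face). Subtracting gives $t_0 = t_2 - 2$.

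It remains to identify $t_2$ with $\multipl_2$. If $\deg(v) = 2$, then both edges at $v$ are consecutive edges $(u,v),(v,w)$ of $C$, and the unique inner face incident to $v$ must be a triangle with third edge $(u,w)$, so $v$ is the apex of an ear and contributes to $t_2$. Conversely, if an inner face $(u,v,w)$ has $(u,v),(v,w)$ on $C$, then in the planar embedding $v$ cannot host any further incident edge (any such edge would have to leave the region bounded by $(u,v)\cup(v,w)$ and $C$), so $\deg(v)=2$. Combining, $t = t_0 = t_2 - 2 = \multipl_2 - 2$.

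The main subtlety, though not difficult, is verifying the bijection between degree-$2$ vertices and ear triangles in the third step — specifically that the triangulated-inner-face property forces the two $C$-neighbors of any degree-$2$ vertex to be themselves adjacent, so that an ear is genuinely present. Everything else is pure double-counting applied to the triangulated polygon, together with the edge count $|E| = 2n-3$ coming from $\sum d = 4n-6$.
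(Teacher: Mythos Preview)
Your argument is correct. However, the paper does not actually supply its own proof of this lemma: it is quoted as a known result with the citation \cite{Jao2010West} and used without further justification. So there is nothing to compare against in the paper itself.

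That said, your double-counting proof is clean and self-contained, and it only relies on tools already available in the paper (specifically \Cref{obs: MOP-Hamiltonian-cycle}(2) for the triangulated Hamiltonian embedding). The face/edge counts $t_0+t_1+t_2=n-2$ and $t_1+2t_2=n$ are correct, and the bijection between degree-$2$ vertices and ears is sound: the ``every inner face is a triangle'' property guarantees that the two $C$-edges at a degree-$2$ vertex bound a common triangular face, and conversely any additional edge at the apex of an ear would have to cross the ear's chord. One could also cite \Cref{obs: MOP-Hamiltonian-cycle}(1) to confirm that distinct degree-$2$ vertices yield distinct ears, though this already follows from your observation that the apex of an ear is uniquely determined.
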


\paragraph*{Bipartite bracket balancing (BBB) algorithm.}
%\subsection{Bipartite Bracket Balancing (BBB) Algorithm}
%\label{sec2.3}
%Later on (in Lemma \ref{lem:BE}),
we make use of a simple variant of the \textsc{Bracket Balancing} problem,
in which the matched brackets must fit in given positions.
Formally, the problem, referred to as
\textsc{Bipartite Bracket Balancing (BBB)}, is defined as follows.
We are given two ordered sets $P= \ang{p_1,\ldots,p_k}$ and $Q = \ang{ q_1,\ldots,q_k }$, merged into a $2k$-element sequence $R = \ang{ r_1,\ldots,r_{2k} }$ that contains all the elements from $P$ and $Q$ while maintaining their original order.
The goal is to associate with every $r_i$, $1\le i\le 2k$, a bracket $f(r_i) \in \set{[,]}$ with the following two properties:
\begin{compactitem}
\item The sequence $f(R) = \langle f(r_1),\ldots,f(r_{2k})\rangle$ is balanced
  such that the number of opening brackets ($``[''$) in every prefix of $R$ is no less than the number of closing brackets ($``]''$) in this prefix. Observe that in balanced sequences it is possible to match the opening brackets with the closing brackets
%in a way
so that the intervals associated with the matched brackets are either nested or disjoint.
\item if $f(r_i)=[$ and $f(r_j)=]$ were matched, for $i<j$, then $r_i\in P$ and
$r_j \in Q$ or vice versa.
\end{compactitem}

%We are given two ordered sets $P= \ang{p_1,\ldots,p_k}$ and $Q = \ang{ q_1,\ldots,q_k }$,
%merged into a $2k$-element sequence $R = \ang{ r_1,\ldots,r_{2k} }$.
%The goal is to associate with every $r_i$, $1\le i\le 2k$, a bracket
%$f(r_i) \in \set{[,]}$ with the following properties:
%\begin{compactitem}
%\item The sequence $f(R) = \langle f(r_1),\ldots,f(r_{2k})\rangle$ is balanced, and
%\item if $f(r_i)=[$ and $f(r_j)=]$ were matched, for $i<j$, then $r_i\in P$ and
%        $r_j \in Q$ or vice versa.
%\end{compactitem}

This problem can be solved by a straightforward variant of the standard
stack-based linear-time bracket balancing algorithm, named Algorithm \BBB,
described here for completeness.
The algorithm uses a stack $D$ and scans the sequence $R$ from start to end.
At any given time, the stack contains either only elements from $P$
or only elements from $Q$ (i.e., $D\subseteq P$ or $D\subseteq Q$).
Upon reaching the element $r_i$, the algorithm acts based on the stack
contents, as follows:
\begin{itemize}
\item $D=\emptyset$, or $D\subseteq P$ and $r_i\in P$,
or $D\subseteq Q$ and $r_i\in Q$~: ~~~
\\
\hbox{~}\hspace{30pt}
$PUSH(r_i,D)$
\item $D\subseteq P$ and $r_i\in Q$,
or $D\subseteq Q$ and $r_i\in P$~: ~
\\
\hbox{~}\hspace{30pt}
$x \gets POP(D)~~$;~~~~~ $f(x)\gets ~[$~~;~~~~~ $f(r_i)\gets ~]$
\end{itemize}

\noindent
It is easy to verify that the algorithm solves
the \textsc{Bipartite Bracket Balancing} problem.
%
%%%%%%%%%%%%%%%%%%%%%%%
\begin{figure}[h]
\begin{center}
\includegraphics[width=4in]{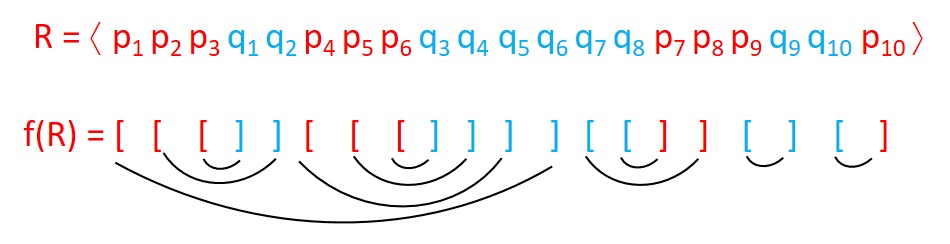}
%\vspace{-10pt}
\end{center}
\caption{\small\sf A merged sequence $R$ and its corresponding bracket assignment.}
\label{fig:bipartite-balanced-brackets}
\end{figure}
%%%%%%%%%%%%%%%%%%%%%%%

Figure \ref{fig:bipartite-balanced-brackets} illustrates an example sequence
$R$ and the corresponding balanced bracket assignment generated by the
algorithm. Notice that the stack switches its contents from time to time.
In particular,
it initially gets only elements of $P$,
then becomes empty after inspecting $r_{12}=q_6$, then contains only elements
of $Q$ for a while, then becomes empty after inspecting $r_{16}=q_8$, and so on.

%AmotzApr27: We don't use the next lemma so I commented it

%Jao and West~\cite{Jao2010West} show that the number of internal triangles
%of a MOP is related to $\multipl_2$.

%\begin{lemma}[\bf \cite{Jao2010West}]
%\label{lem:MOP-deg2-triangles}
%Let $G$ be a maximal outerplanar graph on $n \geq 4$ vertices, let $d = \deg(G)$,
%and let $t$ be the number of internal triangles.
%Then, $t = \multipl_{2} - 2$.	
%\end{lemma}

%%%%%%%%%%%%%%%%%%%%%%%%%%%%%%%%%%%
\section{Necessary and Sufficient Conditions for Outerplanarity}
%Outerplanaric Sequences}

This section presents some necessary and sufficient conditions for outerplanarity, which are
used to prove the results of the main section concerning 2-page book embedding.
%AmotzApr27: merged a lonely sentence to the previous paragraph
The next lemma provides a necessary condition for maximal outerplanarity.
%sequences.

\begin{lemma}
\label{lem:maxop-2omega2-omega3-general}
If $d = (d_1, \ldots, d_n)$
is a maximal outerplanaric sequence where $d_1 > 3$
%
% We do not need to say the following since it always holds in MOP
%
%and $\multipl_2 \geq 2$,
%dror{If $D$ is MOP, then $w_2 \geq 2$ by Lemma 3.}
%dror{should it be $\multipl_2 \geq 2$?}
then
if $d_2>3$ or $\multipl_2>2$ then $2 \multipl_2 + \multipl_3 \leq n$
otherwise $2 \multipl_2 + \multipl_3 \leq n+1$.
%AmotzApr27: eliminated the displaymath
%\[
%2 \multipl_2 + \multipl_3 \leq
%\begin{cases}
%	n, & \text{ if } d_2 > 3 \text{ or } \multipl_2>2,  \\
%	n+1, & \text{ otherwise. }
%\end{cases}
%\]
\end{lemma}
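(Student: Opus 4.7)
The plan is to analyze the \emph{chord subgraph} $c(G)$ of a MOP realization $G$ of $d$, obtained by deleting the edges of the Hamiltonian cycle $C$ guaranteed by \Cref{obs: MOP-Hamiltonian-cycle} and discarding the isolated vertices that result. Since every degree-$2$ vertex of $G$ has both its edges on $C$, $c(G)$ has $n-\multipl_2$ vertices and $(2n-3)-n=n-3$ edges, is connected by \Cref{obs: MOP-Hamiltonian-cycle}(3), and is outerplanar as a subgraph of $G$. Each vertex $v$ surviving in $c(G)$ has degree $d_v-2$ there, so the degree-$3$ vertices of $G$ are exactly the leaves of $c(G)$, while the degree-$\ge 4$ vertices are its non-leaves. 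Let $h=n-\multipl_2-\multipl_3$ count the non-leaves; note $h\ge 1$ because $d_1>3$.

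I would then introduce $c^*(G)$, the subgraph of $c(G)$ induced by its non-leaves. Removing a leaf from a connected graph preserves connectivity, so $c^*(G)$ is connected; it is outerplanar as a subgraph of $c(G)$. It has $h$ vertices and $(n-3)-\multipl_3$ edges, since each of the $\multipl_3$ leaves of $c(G)$ contributes exactly one edge. The core of the argument is to apply to $c^*(G)$ the standard edge bound $|E|\le 2|V|-3$ for a connected outerplanar graph on at least two vertices. Provided $h\ge 2$, this yields
\[
n-3-\multipl_3 \;\le\; 2h-3 \;=\; 2(n-\multipl_2-\multipl_3)-3,
\]
which simplifies to $2\multipl_2+\multipl_3\le n$.

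It then remains to check that $h\ge 2$ under either disjunct of the hypothesis. If $d_2>3$ then $d_1,d_2\ge 4$ already supply two non-leaves, so $h\ge 2$ directly. If $\multipl_2>2$, then by \Cref{lem:MOP-deg2-triangles} the number of internal triangles of $G$ is $\multipl_2-2\ge 1$, and every internal triangle consists of three chord edges joining three vertices of chord-degree $\ge 2$, hence of degree $\ge 4$ in $G$, so $h\ge 3$. In either subcase the calculation above gives the desired bound $2\multipl_2+\multipl_3\le n$.

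Finally, for the residual case $d_2\le 3$ and $\multipl_2\le 2$, I would argue that $\multipl_2=2$ (since \Cref{lem:op_nec_deg23} forces $\multipl_2\ge 2$ in any MOP on $n\ge 4$ vertices), and that $d_1\ge 4>3\ge d_2$ allows only one vertex of degree $\ge 4$, so $h=1$. Then $c^*(G)$ has exactly one vertex and $0$ edges, forcing $(n-3)-\multipl_3=0$, so $\multipl_3=n-3$ and $2\multipl_2+\multipl_3=n+1$ (structurally, $G$ is a fan). The main subtlety of the plan is isolating $c^*(G)$ as the right object and handling the boundary case $h=1$; once its connectedness, outerplanarity, and edge count are settled, the outerplanar edge bound immediately delivers the inequality.
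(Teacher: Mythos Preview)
Your proof is correct and follows essentially the same approach as the paper: both remove the Hamiltonian outer-cycle to obtain the chord subgraph and then apply the outerplanar edge bound after peeling off its leaves (the paper does this peeling implicitly via \Cref{lem:outerplanar-sum-bound-omega1}, while you make it explicit by passing to $c^*(G)$). Your version is slightly more unified in that it treats both disjuncts $d_2>3$ and $\multipl_2>2$ through the single condition $h\ge 2$, whereas the paper dispatches the $\multipl_2=2$ subcases by elementary counting; the only small point you leave implicit is that no two leaves of $c(G)$ are adjacent (which follows from connectedness of $c(G)$ together with $h\ge 1$), needed for the edge count $(n-3)-\multipl_3$ of $c^*(G)$.
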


\begin{proof}
Let $d$ be as in the lemma and let $G$ be a maximal outerplanar realization of $d$.
%Also, let $d' = \langle d_i - 2 \rangle$.
%dror{This notation should be defined}
Observe that by \Cref{obs: MOP-Hamiltonian-cycle}(2),
removing the outer-cycle from $G$ and discarding the resulting
0-degree vertices results in an outerplanar graph $G'$ with degree sequence
$d'=\pos(d\ominus (2^n))$ such that $n'=n-\multipl_2$.
%david{See definitions in prelim.}
%dror{This is implicitly based on the fact that a maximal outerplanar graph
%has a Hamiltonian cycle. Maybe add an observation staying that.}
It follows that $d'$ is outerplanaric.
We consider the following three cases.
\begin{description}
\item[Case 1:] $\multipl_2 > 2$. $G$ has an internal triangle due to
\Cref{lem:MOP-deg2-triangles}.
This triangle still exists in $G'$. Moreover, $G'$ is connected
by \Cref{obs: MOP-Hamiltonian-cycle}(3).
It follows that $\sum d' > 2n'-2$.
%dror{This is implicitly based on the fact that $G'$ is connected.}
By \Cref{lem:outerplanar-sum-bound-omega1},
\[
2n - 6
~=~    \sum d - 2n
~=~    \sum d'
~\leq~ 4n' - 6 - 2 \multipl_{1}^{'}
~=~    4(n-\multipl_{2}) - 6 - 2 \multipl_{3}.
\]
Simplifying yields that $2\multipl_2 + \multipl_3 \leq n$.

\item[Case 2:] $\multipl_2=2$ and $d_2 > 3$. We have that $\multipl_3 \leq n - 4$
and $2 \multipl_2 + \multipl_3 = 4 + \multipl_3 \leq n$.

\item[Case 3:] $\multipl_2 = 2$ and $d_2\leq 3$.	
Since $d_1 > 3$, we have that $\multipl_3 \leq n - 3$.
It follows that
$2 \multipl_2 + \multipl_3 = 4 + \multipl_3 \leq n+1$.
\end{description}
The lemma follows.
\end{proof}

\noindent
Note that this lemma does not hold for non-maximal outerplanaric
sequences. For example, the outerplanaric sequence $(4,2^4)$
is a counterexample for which $2\multipl_2+\multipl_3=8=n+3$.
%$2\multipl_2+\multipl_3=8>5=n$.
%AmotzApr27: Meged 2 paragraphs and shortened the second paragraph

%We next define for a sequence $d$ the functions \emph{edge deficit}
%and \emph{low degree surplus} that play a key role in our analysis.
We next define two functions that play a key role in our analysis.
For a sequence $d$, define the \emph{edge deficit}
and \emph{low degree surplus} of $d$ to be, respectively,
\begin{align*}
\edgeD(d)   & = (4n - 6 - \sum d) / 2,
&
\omegaS(d) &=  3\multipl_1+2 \multipl_2 + \multipl_3 - n.
\end{align*}

Note that $\edgeD(d) \geq 0$ for any outerplanaric sequence.
If $\edgeD(d) > 0$ and $G$ is a nonmaximal outerplanar realization of $d$, then
$\edgeD(d)$ is the number of edges that need to be added to $G$ to generate
a maximal outerplanar graph.
If $d$ is clear from the context we write simply $\edgeD,\omegaS$ instead of $\edgeD(d),\omegaS(d)$.
The next lemma shows a connection between $\edgeD$ and $\omegaS$.

%dror{It should be stated clearly that $\sum d = 4n - 6$ iff $G$ is MOP}
%david{This is only a necessary condition.}
%dror{If $d$ is OP, then it is also sufficient.}
%david{OK, I added that in Lemma 2.}

\begin{lemma}
\label{lem: between edge and multiplicity}
Let $d = (d_1, \ldots, d_n)$
be an outerplanaric sequence with $d_2 > 3$.
%Moreover, $X = 2 \multipl_2 + \multipl_3 - n$, and
%$2Y = 4n - 6 - \sum d$ hold, for positive integers $X,Y$.
Then $2\edgeD \geq \omegaS$.
\end{lemma}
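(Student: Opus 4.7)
The plan is to reduce the statement to the maximal case by extending a given outerplanar realization to a maximal one and tracking how the two quantities $\edgeD$ and $\omegaS$ evolve along the extension.

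First, I would fix an outerplanar realization $G$ of $d$ and greedily add edges while preserving outerplanarity until I obtain a maximal outerplanar graph $G^*$; call its degree sequence $d^*$. Since $\sum d^* = 4n-6$, exactly $\edgeD(d)$ edges are added in this process. The key observation is how $3\multipl_1 + 2\multipl_2 + \multipl_3$ changes when we increase a single vertex's degree by $1$: a vertex of degree $1$, $2$, or $3$ contributes $3$, $2$, or $1$ respectively, while vertices of degree $\geq 4$ contribute $0$. So a unit increase in any single degree decreases $3\multipl_1 + 2\multipl_2 + \multipl_3$ by either $0$ or $1$. Since each added edge increases exactly two degrees by $1$, each edge decreases this sum by at most $2$. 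Hence
\[
3\multipl_1(d) + 2\multipl_2(d) + \multipl_3(d) \;\leq\; 3\multipl_1(d^*) + 2\multipl_2(d^*) + \multipl_3(d^*) + 2\edgeD(d).
\]

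Next, I would analyze $d^*$. Since $d^*$ is maximal outerplanaric, \Cref{lem:op_nec_deg23} gives $d^*_{n-1} = d^*_n = 2$, so $d^*$ has no degree-$1$ vertex, i.e., $\multipl_1(d^*) = 0$. Moreover, since degrees can only grow in the extension, $d^*_2 \geq d_2 > 3$, which in particular ensures $d^*_1 > 3$ and places us in the first case of \Cref{lem:maxop-2omega2-omega3-general}, yielding $2\multipl_2(d^*) + \multipl_3(d^*) \leq n$. Substituting these two facts into the previous inequality gives
\[
3\multipl_1(d) + 2\multipl_2(d) + \multipl_3(d) \;\leq\; n + 2\edgeD(d),
\]
which is exactly $\omegaS(d) \leq 2\edgeD(d)$.

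The main thing to be careful about is justifying the hypotheses of \Cref{lem:maxop-2omega2-omega3-general} for $d^*$ (that $d_1^* > 3$ and that we land in the ``$d_2 > 3$ or $\multipl_2 > 2$'' branch), which follows cleanly from $d_2 > 3$ via monotonicity of degrees under edge addition. The rest is bookkeeping on the per-edge decrement of $3\multipl_1 + 2\multipl_2 + \multipl_3$, which is a clean case analysis on the pre-addition degree of each endpoint.
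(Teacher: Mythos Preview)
Your proof is correct and follows essentially the same approach as the paper: extend an outerplanar realization to a maximal one by adding $\edgeD$ edges, bound the per-edge drop of $3\multipl_1+2\multipl_2+\multipl_3$ by $2$ via a case analysis on the endpoints' degrees, and invoke \Cref{lem:maxop-2omega2-omega3-general} at the maximal endpoint. If anything, you are slightly more careful than the paper in explicitly verifying the hypotheses of \Cref{lem:maxop-2omega2-omega3-general} for $d^*$ (namely $d_1^*>3$ and $d_2^*>3$) via monotonicity of the sorted degrees under edge addition.
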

\begin{proof}
Let $d$ be as in the lemma, and let $G$ be an outerplanar realization of $d$.
We can add $\edgeD$ edges to $G$ so as to transform
it into a maximal outerplanar graph $G'$ (without adding vertices).
For $0\leq i\leq \edgeD$,
let $G_i$ be the graph $G$ after adding $i$ edges,
and let $d^i$ be the degree sequence of $G_i$.
By definition, $G_{0}=G$ and $G_{\edgeD}=G'$.
Define $f(G_i) = 2\edgeD(d^i) - \omegaS(d^i)$ for $0\leq i\leq \edgeD$.
We show:
\begin{align}
\label{goal}
f(G_i)\geq f(G_{i+1}) \textrm{~for ~any} ~0\leq i\leq \edgeD - 1.
\end{align}
Let $d' = \deg(G')$.
%It follows that
Then $\sum d' = 4n-6$ and $\multipl'_1=0$.
\Cref{lem:maxop-2omega2-omega3-general} implies that
$2\multipl'_2 + \multipl'_3 \leq n$. Thus,
$\edgeD(d')=0$ and $\omegaS(d')=3\multipl'_1+2 \multipl'_2 + \multipl'_3 - n\leq 0$. Hence,
\begin{align}
\label{base ground}
f(G_{\edgeD})=2\edgeD(d') - \omegaS(d')\geq 0.
\end{align}
%dror{The last equation should be explained}
%
%With the definition of $\omegaS$,
%%$X = 2 \multipl_2 + \multipl_3 - n$,
%it follows that $3\multipl_1+2 \multipl_2 + \multipl_3 - (2\multipl'_2 + \multipl'_3) = \omegaS$.

%%%%%%%%%%%%%%%%%%%%
\begin{table}%[t]
%\vspace{-10pt}
\centering
\begin{tabular}{|l|r|}
\hline
$\deg_{G_i}(u),\deg_{G_i}(v) $	&  $\delta$ \\
\hline
$(1,1) \to (2,2)$  & $-2$\\
\hline
$(1,2) \to (2,3)$ & $-2$\\
\hline
$(1,3) \to (2,4+)$ & $-2$\\
\hline
$(1,4+) \to (2,5+)$ & $-1$\\
\hline
$(2,2) \to (3,3)$	&  $-2$\\
\hline
\end{tabular}
~
\hspace{30pt}
~
\begin{tabular}{|l|r|}
\hline
$\deg_{G_i}(u),\deg_{G_i}(v) $	&  $\delta$ \\
\hline
$(2,3)  \to (3,4+)$	&  $-2$\\
\hline
$(2,4+)  \to (3,4+)$	&  $-1$\\
\hline
$(3,3)  \to (4+,4+)$	&  $-2$\\
\hline
$(3,4+)  \to (4+,4+)$	&  $-1$\\
\hline
$(4+,4+)  \to (4+,4+)$	&  $0$\\
\hline
\end{tabular}
%\vspace{-5pt}
\end{table}
%%%%%%%%%%%%%%%%%%%%%%%%%%%%%%%

%dror{Table: I added a $+$ in line 3, and I switched the order between lines 7 and 8}

We now compute how adding a single edge $(u,v)$ to $G_i$ decreases the value of
$3\multipl_1(G_i)+2 \multipl_2(G_i) + \multipl_3(G_i)$.
There are ten possibilities for the changes in the degrees of $u$ and $v$ ($\deg_{G_i}(u)$ and $\deg_{G_i}(v)$) that cause $3\multipl_1(G_i)+2 \multipl_2(G_i) + \multipl_3(G_i)$ to change by $\delta$.
We list
%them
the ten possibilities
in the above table where we denote degrees of $4$ or more by $4+$.
%AmotzApr27:Combined  paragraphs
It follows that adding a single edge decreases
$3\multipl_1(G_i)+2 \multipl_2(G_i) + \multipl_3(G_i)$ by at most $2$
which implies that $\omegaS(d^i)$ is also decreased by at most $2$.
Also, $2\edgeD(d^i)$ is decreased by exactly $2$ when adding a single edge
to $G_{i-1}$.
Therefore, Inequality~\eqref{goal} holds.
Combining this with inequality~\eqref{base ground} implies that $f(G_0)\geq 0$.
The lemma follows because $f(G_0)=2\edgeD -\omegaS$.
%To transform $G$ into $G'$, we added $\edgeD$ edges to $G$. Consequently,
%$2\edgeD \geq 3\multipl_1+2 \multipl_2 + \multipl_3 - (2\multipl'_2
%+ \multipl'_3) \geq \omegaS$,
%and the lemma follows.
\end{proof}

%%%%%%%%%%%%%%%%%%%%%%%%%%%%%%%%%%%%%%%%%%%%%%%%%%%%%%%%%%%%%%%%%%%%
%\subsection{Some Sufficient Conditions for Outerplanaric Sequences}
%
We start with a sequence $d$ satisfying $d_n=2$ and $\multipl_2 = 2$.
To prove our next result, we generate a reduced sequence $d'$ of $d$
where each degree is decreased by $2$.
We verify that $d'$ is forestic and realize it by a graph $G'$ constructed
as the union of a caterpillar graph and a matching,
following \Cref{obs:caterpillar}.
%dror{What about the matching??}
%yingli{Yes. It has a matching.}
Then, a realization $G$ of $d$ is found by adding a Hamiltonian cycle to $G'$.
The Hamiltonian cycle is the outer-cycle of $G$.
It follows that in this case, there is a realization of $d$ where each vertex
appears exactly once on the outer-cycle.

\begin{lemma}
\label{lem:OP-4n-1}
Let $d\in\cD$.
%$d = (d_1, \ldots, d_n)$ be a sequence of positive integers such that $\sum d$ is even.
If
\begin{enumerate}[(i)]
%\item $\sum d \leq 4n - 6$,
\item
$d_n = 2$ and
\item
$\multipl_2 = 2$,
\end{enumerate}
then $d$ is outerplanaric, and has a realization with an outerplanar
embedding that has a Hamiltonian outer-cycle.
\end{lemma}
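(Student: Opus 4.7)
The plan is to follow the strategy sketched immediately above the lemma: reduce $d$ to a forestic sequence, realize it via \Cref{obs:caterpillar} as a caterpillar-plus-matching, and then close the construction into an outerplanar graph by adding a Hamiltonian outer cycle.

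Concretely, let $d' = d \ominus (2^n)$. Since $d_n = 2$ we have $\multipl_1 = 0$, so $d'$ is nonnegative; since $\multipl_2 = 2$, exactly two entries of $d'$ equal $0$, corresponding to two distinguished vertices $u_1, u_2$. Writing $\pos(d')$ for the positive part (of length $n' = n - 2$), the sum $\sum d' = \sum d - 2n$ is even, and by $d \in \cD$ together with $\multipl_1 = 0$,
\[
\sum \pos(d') ~=~ \sum d - 2n ~\leq~ (4n - 6) - 2n ~=~ 2n - 6 ~=~ 2n' - 2,
\]
so $\pos(d')$ is forestic. \Cref{obs:caterpillar} then produces a realization $F$ of $\pos(d')$ that is the vertex-disjoint union of a caterpillar $T$ with spine $x_1, \ldots, x_s$ and leaf-sets $L_1, \ldots, L_s$, and a matching $M$. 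Extend $F$ to a graph $G'$ on $n$ vertices by adding $u_1, u_2$ as isolated vertices.

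The technical heart is to choose a cyclic order $\sigma$ of $V(G')$ such that (i) the edges of $G'$ form a non-crossing chord diagram with respect to $\sigma$, and (ii) no edge of $G'$ joins two vertices that are cyclically adjacent in $\sigma$. My plan is to place the spine $x_1,\ldots,x_s$ in order along one portion of the cycle with each leaf block $L_i$ nested inside an arc adjacent to $x_i$, so that every chord $(x_i,l)$ stays nested under the spine chords $(x_{i-1},x_i)$ and $(x_i,x_{i+1})$; to lay the matching $M$ down as a nested block in the remaining arc; and, crucially, to shift the position of one or two leaf blocks and insert the two buffer vertices $u_1,u_2$ so as to separate every spine vertex from its own leaves and from its spine neighbors in the cyclic order. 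Once $\sigma$ is fixed, let $C_\sigma$ be the Hamiltonian cycle visiting vertices in cyclic order $\sigma$, and set $G = G' \cup C_\sigma$. By (i) the chords are non-crossing inside $C_\sigma$, yielding an outerplanar embedding with $C_\sigma$ as the outer face; by (ii), $E(G') \cap E(C_\sigma) = \emptyset$, so $\deg_G(v) = \deg_{G'}(v) + 2 = d_v$ for every $v$, hence $G$ realizes $d$.

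The main obstacle is constructing $\sigma$: one must simultaneously avoid chord crossings \emph{and} parallel edges between chords and cycle edges. A naive DFS-style listing of the caterpillar exhibits up to $s$ cyclic-adjacency conflicts---one at each spine vertex that meets one of its own leaves on the boundary---whereas only two buffers $u_1,u_2$ are available. The hypothesis $\multipl_2 = 2$ is used precisely to furnish those two buffers, and the argument must verify that by judiciously selecting which leaf block sits between each pair of consecutive spine vertices (for instance, shifting the first leaf of $L_{i+1}$ into the slot next to $x_i$ rather than next to $x_{i+1}$), all but at most two of the conflicts can be removed without introducing new crossings; the remaining two are then broken by $u_1$ and $u_2$. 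A short case analysis handles the degenerate situations $s \in \{0,1\}$ (pure matching or a single star) and instances where some internal $L_i$ is empty.
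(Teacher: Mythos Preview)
Your overall strategy---subtract $2$ from each degree, realize the residual forestic sequence as a caterpillar-plus-matching via \Cref{obs:caterpillar}, then close up with a Hamiltonian outer cycle---matches the paper's exactly. The difficulty you correctly identify is building the cyclic order $\sigma$, and this is where your proposal has a genuine gap.

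With the spine laid out linearly as $x_1,\ldots,x_s$ and leaf blocks inserted between consecutive spine vertices, the non-crossing constraint forces the contents of the gap $B_i$ between $x_i$ and $x_{i+1}$ to consist only of leaves from $L_i\cup L_{i+1}$ (plus possibly matching vertices or buffers), with all $L_i$-leaves preceding all $L_{i+1}$-leaves inside $B_i$. Consequently, whenever $L_i$ contributes to $B_i$ its first element is a leaf of $x_i$ and sits cyclically adjacent to $x_i$; and whenever $L_i$ contributes to $B_{i-1}$ its last element is again a leaf of $x_i$, adjacent to $x_i$. So every nonempty $L_i$ forces a parallel-edge conflict at $x_i$, and if $L_i=\emptyset$ then $x_{i-1}$ and $x_i$ become cyclically adjacent, conflicting with the spine edge. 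Your proposed ``shift the first leaf of $L_{i+1}$ next to $x_i$'' does not help: putting an $L_{i+1}$-leaf before an $L_i$-leaf inside $B_i$ creates a crossing between $(x_i,\ell)$ and $(x_{i+1},\ell')$. In short, the linear-spine layout produces $\Theta(s)$ unavoidable conflicts, and two buffers cannot absorb them.

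The paper sidesteps this by \emph{not} placing the spine linearly. It splits the spine by parity into two paths $P_1=(x_1,x_3,x_5,\ldots)$ and $P_2=(x_2,x_4,\ldots)$, runs $P_1$ along one arc of the cycle and $P_2$ along the other, and places the leaves of $x_i$ in the gap between $x_{i-1}$ and $x_{i+1}$ on the \emph{opposite} arc from $x_i$ itself. Now $x_i$ is never cyclically adjacent to any of its own leaves or to its spine neighbours $x_{i\pm1}$; the two buffer vertices $u_1=x_0$ and $u_2=x_{s+1}$ are used only to close the two arcs at the ends. This odd/even interleaving is the missing idea.
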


\begin{proof}
Let $d$ be as in the lemma. First, we construct the sequence
$d' = (d'_1, \ldots, d'_{n'})$ by subtracting $2$ from each degree of $d$,
i.e., let $d'_i = d_i - 2$ for $i \in \set{1, \ldots, n-2}$.
Note that $n' = n - 2$ and that $\sum d'$ is even.
Moreover, note that $d'$ is forestic, since $\sum d \leq 4n - 6$.
Hence,
\[
\sum d' = \sum d - 2n \leq 2n - 6 = 2(n'+2) - 6 = 2n' - 2.
\]
By \Cref{obs:caterpillar}, $d'$ can be realized by a graph
$G' = (V',E')$ composed of the union of a caterpillar graph $T'$ and
a matching $M'$.
Let $S = (x_1, \ldots, x_s)$ be the vertices on the spine of $T'$, and let
$X_i =  \set{ \ell_{i,1}, \ldots,\ell_{i,k_i} } \subseteq V'$ be the leaves
adjacent to the spine vertex $x_i$, for $i \in \set{1, \ldots, s}$.
Note that vertex $x_i$ has degree $k_i + 1$ if $i \in \set{1,s}$, otherwise
it has degree $k_i + 2$. Assume
$M'=\set{(x'_1,x''_1),\ldots,(x'_{t},x''_{t})}$.
%We note that $G'$ is outerplanar.
To find an outerplanar realization of $d$, we add a set of edges to $G'$
that form a Hamiltonian cycle (including two additional vertices of degree $2$).
We explain our construction in four steps for which we assume that $s$ is odd.
If $s$ is even, an analogogous construction applies.
\begin{enumerate}[(1)]
\item
Add the edges along the following two paths $P_1 = (x_1,x_3,\ldots,x_{s-2},x_s)$
and $P_2 = (x_{s-1},x_{s-3}, \ldots, x_4,x_2)$ connecting spine vertices in odd and even positions, respectively.
\item
Add two additional vertices $x_0, x_{s+1}$ and use them to form a cycle $C$ together with $P_1$ and $P_2$.
Let $C = (x_0, x_1,x_3, \ldots, x_s, x_{s+1}, x_{s-1}, x_{s-3}, \ldots, x_4,x_2, x_{0})$.
\item
%The leaves are inserted in the cycle as follows:
The leaves of a vertex in $P_1$ are inserted into $C$ between two vertices of $P_2$.
Specifically, the leaves of $x_i$, for $i = 1, \ldots, s$,
are inserted between $x_{i-1}$ and $x_{i+1}$.
\item
Insert vertices $x'_1,\ldots,x'_t$ on the segment between $\ell_{s,k_s}$
and $x_{s+1}$ in clockwise order.
Similarly, insert vertices $x''_1,\ldots,x''_t$ on the segment between $x_{s+1}$ and $x_{s}$
in a counter-clockwise order. Connect $x'_i$ and $x''_i$ for $1 \leq i \leq t$.
\end{enumerate}

%%%%%%%%%%%%%%%%%%%%%%%%%%%%%%%%%%%%%
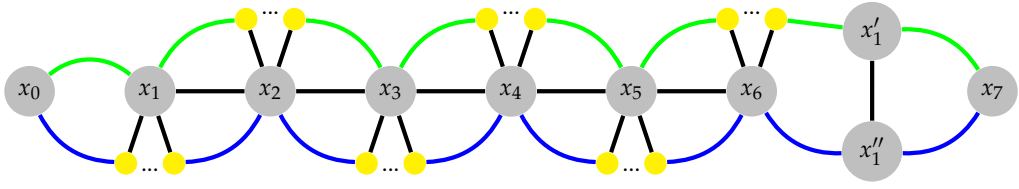
\begin{figure}[t]
\centering
\begin{footnotesize}
	\begin{tikzpicture}[scale=0.8]
		\def\pathlen{6}
		\def\pathspacing{2}
		\def\halfspacing{1}
		\pgfmathtruncatemacro{\plenminusone}{\pathlen - 1}
		\pgfmathtruncatemacro{\plenplusone}{\pathlen + 1}
		\node [minimum size= 3mm,fill=lightgray,circle] (v0) at (0,0)  {$x_{0}$};
		\node [minimum size= 3mm,fill=lightgray,circle] (v\plenplusone) at (\plenplusone * \pathspacing + 2 * \halfspacing,0)  {$x_{\plenplusone}$};
		\foreach \x in {1,...,\pathlen} {
			\pgfmathtruncatemacro{\xcoord}{\x * \pathspacing}
			\node [minimum size= 3mm,fill=lightgray,circle] (v\x) at (\xcoord,0)  {$x_{\x}$};
		}
		\path [draw=green,ultra thick] (v0) [bend left=40] edge (v1);
%		\path [draw=blue,ultra thick] (v\pathlen) [bend right=40] edge (v\plenplusone);
		\foreach \x in {1,...,\pathlen} {
		        \pgfmathtruncatemacro{\parity}{(-1)^\x}
			\pgfmathtruncatemacro{\xcoord}{\x * \pathspacing}	
			\pgfmathtruncatemacro{\xmm}{\x-1}
			\pgfmathtruncatemacro{\xpp}{\x+1}
%			\ifthenelse{\x = \plenminusone}{
%				\path [draw=blue,ultra thick] (v\xmm) [bend right=60] edge (v\xpp);
%			}{
			\node [fill=yellow,circle] (leaf1) at ( \xcoord - .4, \parity * 1.2)  {};
			\path [draw=black,ultra thick] (leaf1) [-] edge (v\x);
			\node at ( \xcoord, \parity * 1.3)  {$...$};
			\node [fill=yellow,circle] (leaf2) at ( \xcoord + .4, \parity * 1.2)  {};
			\path [draw=black,ultra thick] (leaf2) [-] edge (v\x);
			\ifthenelse{\parity > 0}{
				\path [draw=green,ultra thick] (leaf1) [bend right] edge (v\xmm);
				\ifthenelse{\x<\pathlen}{
					\path [draw=green,ultra thick] (leaf2) [bend left] edge (v\xpp);
				}{}
			}{
				\path [draw=blue,ultra thick] (leaf1) [bend left] edge (v\xmm);
				\path [draw=blue,ultra thick] (leaf2) [bend right] edge (v\xpp);
				}
%			}
	}
		\foreach \x in {2,...,\pathlen} {
			\pgfmathtruncatemacro{\cur}{\x - 1}
			\path [draw=black,ultra thick] (v\cur) [-] edge (v\x);
		}

		\node [minimum size= 3mm,fill=lightgray,circle] (m\pathlen) at (\plenplusone * \pathspacing,1)  {$x'_1$};
		\path [draw=green,ultra thick] (leaf2) edge (m\pathlen);
		\path [draw=green,ultra thick] [bend left] (m\pathlen) edge (v\plenplusone);
		\node [minimum size= 3mm,fill=lightgray,circle] (n\pathlen) at (\plenplusone * \pathspacing,-1)  {$x''_1$};
		\path [draw=blue,ultra thick] [bend right] (v\pathlen) edge (n\pathlen);
		\path [draw=blue,ultra thick] [bend right] (n\pathlen) edge (v\plenplusone);
		\path [draw=black,ultra thick] (m\pathlen) edge (n\pathlen);
	\end{tikzpicture}
\end{footnotesize}
\caption{\small\sf
Illustration of $G$ as constructed in Lemma~\ref{lem:OP-4n-1}.
The vertices $x_1, \ldots,x_6$ and the leaves in yellow (labels omitted) together with the black edges form the caterpillar graph $G'$.
Note that $x_5$ has no leaves.
Paths $P_1$ and $P_2$ are depicted in green and blue, respectively.
Together, they form the cycle $C$.}
\label{fig:caterpillar-fullcycle}
%\vspace{-12pt}
\end{figure}
%%%%%%%%%%%%%%%%%%%%%%%%%%%%%%%%%%%%%

Figure~\ref{fig:caterpillar-fullcycle} illustrates the construction of $G$.
Following the four steps, we have
%that
%dror{The order of the leaves should be reversed when going from right yo left.
%Or the leaves of odd spine vertices should be defines as going from right to left.}
\begin{align*}
C &= (x_0,x_1, \ell_{2,1},\ell_{2,2}, \ldots, \ell_{2,k_2}, x_3, \ell_{4,1},\ell_{4,2}, \ldots, \ell_{4,k_2}, x_5, \ldots,
           x_{s-1}, \ell_{s,1},\ell_{s,2}, \ldots, \ell_{s,k_s}, \\
      &  x'_1, \ldots, x'_t, x_{s+1}, x''_t, \ldots, x''_1,
          x_{s}, \ell_{s-1,k_{s-1}}, \ell_{s-1,k_{s-1}-1}, \ldots , x_2, \ell_{1,k_1}, \ell_{1,k_1-1}, \ldots, \ell_{1,1}, x_0)
~.
\end{align*}

%Formally, let $V = V' \cup \{ x_0,x_{s+1} \}$.
%%%%$V = V' \cup \{ x_0,x_2 \}$
%The set $E'$ contains the edges $(x_i, x_{i+1})$, for $i = 1,2, \ldots, s-1$
%connecting the spine vertices along a path and the edges connecting the spine
%vertices to their respective leaves. The set $L = \{ (\ell_{i,j},\ell_{i,j+1})$
%$\; | \; i = 1,\ldots, s \text{ and } j = 1, \ldots, k_i - 1 \}$
%%%%$L = \bigcup_{i = 1}^{s} \bigcup_{j = 1}^{k_i-1} (\ell_{i,j},\ell_{i,j+1})$
%contains edges connecting leaves among each other. The sets
%
Formally, let $V = V' \cup \{ x_0,x_{s+1} \}$.
The set $E'$ contains the path connecting the spine vertices,
$\{ (x_i, x_{i+1}) \mid i = 1,2, \ldots, s-1 \}$,
and the edges connecting each spine vertex to its leaves.
%AmotzApr27: I combined L with A_2 and A_1 to have only one displayed math
The set $L$ contains edges among the leaves while the sets $A_1$ and $A_2$
contain edges connecting spine and leaf vertices
%$$L ~=~ \{ (\ell_{i,j},\ell_{i,j+1}) \mid i=1,\ldots,s,~~~ j=1,\ldots,k_i-1 \}$$
%contains edges among the leaves. The sets
\begin{align*}
L &=~ \{ (\ell_{i,j},\ell_{i,j+1}) \mid i=1,\ldots,s,~~~ j=1,\ldots,k_i-1 \}, \\
A_2 &=~ \{ (x_{i}, \ell_{i+1,1}) \cup (\ell_{i+1, k_{i+1}},x_{i+2}) \; | \; i = 1, \ldots, s \text{ and } k_{i+1} > 0 \},
\\
A_1 &=~ \{ (x_{i}, x_{i+2}) \; | \; i = 1, \ldots, s \text{ and } k_{i+1} = 0 \}
\end{align*}
Recall that $M'=\set{(x'_1,x''_1),\ldots,(x'_{t},x''_{t})}$.
Note that $A_2,A_1,L,E',M'$ are pairwise disjoint.
%dror{Phase (4) is not mentioned in the formal description.}
Let $E = E' \cup L \cup A_2 \cup A_1\cup M'$, and let $G = (V,E)$.
Notice that Figure~\ref{fig:caterpillar-fullcycle}
depicts an outerplanar embedding of $G$.
By the construction of $G$, every vertex in $V$ appears on the Hamiltonian
outer-cycle. No two edges in $E(G)$ cross each other.
Therefore, $G$ is an outerplanar graph.
By the definition of $d'$, two vertices of degree 2 are deleted,
which correspond to $x_0,x_{s+1}$.
Every vertex in  $V\setminus\{x_0,x_{s+1}\}$ has its degree decreased by 2.
Recall that $T'\cup M'$ realizes $d'$. After adding $C$ to $T'$ and inserting
$M'$ in step (4), every vertex $i\in V\setminus\{x_0,x_{s+1}\}$ has its degree
increased by 2, making it equal to $d_i$.
Combining this with the fact that $x_0$ and $x_{s+1}$ have degree 2 in $G$,
we have $\deg(G) = d$.
The lemma follows.
\end{proof}

%\begin{corollary}	
%\label{lem:OP-4n-2}
%Let $d\in\cD$.
%$d = (d_1, \ldots, d_n)$ be a sequence of positive integers such that $\sum d$ is even.
%If
%\begin{enumerate}[(i)]
%\item
%$\sum d \leq 4n - 2 \multipl_2 - 2$ and
%\label{cond:sum-OP-4n-2}
%\item
%$d_{n-1} = d_{n} = 2$,
%\end{enumerate}
%then $d$ is outerplanaric.
%\end{corollary}

%AmotzApr27: I rewrote this corollary to save a line.

\begin{corollary}	
\label{lem:OP-4n-2}
$d\in\cD$ is outerplanaric
If $\sum d \leq 4n - 2 \multipl_2 - 2$ and
$d_{n-1} = d_{n} = 2$.
\end{corollary}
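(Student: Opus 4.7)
The plan is to reduce the corollary to Lemma~\ref{lem:OP-4n-1} by trimming away the excess degree-2 vertices, applying the lemma to the shortened sequence, and then reintroducing those vertices by subdividing edges of the resulting Hamiltonian outer-cycle.

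First I would observe that since $d$ is nonincreasing and $d_n=2$, there are no degree-1 entries, so $\multipl_1=0$. If $\multipl_2=2$, then the hypothesis reads $\sum d\leq 4n-6$ and the conclusion is immediate from Lemma~\ref{lem:OP-4n-1}, so we may assume $\multipl_2>2$.

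Next I would define the reduced sequence $d^*$ by deleting $\multipl_2-2$ of the degree-2 entries of $d$; this is still nonincreasing, has $n^*=n-\multipl_2+2$ terms, satisfies $\multipl_2(d^*)=2$, $d^*_{n^*}=2$, and $\multipl_1(d^*)=0$, and has $\sum d^*=\sum d-2(\multipl_2-2)$. A short arithmetic check, using $\sum d\geq 2n$ (from $d\in\cD$) for the lower bound and the hypothesis $\sum d\leq 4n-2\multipl_2-2$ for the upper bound, shows that
\[
2n^* \;\leq\; \sum d^* \;\leq\; 4n^*-6,
\]
so $d^*\in\cD$ and satisfies the premises of Lemma~\ref{lem:OP-4n-1}. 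Applying that lemma yields an outerplanar realization $G^*$ of $d^*$ together with an outerplanar embedding whose outer face is a Hamiltonian cycle $C^*$.

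Finally, I would obtain a realization $G$ of $d$ from $G^*$ by iteratively subdividing edges of the outer-cycle: each of the $\multipl_2-2$ subdivisions inserts a single new vertex on the outer face (splitting one external edge into two), preserves outerplanarity, leaves every previously existing degree unchanged, and contributes exactly one new vertex of degree 2. After $\multipl_2-2$ such insertions the degree sequence of $G$ equals $d$ and $G$ remains outerplanar, establishing the corollary. No step in this plan looks delicate; the only thing to verify carefully is the arithmetic ensuring $d^*\in\cD$, after which the application of Lemma~\ref{lem:OP-4n-1} and the subdivision step are routine.
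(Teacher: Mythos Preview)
Your proposal is correct and follows essentially the same approach as the paper: remove $\multipl_2-2$ degree-$2$ entries, apply Lemma~\ref{lem:OP-4n-1} to the shortened sequence, and then reinsert the missing degree-$2$ vertices by subdividing edges of the Hamiltonian outer-cycle. You are in fact slightly more careful than the paper in explicitly verifying the lower bound $\sum d^*\geq 2n^*$ needed for $d^*\in\cD$, which the paper leaves implicit.
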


\begin{proof}
%david{After defining the notation $d'\circ d''$, we can use it here.} \\
%hl{Let $d'$ be the sequence obtained from $d$ by deleting $\multipl_2-2$ elements with degree 2 and length of $d'$ is $n' = n-\multipl_2+2$.}
Let $d''$ be the prefix of $d$ satisfying $d = d'' \circ (2^{\multipl_2})$.
%Let $d = d'' \circ (2^{\multipl_2})$ as in the lemma.
Consider the sequence $d' = d'' \circ (2^2)$ of length $n' = n-\multipl_2+2$.
Observe that $\sum d'$ is even.
As $\sum d \leq 4n - 2 \multipl_2 - 2$, it follows that
\[\sum d' ~=~ \sum d - 2(\multipl_2 - 2) ~\leq~ 4n - 4 \multipl_2 + 2
~=~ 4(n - \multipl_2 + 2) - 6 ~=~ 4n' - 6.\]
Sequence $d'$ satisfies the preconditions of Lemma~\ref{lem:OP-4n-1}.
Hence, let $G'$ be an outerplanar realization and embedding of $d'$
with a Hamiltonian outer-cycle.
An outerplanar realization of $d$ is obtained by replacing an edge on
this outer-cycle
%cyclic order
by a path of length $\multipl_2 - 2$.
\end{proof}

%%%%%%%%%%%%%%%%%%%%%%%%%%%%%%%%%%%%%%%%%%%%%%%%%%%%%%%%
\section{Approximate Realizations by Book Embeddings}

This section concerns finding approximate realizations.
%The main result is that outerplanaric sequences always allow for a 2-page book embedding.
Our main result is a classification of $\cD$ into two sub-families,
$\cD=\cD_{NOP}\cup\cD_{2PBE}$, such that
\begin{compactitem}
\item every sequence in $\cD_{NOP}$ is provably non-outerplanaric, and
\item every sequence in $\cD_{2PBE}$ is given a realizing gaph enjoying
  a 2-page book embedding (hereafter referred to as a \emph{2PBE realization}).
\end{compactitem}
Note that if a graph has pagenumber $2$, then it is planar (the converse it false).

%%%%%%%%%%%%%%%%%%%%%%%%%%%%%%%%%%%%%%%%%%%%%%%%%%%%%%%%%%%%%%%%%%
\subsection{A 2PBE realization for sequences with $d_1\le 4$}
%{Sequences where $d_1 \leq 4$ with $d_n=d_{n-1}=2$}

In this section, we analyze the subclass $\cD_{\le 4}\subseteq \cD$ of sequences
whose largest degree satisfies $d_1\le 4$.
%All graphs on three or fewer vertices are outerplanar, so we consider $n\ge 4$.
%yingli{No need to limit the number $n$. We consider all the sequence in $\cD_{le 4}$ rather than the graphic sequences. So we do not limit $n$ in the following lemmas.}
We show that
%graphic
%all sequences in $\cD_{\le 4}$ enjoy
every $d\in\cD_{\le 4}$ enjoys a 2PBE realization.
Lemmas~\ref{lem:2pbe-d1leq4},
%\ref{lem:d1leq4-1+2}(b),
\ref{lem:large dn} and \ref{lem:2pbe-d1leq4 with 1}
imply Theorem~\ref{thm:d1leq4}.

%dror{All graphs of four vertices are $OP+1$, so we may assume that $n \geq 5$.
%This makes the requirement that $d_1 \leq n-1$ redudant, since $d_1 \leq 4$.}

%
% This lemma combines the next two. This is how it appeared in IWOCA'24 proc.
%
%\begin{lemma}
%\label{lem:d1leq4-1+2}
%For a sequence $d\in \cD_{\le 4}$ s.t.\ $d_{n} = d_{n-1} = 2$,
%(a) if $\multipl_3 > 0$ then $d$ is outerplanaric, and
%(b) if $d_1\le n-1$ then $d$ has an OP+2 realization.
%\end{lemma}

First, consider the case where $\multipl_3 > 0$ and $d_{n-1} = d_{n} = 2$.

\begin{lemma}
\label{lem:d1leq4-1}
%A sequence
Every $d\in \cD_{\le 4}$ s.t.\
$d_{n} = d_{n-1} = 2$ and $\multipl_3 > 0$ is outerplanaric.
\end{lemma}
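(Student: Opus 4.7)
The plan is to reduce this lemma directly to Corollary \ref{lem:OP-4n-2}, which only requires verifying the volume bound $\sum d \le 4n - 2\multipl_2 - 2$ together with $d_{n-1}=d_n=2$. The second condition is assumed, so the only work is in establishing the volume bound, and I expect the key step to be a parity argument on $\multipl_3$.

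First I would observe that the hypotheses $d_n = 2$ and $d_1 \le 4$ force all entries of $d$ to lie in $\{2,3,4\}$, so in particular $\multipl_1 = 0$ and $n = \multipl_2 + \multipl_3 + \multipl_4$. Writing $\sum d = 2\multipl_2 + 3\multipl_3 + 4\multipl_4$, the requirement in the definition of $\cD$ that $\sum d$ be even implies that $3\multipl_3$ is even, hence $\multipl_3$ is even. Combined with the hypothesis $\multipl_3 > 0$, this yields $\multipl_3 \ge 2$.

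Next I would compute
\[
4n - 2\multipl_2 - 2 - \sum d \;=\; 4(\multipl_2+\multipl_3+\multipl_4) - 2\multipl_2 - 2 - (2\multipl_2 + 3\multipl_3 + 4\multipl_4) \;=\; \multipl_3 - 2,
\]
which is nonnegative by the previous step. Hence $\sum d \le 4n - 2\multipl_2 - 2$, and Corollary \ref{lem:OP-4n-2} applies to produce an outerplanar realization of $d$.

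The only subtlety worth flagging is the parity observation: without it, a hypothesis of $\multipl_3 \ge 1$ would not be strong enough to invoke the corollary, since the bound $\sum d \le 4n-2\multipl_2-2$ genuinely requires two degree-$3$ vertices. The evenness of $\sum d$ (already built into membership in $\cD$) together with $\multipl_1=0$ and $d_1\le 4$ is precisely what upgrades the hypothesis to $\multipl_3\ge 2$, and I do not anticipate any other obstacles.
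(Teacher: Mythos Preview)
Your proposal is correct and follows essentially the same approach as the paper: both argue that $\multipl_1=0$, use the parity of $\sum d$ to upgrade $\multipl_3>0$ to $\multipl_3\ge 2$, and then verify the volume bound $\sum d \le 4n-2\multipl_2-2$ so that Corollary~\ref{lem:OP-4n-2} applies. Your write-up is in fact slightly more explicit than the paper's about why parity forces $\multipl_3$ to be even.
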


\begin{proof}
The degree sum $\sum d$ is even, so necessarily $\multipl_3\geq 2$. Since
$n=\multipl_4 + \multipl_3 + \multipl_2$, it follows that
$\sum d = 4\multipl_4 + 3\multipl_3 + 2\multipl_2 \leq 4n-2\multipl_2-2$,
which satisfies the conditions of
Corollary \ref{lem:OP-4n-2}. Therefore, $d$ is outerpananric.
%\QED
\end{proof}

%
% David: I cancelled this lemma and included it as part of the proof of the next lemma
%
%\begin{lemma}
%\label{lem:one-case-outer-planaric}
%A sequence $d\in\cD_{\le 4}$ such that $d_n=d_{n-1}=2$, $d_1\le n-1$,
%$\multipl_3=0$ and $\multipl_4\leq 1$ is outerplanaric.
%\end{lemma}

%\begin{proof}
%We consider two cases for $\multipl_4$.
%\begin{description}
%\item[Case 1:] $\multipl_4=0$.
%Then $d=(2^n)$, which for $n\ge 3$ can be realized by a cycle,
%hence it is outerplanaric.
%\item[Case 2:] $\multipl_4=1$.
%Then $\multipl_2\geq 4$ and $n\ge 5$.
%%%%since $d$ is graphic, which necessitates $d_1\le n-1$, so $n\ge 5$.
%In this case, construct an outerplanar graph $G$ as follows.
%First, connect vertex $v_1$ of degree 4 to vertices $v_2,v_3,v_4,v_5$
%of degree 2. Next, add edges $(v_2,v_3), (v_4,v_5)$ and insert $\multipl_2-4$
%vertices into the edge $(v_2,v_3)$, i.e., transform this edge into a path
%$(v_2,u_1,\ldots,u_{\multipl_2-4},v_3)$. Notice that $G$ is outerplanar.
%\end{description}
%The lemma follows.
%$\hfill\Box$
%\end{proof}

\begin{lemma}
\label{lem:2pbe-d1leq4}
%A sequence
Every $d\in\cD_{\le 4}$ s.t.\ $d_n=d_{n-1}=2$ and $d_1\le n-1$
%can be realized by a graph $G$ with a $2$-page book embedding. Moreover, one of the pages is also bipartite, i.e., $G=G_1 \cup G_2$ where $G_1$ is outer-planar (i.e., 1PBE) and $G_2$ is bipartite outer-planar.
has an OP+2 realization.
\end{lemma}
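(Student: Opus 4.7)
The plan is to reduce the statement to the case already handled by Lemma~\ref{lem:d1leq4-1}, where an outerplanar realization is guaranteed. Since $d \in \cD_{\le 4}$ with $d_n = d_{n-1} = 2$, every entry of $d$ lies in $\{2,3,4\}$ and $\multipl_1(d) = 0$. If $\multipl_3(d) > 0$, then Lemma~\ref{lem:d1leq4-1} already furnishes an outerplanar realization, which is trivially an OP+2 realization with an empty second page. I may therefore assume $\multipl_3(d) = 0$, and write $d = (4^a, 2^b)$ with $a+b = n$; the bound $\sum d \le 4n-6$ from the definition of $\cD$ forces $b \ge 3$. The cases $a \le 1$ admit direct outerplanar constructions: $a = 0$ yields $d = (2^n)$, realized by an $n$-cycle, and $a = 1$ yields $d = (4, 2^{n-1})$, realized by two cycles sharing the unique degree-$4$ vertex.

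For the main case $a \ge 2$, I would introduce the auxiliary sequence $d' = (4^{a-2}, 3^2, 2^b)$ on the same $n$ vertices. A routine verification shows that $\sum d' = \sum d - 2$ is even, that $d' \in \cD_{\le 4}$ (using $a \ge 2$ for the lower bound $\sum d' \ge 2n$ and $b \ge 3$ for the upper bound), that $d'_{n-1} = d'_n = 2$, and that $\multipl_3(d') = 2 > 0$. Hence Lemma~\ref{lem:d1leq4-1} produces an outerplanar realization $G'$ of $d'$; moreover, the construction behind that lemma (Corollary~\ref{lem:OP-4n-2} built on Lemma~\ref{lem:OP-4n-1}) equips $G'$ with a Hamiltonian outer-cycle $C$ in which every vertex appears exactly once. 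Let $u, v$ denote the two degree-$3$ vertices of $G'$. The idea is to add the single edge $(u,v)$ on a second page, which promotes both $u$ and $v$ to degree $4$ and makes $G' \cup \{(u,v)\}$ a 2PBE realization of $d$, in fact an OP+1 realization (and hence OP+2).

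The crux of the argument is verifying that $(u,v)$ is a legitimate simple edge, i.e., that $u$ and $v$ are non-adjacent in $G'$. Under the ``subtract $2$ from every degree'' reduction used in Lemma~\ref{lem:OP-4n-1}, the two degree-$3$ vertices of $d'$ correspond to two degree-$1$ vertices in the underlying caterpillar-plus-matching forest $T' \cup M'$; by choosing the spine ordering in $T'$ (or the matching-partner assignment in $M'$) appropriately, one can place the two vertices on $C$ so that they are neither cyclically adjacent on $C$ nor connected by a chord. This is where I expect the bulk of the case analysis to reside, and it is the main obstacle. In any exceptional configuration where the direct placement forces $u \sim v$ in $G'$, I would invoke a local 2-switch: delete an edge $(u,w)$ and an edge $(v,x)$ from $G'$ and place both $(u,v)$ and $(w,x)$ on the second page, which preserves the degree sequence and the outerplanarity of the first page while consuming the full budget of two second-page edges. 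This fallback is precisely why the lemma is stated with an OP+2, rather than an OP+1, guarantee.
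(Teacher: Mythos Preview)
Your reduction to Lemma~\ref{lem:d1leq4-1} and the case split on $a=\multipl_4$ mirror the paper's approach, and the constructions for $a\le 1$ are fine. The gap is in the main case $a\ge 2$: your fallback does not produce a realization of $d$.

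Concretely, take $d=(4,4,2,2,2)$, so $a=2$, $b=3$. Your auxiliary sequence is $d'=(3,3,2,2,2)$. The construction behind Lemma~\ref{lem:d1leq4-1} passes through Corollary~\ref{lem:OP-4n-2} and Lemma~\ref{lem:OP-4n-1}, which first realize $(3,3,2,2)$; but this sequence has the \emph{unique} realization $K_4\setminus e$, in which the two degree-$3$ vertices are adjacent. After inserting the remaining degree-$2$ vertex on the outer cycle, the resulting $G'$ still has $u\sim v$. Your sentence ``by choosing the spine ordering \ldots\ one can place the two vertices on $C$ so that they are neither cyclically adjacent on $C$ nor connected by a chord'' cannot be made to work here: the caterpillar on $(1,1)$ is a single edge, so there is no freedom. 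Now examine your fallback. You form $G_1=G'\setminus\{(u,w),(v,x)\}$ and $G_2=\{(u,v),(w,x)\}$. In $G_1\cup G_2$, the degree of $u$ is $(3-1)+1=3$, and likewise for $v$; the degrees of $w,x$ are restored to $d'_w,d'_x$. Thus $\deg(G_1\cup G_2)=d'$, not $d$. The $2$-switch is degree-preserving \emph{for $G'$}, whereas you need to \emph{increase} $\deg(u)$ and $\deg(v)$ by one each. With all other vertices already saturated in $d'$, there is no pair of second-page edges that does this without creating a multi-edge.

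The paper sidesteps the adjacency issue altogether by a different choice of $d'$: it removes one degree-$2$ entry as well, setting $d'=(4^{a-2},3^2,2^{b-1})$ on $n-1$ vertices. After realizing $d'$ by an outerplanar $G'$ and reinstating the removed vertex $z$ as an isolated vertex in $G_1$, the second page consists of the two edges $(z,u)$ and $(z,v)$. Since $z$ is isolated in $G_1$, these edges are automatically absent from $G_1$, so simplicity is guaranteed for free, and the vertex $z$ can be placed anywhere along the spine. This is precisely why the lemma is stated as OP+2 rather than OP+1.
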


\begin{proof}
When $\multipl_3>0$, the claim follows by Lemma
\ref{lem:d1leq4-1}.
%\ref{lem:d1leq4-1+2}(a).
So hereafter suppose $\multipl_3=0$.
We split the analysis into three cases.

\begin{description}
%\inline Case 1: $\multipl_4\leq 1$.
%%%%the claim follows by Lemma \ref{lem:one-case-outer-planaric}.
%Then we consider two possible subcases.

\item[Case 1:] $\multipl_4=0$.
Then $d=(2^n)$, which for $n\ge 3$ can be realized by a cycle,
hence it is outerplanaric.

\item[Case 2:] $\multipl_4=1$.
Then $\multipl_2\geq 4$ since $\multipl_3=0$ and $d_1=4\le n-1$.
%dror{This should be explained.}
%since $d$ is graphic, which necessitates $d_1\le n-1$, so $n\ge 5$.
In this case, construct an outerplanar graph $G$ as follows.
First, connect vertex $v_1$ of degree 4 to vertices $v_2,v_3,v_4,v_5$
of degree 2. Next, add edges $(v_2,v_3), (v_4,v_5)$ and insert $\multipl_2-4$
vertices into the edge $(v_2,v_3)$, i.e., transform this edge into a path
$(v_2,u_1,\ldots,u_{\multipl_2-4},v_3)$. Notice that $G$ is outerplanar.

\item [Case 3:] $\multipl_4\geq 2$.
Here, $d$ consists of only 2's and 4's.
Since $\sum d\leq 4n-6$, necessarily
\begin{align}\label{number of degree 2}
\multipl_2\geq 3.
\end{align}
Generate a sequence $d'$ by removing one degree $2$ from $d$ and reducing
$d_1, d_2$ by $1$. Then $d'=(4^{\multipl_4-2},3^2,2^{\multipl_2-1})$.
It is easy to verify, using Eq. \eqref{number of degree 2},
that $d'$ satisfies the preconditions of Lemma \ref{lem:d1leq4-1}.
%\ref{lem:d1leq4-1+2}(a).
Hence, let $G'$ be an outerplanar realization of $d'$ by Lemma
\ref{lem:d1leq4-1}.
%\ref{lem:d1leq4-1+2}(a).
We create a 2-page book embedding graph $G$ realizing $d$ based on
$G' = (V',E')$.
Let $v_1,v_2 \in V'$ be the vertices with $\deg_{G'}(v_1)= \deg_{G'}(v_2) =3$.
Add an isolated vertex $u$ to $G'$ and obtain an outerplanar graph
$G_1= (V,E')$ with $V=V'\cup \{u\}$.
%dror{Undefined graph operation.}
Let $G_2= (V,E_2)$ with $E_2=\{(v_1,u),(v_2,u)\}$.
The position of $u$ in the embeddings of $G'$ and $G_2$ can be fixed
arbitrarily since $u$ is isolated in $G_1$, and $G_2$ has only two edges
that do not exist in $\bar{G}'$.
Hence, $G_1$ and $G_2$ form
a 2PBE realization for $d$.
%dror{Another such operation.}
Notice that $G_1$ is outerplanar and $G_2$ consists of two edges.
\end{description}
The lemma follows.
\end{proof}

\begin{lemma}
\label{lem:large dn}
%A sequence
Every $d\in \cD$ s.t.\
$d_n\ge 2$ and $d_{n-1}\ge 3$
%$d_n\geq 2$ and $d$ does not satisfy $d_{n}=d_{n-1}=2$.
%Then $d$
has
%a 2PBE
an OP+1 realization.
%Moreover, in the realizing graph $G=G_1 \cup G_2$, $G_1$ is outer-planar and $G_2$ consists of a single edge.
%david{Yingli: This holds for $\cD$, not just $\cD_{\le 4}$! Needs checking.}
\end{lemma}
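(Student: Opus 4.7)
The plan is to realize $d$ as an OP+1 book embedding $G_1 \cup G_2$ in which $G_1$ is outerplanar and $G_2$ consists of a single edge $e = (u,v)$ placed on page~2. Equivalently, I seek indices $u \neq v$ and an outerplanar graph $G_1$ with $\deg(G_1) = d \ominus (e_u + e_v)$ satisfying $(u,v) \notin E(G_1)$; adding $(u,v)$ on page~2 then yields the desired OP+1 realization.

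I split the argument according to $d_n$. In the case $d_n = 2$ (where $\multipl_2(d) = 1$, since $d_{n-1} \ge 3$), I try to choose $u, v$ so that the reduced sequence $d' = d \ominus (e_u + e_v)$ satisfies $\multipl_2(d') \ge 2$ and $d'_n = 2$, enabling Lemma~\ref{lem:OP-4n-1} or Corollary~\ref{lem:OP-4n-2} to produce an outerplanar realization of $d'$ with a Hamiltonian outer-cycle. When $d_{n-1} = 3$ it suffices to take $v = n-1$ and $u$ a larger index; when $d_{n-1} \ge 4$ no single subtraction introduces a second degree-$2$ vertex, so I instead append an auxiliary degree-$2$ vertex to obtain a sequence $d^+$ of length $n+1$, apply Lemma~\ref{lem:OP-4n-1} to $d^+$, and then smooth the auxiliary vertex by deleting it together with its two incident cycle edges and adding an edge between its two cycle-neighbors. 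In the case $d_n \ge 3$, the bound $\sum d \le 4n - 6$ combined with $d_i \ge d_n$ forces $d_n = 3$, and an analogous reduction-plus-augmentation strategy applies, reducing either to a direct application of Lemma~\ref{lem:OP-4n-1} (when $d_{n-1} = 3$) or to an auxiliary-vertex construction (when $d_{n-1} \ge 4$).

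The main technical obstacle is verifying that the added page-$2$ edge (and any smoothing edge) does not coincide with an existing edge of the outerplanar realization produced by Lemma~\ref{lem:OP-4n-1}. For the direct reductions, this is straightforward: the two degree-$2$ vertices of $d'$ are realized as ``bridge'' vertices on opposite ends of the Hamiltonian outer-cycle in Lemma~\ref{lem:OP-4n-1}'s construction, and since they have degree $2$ they carry no incident chords, so they are mutually non-adjacent in $G_1$. For the auxiliary-vertex subcases, the obstacle is more delicate; I plan to leverage the flexibility in the caterpillar spine ordering of Lemma~\ref{lem:OP-4n-1}'s proof to insert the auxiliary vertex at a cycle position whose two cycle-neighbors are not joined by a caterpillar or matching edge, ensuring that the smoothing operation produces a valid simple graph.
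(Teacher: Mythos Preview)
Your overall strategy---subtract one from two well-chosen coordinates of $d$, realize the reduced sequence $d'$ by an outerplanar graph via Lemma~\ref{lem:OP-4n-1} or Corollary~\ref{lem:OP-4n-2}, then place the missing edge on page~2---is exactly what the paper does. However, you have missed a simple but decisive observation that collapses most of your case analysis: since $d\in\cD$ gives $\sum d\le 4n-6$, while $d_{n-1}\ge 4$ together with $d_n\ge 2$ would force $\sum d\ge 4(n-1)+2=4n-2$, we must have $d_{n-1}\le 3$. Combined with the hypothesis $d_{n-1}\ge 3$ this pins down $d_{n-1}=3$; and when $d_n=2$ the same counting forces $d_{n-2}=3$ as well. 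Consequently your subcases ``$d_{n-1}\ge 4$'' are vacuous, and the entire auxiliary-vertex / smoothing apparatus you sketch for them---together with the ``delicate'' non-adjacency analysis you flag as the main obstacle---is unnecessary. Only the two easy subcases $(d_{n-1},d_n)\in\{(3,3),(3,2)\}$ survive, and the paper handles each by lowering two degree-$3$ entries to $2$ and invoking Lemma~\ref{lem:OP-4n-1} (respectively Corollary~\ref{lem:OP-4n-2}).

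There is also a slip in your surviving subcase $d_n=2$, $d_{n-1}=3$: you write ``take $v=n-1$ and $u$ a larger index'', but the only larger index is $n$, and decrementing $d_n=2$ would give $d'_n=1$, violating the precondition $d'_{n'}=2$ of Lemma~\ref{lem:OP-4n-1}. The correct (and the paper's) choice is $u=n-2$, whose degree is forced to be $3$ by the volume bound above; this yields $\multipl_2(d')=3$, and Corollary~\ref{lem:OP-4n-2} applies. Your non-adjacency argument for the two degree-$2$ vertices $x_0,x_{s+1}$ in the construction of Lemma~\ref{lem:OP-4n-1} is fine and in fact supplies a detail the paper glosses over.
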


\begin{proof}
Since $\sum d\leq 4n-6$ and $d_n\geq 2$, we have $d_{n-1}\leq 3$.
Consider the two possible cases, $(d_{n-1}, d_{n}) \in\{(2,3),(3,3)\}$.

\begin{description}
\item[Case 1:] $d_{n}=d_{n-1}=3$.
Then construct a sequence $d'$ by replacing $(d_{n-1},d_{n})$ with $(2,2)$.
Note that $d'$ satisfies the conditions in Lemma \ref{lem:OP-4n-1} and it can
be realized by a outerplanar graph $G'$ with two nonadjacent vertices
$v',v''$ of degree 2. Then $d$ can be realized by a graph $G$ obtained by
adding the edge $(v',v'')$ to $G'$. One can verify that $G$ has a
2-page book embedding (with $G'$ on one page and the added edge on the other).

\item[Case 2:]
$d_{n}=2$ and $d_{n-1}=3$.
In this case, $d_{n-2}=3$ since $\sum d\leq 4n-6$. Construct $d'$ by replacing
$(d_{n-2}, d_{n-1})$ with $(2,2)$. Note that $\sum d'\leq 4n-8$ and
$\multipl'_2=3$, so $d'$ satisfies the conditions in
Corollary \ref{lem:OP-4n-2} and there is an outerplanar graph $G'$
realizing $d'$. By the construction of $G'$, it has two nonadjacent vertices
$v',v''$ of degree 2. Therefore, we can construct a graph $G$ by
adding the edge $(v',v'')$ to $G'$. Again, $G$ has a 2-page book embedding.
\end{description}
Notice that in both cases $G'$ is outerplanar and the other page consists of one edge. The lemma follows.
\end{proof}

%For sequences $d\in\cD_{\le 4}$ that contain also 1's, 
It remains to deal with sequences $d\in\cD_{\le 4}$ that contain also 1's.
We show the following.

\begin{lemma}
\label{lem:2pbe-d1leq4 with 1}
%A sequence
Every $d\in\cD_{\le 4}$ s.t.\
$d_n=1$ and $d_1\le n-1$
has
%a 2PBE
an OP+2 realization.
%Moreover, in the realizing graph $G=G_1 \cup G_2$, $G_1$ is outer-planar and $G_2$ consists of a single edge.
\end{lemma}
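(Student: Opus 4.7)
My plan is to reduce to one of the already handled cases -- \Cref{lem:d1leq4-1}, \Cref{lem:2pbe-d1leq4}, or \Cref{lem:large dn} -- by absorbing the $\multipl_1$ degree-$1$ entries of $d$ as pendants attached to vertices of higher degree in a realization of a shorter sequence $d^*$ of length $n^* = n - \multipl_1$.

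I will construct $d^*$ from $d$ by deleting the $\multipl_1$ entries equal to $1$ and decreasing $\multipl_1$ of the remaining entries by one. Each degree-$4$ entry can be decreased by at most $2$ (to reach $2$) and each degree-$3$ entry by at most $1$, so the total ``absorption capacity'' is $2\multipl_4 + \multipl_3$. The lower bound $\sum d \geq 2n$ rewrites as $2\multipl_4 + \multipl_3 \geq \multipl_1$, so the capacity suffices. I will allocate the pendants greedily, under the additional requirement that $d^*_1 \leq n^*-1$: when $n^* \geq 5$ any allocation works since $d^*_1 \leq 4$; when $n^* = 4$ every degree-$4$ vertex must absorb at least one pendant, which is possible because the two bounds defining $\cD$ together with $d_1 \leq n-1$ imply $\multipl_1 \geq \multipl_4$; when $n^* = 3$ the two bounds collapse to equality, forcing $\multipl_1 = 2\multipl_4 + \multipl_3$ and therefore $d^* = (2^3)$ (a triangle). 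The regimes $n^* \leq 2$ do not arise, since they would violate one of the bounds of $\cD$.

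By construction $d^* \in \cD_{\leq 4}$ and $\multipl^*_1 = 0$. If $d^*_{n^*-1} \geq 3$ then $d^*_{n^*} \geq 2$ and $d^*_{n^*-1} \geq 3$, so \Cref{lem:large dn} provides an OP+1 realization of $d^*$. Otherwise $d^*_{n^*} = d^*_{n^*-1} = 2$ and \Cref{lem:2pbe-d1leq4} gives an OP+2 realization of $d^*$ (and if additionally $\multipl_3^* > 0$, \Cref{lem:d1leq4-1} even yields an outerplanar realization). Let $H$ be the resulting realization, with outerplanar page $H_1$ and auxiliary page $H_2$ consisting of at most two edges. I then reattach each degree-$1$ vertex of $d$ as a pendant of the vertex in $H$ to which it was allocated. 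A pendant can always be drawn on the outer face adjacent to its endpoint, so every pendant is added to $H_1$ without violating outerplanarity, while $H_2$ is left unchanged. The resulting graph is an OP+2 realization of $d$.

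The main obstacle will be verifying feasibility of the greedy allocation in the tight corner cases $n^* \in \{3,4\}$ with $d_1 = 4$: here the inequality $d^*_1 \leq n^*-1$ forces a particular distribution of pendants among the degree-$3$ and degree-$4$ vertices, and the proof that such a distribution exists uses both bounds of $\cD$ together with $d_1 \leq n-1$ in a nontrivial way. In every other regime the capacity inequality $2\multipl_4 + \multipl_3 \geq \multipl_1$ leaves enough slack that any reasonable allocation yields a valid $d^*$, and the proof reduces to a direct invocation of the preceding lemmas.
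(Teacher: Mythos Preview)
Your proposal is correct and follows essentially the same approach as the paper: strip off the $\multipl_1$ degree-$1$ entries while decreasing higher entries by a total of $\multipl_1$, apply \Cref{lem:2pbe-d1leq4} or \Cref{lem:large dn} to the resulting $1$-free sequence $d^*\in\cD_{\le 4}$, and then reattach the $1$'s as pendants on the outerplanar page. The only notable difference is bookkeeping: the paper always decrements the \emph{current maximum} and then splits into the two cases $\multipl_1\le\multipl_4$ versus $\multipl_1>\multipl_4$, whereas you allow a more flexible allocation (a degree-$4$ entry may absorb up to two pendants) and instead do an explicit case analysis on $n^*\in\{3,4,\ge 5\}$ to secure the hypothesis $d^*_1\le n^*-1$ needed by \Cref{lem:2pbe-d1leq4}.
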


\begin{proof}
First, in $\multipl_1$ steps,
% remove / connect a new vertex
%  of degree 1 to the vertex that is associated with
%david{I do not understand this change}
delete one degree 1 from the sequence and decrease the current largest degree by 1,
%First apply $\multipl_1$ steps of the Havel-Hakimi Algorithm,
%each step connecting a new vertex of degree 1 with the currently largest degree,
yielding the residual sequence $d'$, whose length is $n' = n - \multipl_1$.
Since $\sum d\leq 4n-2\multipl_1-6$ and $\sum d\geq 2n$,
\begin{align}
\label{bound of sumd and dn}
\sum d' & \leq 4(n-\multipl_1)-6, & d'_{n-\multipl_1} & \geq 2
\end{align}
Consider the following two cases.
%dror{$n$ may becomes too small for Lemma~12.}

\begin{description}
\item[Case 1:]
$\multipl_1\leq \multipl_4$. Then
$d'=(4^{\multipl_4-\multipl_1},3^{\multipl_3+\multipl_1},2^{\multipl_2})$.
%If $\multipl_2\ge 2$, $n'\ge 4$, which implies that $d'$ has a 2PBE realization $G'$ by Eq. \eqref{bound of sumd and dn},
%Lemma \ref{lem:2pbe-d1leq4}
%\ref{lem:d1leq4-1+2}(b)
and that when $n'=4$, $d'$ is $OP+1$.
In this case, if $\multipl_2\geq 2$ (respectively, $\multipl_2\leq 1$),
then $d'$ has a 2PBE realization $G'$
by Eq. \eqref{bound of sumd and dn} and
Lemma \ref{lem:2pbe-d1leq4},
%\ref{lem:d1leq4-1+2}(b)
(resp., \ref{lem:large dn}).
Notice that for $\multipl_2\ge 2$, $d'_1\le n'-1$ since $\multipl_3+\multipl_1\geq 2$.
%If $\multipl_2\leq 1$, then $d'$ can be realized by a graph with a
%2-page book embedding by Lemma \ref{lem:large dn} and Eq.
%\eqref{bound of sumd and dn}.

\item[Case 2:]
$\multipl_1>\multipl_4$. Then
$$d' = (3^{\multipl_3 + \multipl_4 - (\multipl_1 -\multipl_4)}, 2^{\multipl_2 + (\multipl_1-\multipl_4)})
= (3^{\multipl_3+2\multipl_4-\multipl_1},2^{\multipl_2+\multipl_1-\multipl_4}),$$
which satisfies the condition in Lemma \ref{lem:large dn} or
\ref{lem:2pbe-d1leq4},
%\ref{lem:d1leq4-1+2}(b)
by Eq. \eqref{bound of sumd and dn}, so again
$d'$ has a 2PBE realization $G'$.
Notice that for $\multipl'_2=\multipl_2+\multipl_1-\multipl_4\ge 2$, $d'_1\le n'-1$.
\end{description}

Finally, we obtain a 2PBE realization $G$ for $d$
by taking $G'$ and connecting $\multipl_1$ new degree 1 vertices with
degree 3 or 2 in $G'$.
By above construction of $G$, one page is outerplanar and the other page
consists of at most two edges.
The lemma follows.
\end{proof}

Combining Lemmas~\ref{lem:2pbe-d1leq4},
%\ref{lem:d1leq4-1+2}(b),
\ref{lem:large dn} and \ref{lem:2pbe-d1leq4 with 1},
we have the following theorem.

\begin{theorem}
\label{thm:d1leq4}
Let $d\in\cD_{\le 4}$ and $d_1\le n-1$.
Then $d$ has
%a 2PBE
an OP+2 realization.
%Moreover, in the realizing graph $G=G_1 \cup G_2$, $G_1$ is outer-planar and $G_2$ consists of a single edge.
\end{theorem}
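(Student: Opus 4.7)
The plan is to observe that Theorem \ref{thm:d1leq4} is essentially a packaging result that combines the three preceding lemmas, and to establish it by a straightforward case split on the two smallest entries $d_{n-1}$ and $d_n$ of the (nonincreasing) sequence. Since the sequence is nonincreasing, these two values determine which of the three lemmas applies, and the cases are mutually exclusive and exhaustive.

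First I would dispose of the case $d_n = 1$. Here the hypotheses $d \in \cD_{\le 4}$ and $d_1 \le n-1$ are exactly the ones needed to invoke Lemma \ref{lem:2pbe-d1leq4 with 1}, which directly yields an OP+2 realization. Next I would turn to $d_n \ge 2$, and split further on $d_{n-1}$. If $d_{n-1} \ge 3$, then Lemma \ref{lem:large dn} applies (no bound on $d_1$ is required by that lemma) and produces an even stronger OP+1 realization; since every OP+1 realization is a fortiori an OP+2 realization, we are done. The remaining subcase is $d_{n-1} \le 2$, which together with the monotonicity $d_{n-1} \ge d_n \ge 2$ forces $d_{n-1} = d_n = 2$; this is exactly the hypothesis of Lemma \ref{lem:2pbe-d1leq4}, whose conclusion (using again $d_1 \le n-1$ and $d \in \cD_{\le 4}$) gives an OP+2 realization.

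These three subcases ($d_n = 1$; $d_n \ge 2$ and $d_{n-1} \ge 3$; $d_n = d_{n-1} = 2$) cover every nonincreasing sequence $d$, so the theorem follows. The main (and only) obstacle is not a mathematical one but a bookkeeping one: one must check that the lemma hypotheses actually line up with the partition, in particular that the assumption $d_1 \le n-1$ is carried into the two lemmas that need it, and that the OP+1 output of Lemma \ref{lem:large dn} is acceptable in lieu of OP+2. Both checks are immediate, so the proof is a short three-line case analysis with no computation.
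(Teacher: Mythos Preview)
Your proposal is correct and matches the paper's own proof, which simply states that the theorem follows by combining Lemmas~\ref{lem:2pbe-d1leq4}, \ref{lem:large dn}, and~\ref{lem:2pbe-d1leq4 with 1}. Your explicit case split on $(d_{n-1},d_n)$ is exactly the right way to see that these three lemmas are exhaustive, and your bookkeeping checks (carrying $d_1\le n-1$ where needed, and noting that OP+1 implies OP+2) are all in order.
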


%%%%%%%%%%%%%%%%%%%%%%%%%%%%%%%%%%%%%%%%%%%%%%%%%%%%%%%%%%%%%%%%%%%%%%%%%%%%
\subsection{A 2PBE realization for sequences with $d_1\geq 5$}

In this section, we analyze the subclass $\cD_{\ge 5} = \cD \setminus \cD_{\le 4}$
of sequences whose largest degree satisfies $d_1\ge 5$.
We first consider the special case of \emph{maximal-volume} sequences
in $\cD_{\ge 5}$, namely, ones whose volume equals exactly $4n-6$.
By \Cref{lem:outerplanar-sum-bound-omega1}, $\multipl_1 = 0$.
Combining this with \Cref{lem:op_nec_deg23}, $\multipl_2 \geq 2$.
Since \Cref{lem:OP-4n-1} has already considered the case for
$\multipl_2=2$, it remains to consider sequences $d$ where $\multipl_2 > 2$.
This class of sequences turns out to be the most challenging.
Their characterization is covered by the next key lemma, whose proof
is by far the most complex part of our analysis.

\begin{lemma}
\label{lem:BE}
Let $d\in\cD_{\ge 5}$.
%$d = (d_1, \ldots, d_n)$ be a nonincreasing sequence of positive integers.
If
\begin{inparaenum}[(1)]
\item $\sum d = 4n - 6$, \label{eq:sumD}
\item $2 \multipl_2 + \multipl_3 \leq n + 1$,
\item $\multipl_2 > 2$, and  \label{assumption:omega-two}
%$\multipl_2 > 2$,
%\item $d_1 > 4$,
\item $d_n = 2$,
\end{inparaenum}
Then $d$ has
%a 2PBE
an OP+bi realization.
%Moreover, in the realizing graph $G$, one of the pages is also bipartite, i.e., $G=G_1 \cup G_2$ where $G_1$ is outer-planar and $G_2$ is bipartite outer-planar.
%dror{By Lemma 8, if $d$ is MOP with $d_1 > 3$ and $\multipl_2 > 2$, we have that $2\multipl_2 + \multipl_3 \leq n$. Also, by Lemma 3, if $d$ is MOP, then $d_n = 2$.}
%david{But we want to state this lemma without assuming that $d$ is a MOP.}
\end{lemma}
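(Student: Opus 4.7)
The plan is to reduce to \Cref{lem:OP-4n-1} and then patch the missing vertices and edges onto a bipartite second page using the BBB procedure from \Cref{sec:2}. Let $k = \multipl_2 > 2$. Since $\sum d = 4n-6$, \Cref{lem:outerplanar-sum-bound-omega1} yields $\multipl_1 = 0$.

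\textbf{Step 1 (reduction).} I would construct $d^{\ast}$ from $d$ by discarding $k-2$ of the degree-$2$ entries and distributing $2(k-2)$ unit decrements among vertices of degree $\ge 4$, never dropping any vertex below degree $3$. This yields $n^{\ast} = n - (k-2)$, $\multipl_1(d^{\ast}) = 0$, $\multipl_2(d^{\ast}) = 2$, $d^{\ast}_{n^{\ast}} = 2$, and $\sum d^{\ast} = 4n^{\ast} - 6$, making $d^{\ast}$ eligible for \Cref{lem:OP-4n-1}. Feasibility of the distribution follows from the slack identity $\sum_{d_i \ge 4}(d_i - 3) = (\sum d - 3n) - \sum_{d_i=2}(d_i-3) = n + k - 6$, which exceeds the required $2(k-2)$ since condition (2) gives $k \le (n+1)/2 \le n-2$ (using $n \ge 6$, forced by $d_1 \ge 5$).

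\textbf{Step 2 (applying \Cref{lem:OP-4n-1}).} Since $d^{\ast}$ meets the preconditions of \Cref{lem:OP-4n-1}, it admits an outerplanar realization $G^{\ast}$ whose outer face is a Hamiltonian cycle $C^{\ast}$. I designate $G^{\ast}$ as page~$1$ of the intended 2PBE.

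\textbf{Step 3 (reinsertion on a bipartite page~$2$).} Let $V_1 = \{v_1, \ldots, v_{k-2}\}$ be the discarded degree-$2$ vertices and let $V_2$ be the multiset of \emph{demand ports} created by the reductions, where a vertex of $G^{\ast}$ reduced by $\delta$ contributes $\delta$ ports at its cyclic position. I insert each $v_i$ into the cyclic order of $C^{\ast}$ at a carefully chosen slot and draw the $2(k-2)$ missing edges entirely on page~$2$, each joining some $v_i$ to a port in $V_2$. Every such edge crosses the partition $V_1 \mid V\setminus V_1$, so page~$2$ is bipartite; since page~$1$ is left untouched, it remains outerplanar.

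\textbf{Step 4 (non-crossing assignment via BBB).} To ensure page~$2$ is outerplanar with respect to the common cyclic order, I would apply BBB to the linear sequence $R$ obtained by listing, in cyclic order, the $V_1$-ports (each $v_i$ contributing two consecutive copies) and the $V_2$-ports, cutting the cycle at a port-free arc. BBB returns a perfect non-crossing matching in which every matched pair has one endpoint in $V_1$ and one in $V_2$; declaring each matched pair to be an edge completes the bipartite outerplanar page~$2$ and, hence, an OP+bi realization of $d$.

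\textbf{Expected main obstacle.} The delicate point is ensuring that BBB yields a \emph{simple} bipartite graph on page~$2$: since the two ports of a single $v_i$ are consecutive in $R$, placing them immediately next to a vertex $u \in V_2$ with deficit $\delta_u \ge 2$ would cause BBB to match both of $u$'s ports to both of $v_i$'s ports, producing a double edge. Resolving this requires exercising the joint freedom between Steps~1 and~3: spreading the $2(k-2)$ decrements across as many distinct degree-$\ge 4$ vertices as possible (a count of $n_{\ge 4} \ge k-1$ is guaranteed by condition (2)), and interleaving the $v_i$'s with the $V_2$-ports along $C^{\ast}$ so that no $v_i$-block is immediately adjacent to a multi-deficit $u$-block. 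Showing that such a joint allocation is always realizable under conditions (1)--(4), most likely through a case analysis tied to the explicit caterpillar-plus-matching structure inside \Cref{lem:OP-4n-1}'s construction of $G^{\ast}$, is where the bulk of the remaining technical work will lie.
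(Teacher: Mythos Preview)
Your high-level plan coincides with the paper's: remove $q=\multipl_2-2$ degree-$2$ vertices, subtract $2q$ units from the remaining high degrees without creating new $2$'s, apply \Cref{lem:OP-4n-1} to the resulting sequence to obtain page~1, and then realize the deficit sequence as a bipartite outerplanar page~2 between the removed vertices (call them $A$) and the reduced vertices (call them $B$). Your feasibility count in Step~1 matches the paper's Claim that $\hat d$ meets the preconditions of \Cref{lem:OP-4n-1}, and you correctly isolate the hard step: making page~2 simultaneously simple and non-crossing for the \emph{given} cyclic order $\pi$ that page~1 imposes on $B$.

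Where you diverge is in the resolution of that obstacle. You propose to jointly tune (i) which high-degree vertices absorb the decrements and (ii) where the $A$-vertices are inserted, guided by the internal caterpillar structure of \Cref{lem:OP-4n-1}. The paper does \emph{not} attempt to control $\pi$ at all. Instead it splits into two cases by comparing $q$ with $\ell=|\{i:d_i\ge 5\}|$. When $q<\ell$ the second page is simply a path and the issue disappears. When $q\ge\ell$ the paper partitions $A=A_1\cup A_2$ and $B=B_1\cup B_2$ (with $|B_1|=|A_1|$, $|B_2|=\ell$) and builds page~2 in three layers: a trivial matching $A_1\leftrightarrow B_1$ (each $a_i$ placed immediately beside its partner), a star system $B_2\to A_1$ computed by BBB on the remaining single free port of each $A_1$-vertex, and a zigzag path $B_2\text{--}A_2\text{--}B_2$. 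Because each $A_1$-vertex is guaranteed two \emph{distinct} neighbours (one in $B_1$, one in $B_2$) and each $A_2$-vertex sits between two distinct $B_2$-vertices, simplicity is automatic; crossings created by the zigzag are then removed by an explicit local ``correction'' algorithm that swaps endpoints within each $A_1^i$ block. So the paper's proof works for arbitrary $\pi$, whereas your sketch still owes an argument that a good joint placement always exists---plausible, but a genuinely different (and as yet incomplete) route.
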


%%%%%%%%%%%%%%%%%%%%%%%%%%%%%%%%%%%%%%%%%%%
\begin{figure}[ht]
%\vspace{-10pt}
\centering
\begin{footnotesize}
\begin{tikzpicture}[scale=0.4]	
\def\pathlen{5}
\def\pathspacing{5}
\pgfmathtruncatemacro{\plenplusone}{\pathlen + 1}
			
\foreach \x in {1,...,\pathlen} {
				
	\pgfmathtruncatemacro{\xcoord}{\x * \pathspacing}
	\node [label={[below = 8mm]$v_{\pi(\x)}$},minimum size= 4mm,fill=cyan] (v\x) at (\xcoord,0)  {};
			}
\foreach \x in {1,...,4} {
				
	\pgfmathtruncatemacro{\xcoord}{2 + \x *  \pathspacing}
	\node [label={[below = 8mm]$u_{\x}$},minimum size= 4mm,fill=red,circle] (u\x) at (\xcoord,0)  {};
			}
		
	\path [draw=black,ultra thick] (v1) [bend left=30] edge (u1);	
	\path [draw=black,ultra thick] (u1) [bend left=30] edge (v2);
	\path [draw=black,ultra thick] (v2) [bend left=30] edge (u2);		
	\path [draw=black,ultra thick] (u2) [bend left=30] edge (v3);
	\path [draw=black,ultra thick] (v3) [bend left=30] edge (u3);
	\path [draw=black,ultra thick] (u3) [bend left=40] edge (v5);
	
	\path [draw=black,ultra thick] (v4) [bend left=30] edge (u4);
	\path [draw=black,ultra thick] (u4) [bend left=30] edge (v5);

\end{tikzpicture}
\end{footnotesize}
\caption{\small\sf
Illustration for the way in which the graph $G'$ constructed in the proof
of Case A of Lemma~\ref{lem:BE} can be embedded into the page of a book.
The blue rectangles are B's vertices. In the example, $j = 4$.}
\label{fig:path-book-spine}
\end{figure}
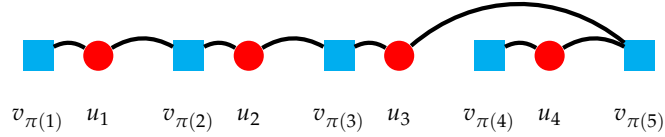
%\vspace{-15pt}
%%%%%%%%%%%%%%%%%%%%%%%%%%%%%%%

%%%%%%%%%%%%%%%%%%%%%%%%%%%%%%%%%%
\begin{figure}[ht]
\begin{center}
\includegraphics[width=2in]{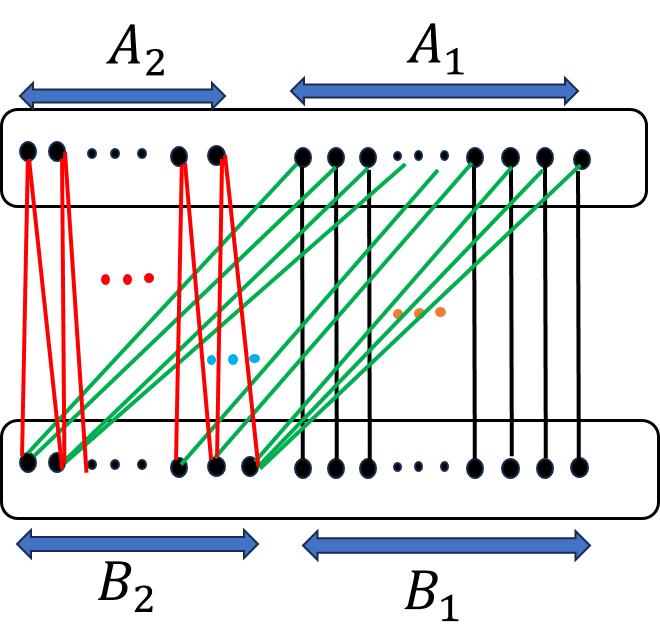}
%\vspace{-10pt}
\end{center}
\caption{\small\sf Illustration of the construction of $Bar(A,B)$.
The black edges represent the matching between $A_1$ and $B_1$.
The green edges represent the union of disjoint stars between $B_2$ and $A_1$.
The red edges represent the zigzag path between $A_2$ and $B_2$}
\label{fig:bipartite-graph}
\end{figure}
%%%%%%%%%%%%%%%%%%%%%%%%%%%%%%%%%%

%%%%%%%%%%%%%%%%%%%%%%%%%%%%%%%%%%%%%%%%%%%%%%%%%
\begin{figure}[ht]
\centering
\begin{footnotesize}
	\begin{tikzpicture}[scale=.7]
		\node [label={[below = 7mm]$v_{\pi(1)}$},minimum size= 4mm,fill=cyan] (v1) at (0.5,0)  {};
		\node [label={[below = 7mm]$v_{\pi(2)}$},minimum size= 4mm,fill=cyan,circle] (v1) at (2,0)  {};
		\node [label={[below = 7mm]$a_2$},minimum size= 4mm,fill=red,circle] (s1) at (2.8,0)  {};
		\path [draw=black,ultra thick] (v1) [bend left=30] edge (s1);
		\node [label={[below = 7mm]$v_{\pi(3)}$},minimum size= 4mm,fill=cyan,circle] (v1) at (4,0)  {};
		\node [label={[below = 7mm]$a_3$},minimum size= 4mm,fill=red,circle] (s1) at (4.8,0)  {};
		\path [draw=black,ultra thick] (v1) [bend left=30] edge (s1);
		\node [label={[below = 7mm]$v_{\pi(4)}$},minimum size= 4mm,fill=cyan] (v1) at (6,0)  {};
		\node [label={[below = 7mm]$v_{\pi(5)}$},minimum size= 4mm,fill=cyan,circle] (v1) at (7.4,0)  {};
		\node [label={[below = 7mm]$a_5$},minimum size= 4mm,fill=red,circle] (s1) at (8.2,0)  {};
		\path [draw=black,ultra thick] (v1) [bend left=30] edge (s1);
		\node [label={[below = 7mm]$v_{\pi(6)}$},minimum size= 4mm,fill=cyan] (v1) at (9.4,0)  {};
		\node [label={[below = 7mm]$v_{\pi(7)}$},minimum size= 4mm,fill=cyan] (v1) at (10.8,0)  {};
		%label={[above right = \xx cm and \yy cm of node]:label\xx\yy}
		\node [label={[below = 7mm]$v_{\pi(8)}$},minimum size= 4mm,fill=cyan,circle] (v1) at (12.2,0)  {};
		\node [label={[below = 7mm]$a_8$},minimum size= 4mm,fill=red,circle] (s1) at (13.0,0)  {};
		\path [draw=black,ultra thick] (v1) [bend left=30] edge (s1);
		\node [label={[below = 7mm]$v_{\pi(9)}$},minimum size= 4mm,fill=cyan,circle] (v1) at (14.4,0)  {};
		\node [label={[below = 7mm]$a_9$},minimum size= 4mm,fill=red,circle] (s1) at (15.4,0)  {};
		\path [draw=black,ultra thick] (v1) [bend left=30] edge (s1);
	\end{tikzpicture}
\end{footnotesize}
\caption{\small\sf
The graph $G'$ after the first step of the construction.
Throughout our illustration figures,
vertices of $A_1$ (respectively, $B$) are colored red (resp., blue).
Vertices of $A_1$, $B_1$ (resp., $A_2$, $B_2$) are circular (resp., rectangular).}
\label{fig:book-emb1}
\end{figure}
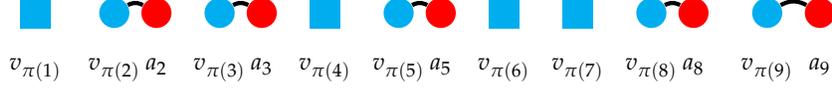
%%%%%%%%%%%%%%%%%%%%%%%%%%%%%%%%%%%%%%%%%%%%%%%%%%%%%%%%%%%%%%%%

%%%%%%%%%%%%%%%%%%%%%%%%%%%%%%%%%%
\begin{figure}[h]
\begin{center}
\includegraphics[width=5in]{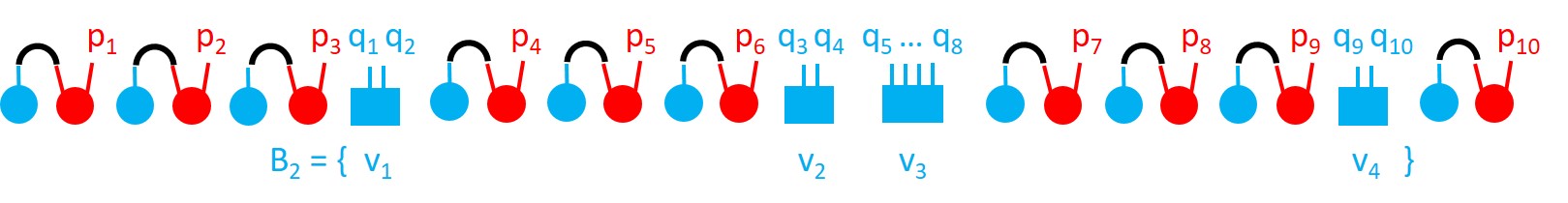}
%\vspace{-20pt}
\end{center}
\caption{\small\sf The situation before Step 2.
Vertices of $A$ (respectively, $B$) are colored red (resp., blue).
Vertices of $A_1$ and $B_1$ (respectively, $B_2$) are depicted as circles
(resp., rectangles).
The free ports of a vertex are depicted as tags attached to it.
In this example, the $d'$ degrees of the vertices of $B_2$ are
$d'(v_1)=3$, $d'(v_2)=4$, $d'(v_2)=6$ and $d'(v_4)=3$.
The ports of $B_2$ and $A_1$ vertices that have to be connected
in a non-intersecting way correspond to the input sequences of the example
in Figure \ref{fig:bipartite-balanced-brackets}.
This is done using Algorithm \BBB\ of Section \ref{sec:2}.
}
\label{fig:connecting-B2-A1}
\end{figure}
%%%%%%%%%%%%%%%%%%%%%%%%%%%%%%%%%%

%%%%%%%%%%%%%%%%%%%%%%%%%%%%%%%%%%%%%%%%%%%
\begin{figure}[h]
\begin{center}
\includegraphics[width=5in]{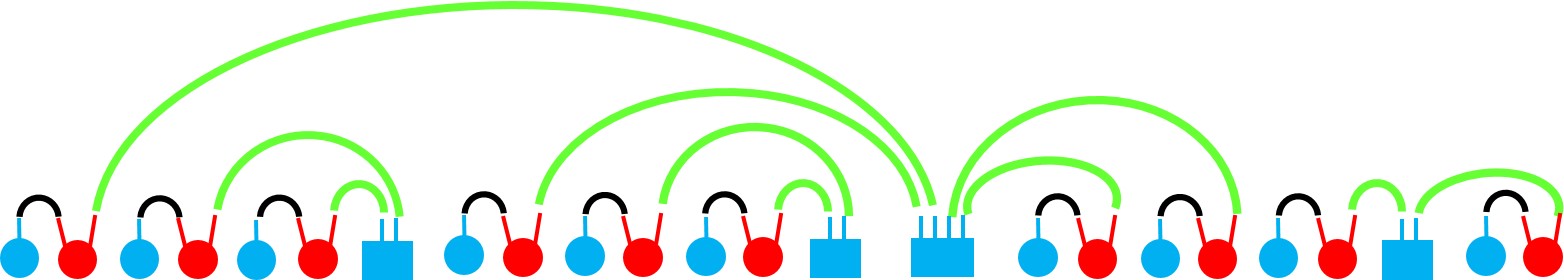}
%\vspace{-10pt}
\end{center}
\caption{\small\sf The connections obtained by Algorithm \BBB\ for the ports
of $B_2$ and $A_1$.}
\label{fig:connecting-B2-A1-solution}
\end{figure}
%%%%%%%%%%%%%%%%%%%%%%%%%%%%%%%%%%%%%%%%%%%%%

%%%%%%%%%%%%%%%%%%%%%%%
\begin{figure}[h]
\begin{center}
\includegraphics[width=5in]{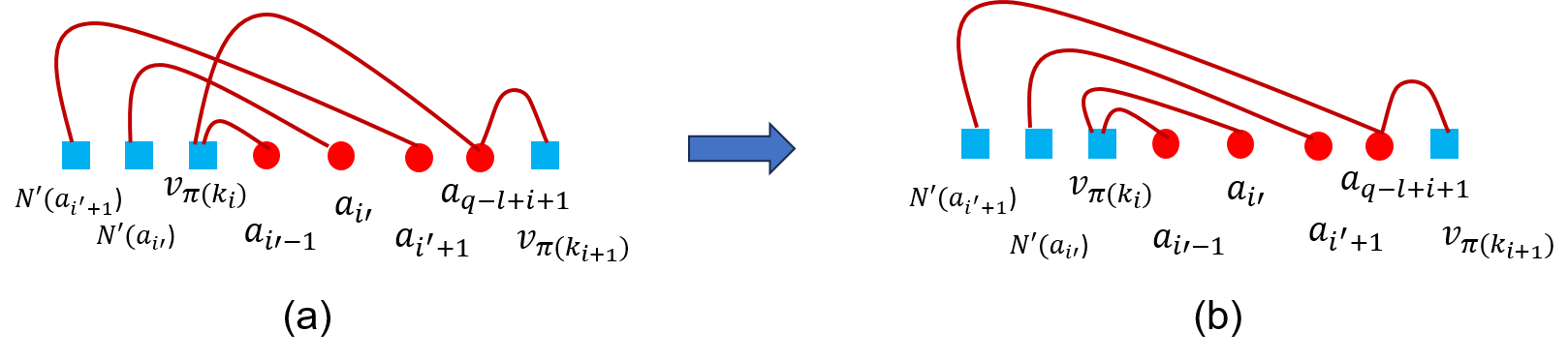}
%\vspace{-20pt}
\end{center}
\caption{\small\sf Illustration of Algorithm \ref{alg:correction},
which transforms the crossing edges to be noncrossing.
(a) The graph obtained after placing the vertices of $A_2$ in the first part
of step 3. In this graph, the edge $(v_{\pi(k_i)},a_{q-\ell+i+1})$ crosses
the edges $(N'(a_{i'+1}),a_{i'+1})$ and $(N'(a_{i'}),a_{i'})$.
(b) The resulting graph after the transformation. The three edges involvesd
in the crossings are replaced by the edges
$(v_{\pi(k_i)},a_{i'}), (N'(a_{i'+1}),a_{q-\ell+i+1}), (N'(a_{i'}),a_{i'+1})$.}
\label{fig:correction}
\end{figure}
%%%%%%%%%%%%%%%%%%%%%%

%%%%%%%%%%%%%%%%%%%%%%%%%%%%%%%%%%%%%%%%%%%%%%%%%
\begin{figure}[h]
\begin{center}
\includegraphics[width=5in]{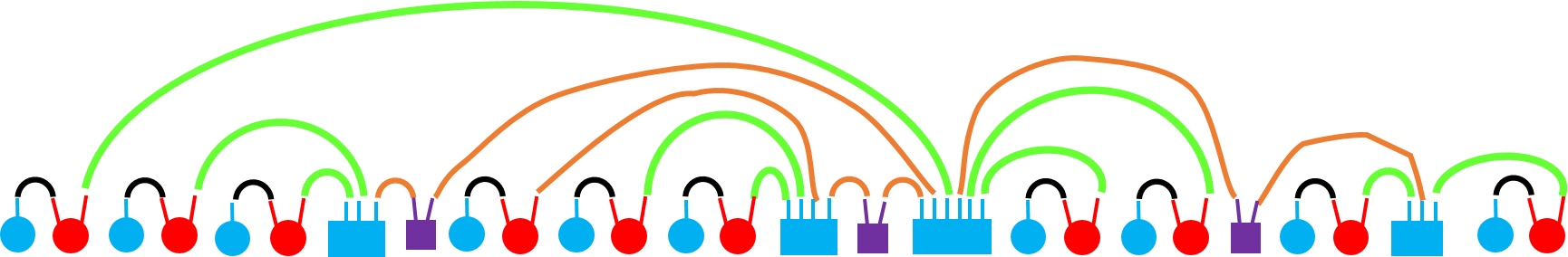}
%\vspace{-10pt}
\end{center}
\caption{\small\sf The graph after Algorithm \ref{alg:correction} based on
Figure \ref{fig:connecting-B2-A1-solution}. The purple squares represent
the vertices in $A_2$. The orange edges represent the modified edges based on
Figure \ref{fig:connecting-B2-A1-solution}.}
\label{fig:final-graph}
\end{figure}
%%%%%%%%%%%%%%%%%%%%%%%%%%%%%%%%%%%%%%%%%%%%%%%%%%%%%%%%

%%%%%%%%%%%%%%%%%%%%%%%%%%%%%%%%%%%%%
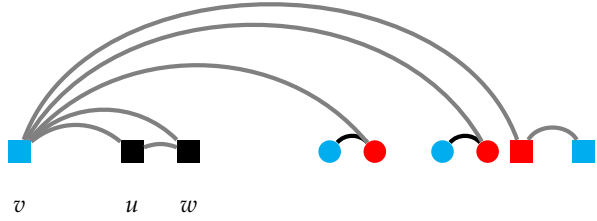
\begin{figure}[h]
	\centering
	\begin{footnotesize}
		\begin{tikzpicture}[scale=.75]
			
			\node [label={[below = 7mm]$v$},label={[above = 4mm]$$},minimum size= 3mm,fill=cyan] (v1) at (-4,0)  {};
			
			\node [label={[below = 7mm]$u$},label={[above = 4mm]$$},minimum size= 3mm,fill=black] (u) at (-2,0)  {};
			
			\node [label={[below = 7mm]$w$},label={[above = 4mm]$$},minimum size= 3mm,fill=black] (w) at (-1,0)  {};
			\path [draw=gray,ultra thick] (v1) [bend left=40] edge (u);
			\path [draw=gray,ultra thick] (v1) [bend left=45] edge (w);
			\path [draw=gray,ultra thick] (u) [bend left=20] edge (w);
			
			\node [label={[below = 7mm]$$},minimum size= 3mm,fill=cyan,circle] (v2) at (1.5,0)  {};
			\node [label={[below = 7mm]$$},minimum size= 3mm,fill=red,circle] (s1) at (2.3,0)  {};
			\path [draw=black,ultra thick] (v2) [bend left=50] edge (s1);

			\node [label={[below = 7mm]$$},minimum size= 3mm,fill=cyan,circle] (v3) at (3.5,0)  {};
			\node [label={[below = 7mm]$$},minimum size= 3mm,fill=red,circle] (s2) at (4.3,0)  {};
			\path [draw=black,ultra thick] (v3) [bend left=50] edge (s2);
			
			\node [label={[below = 7mm]$$},minimum size= 3mm,fill=red] (u1) at (4.9,0)  {};
			\path [draw=gray,ultra thick] (v1) [bend left=70,fill=green] edge (u1);
			\path [draw=gray,ultra thick] (v1) [bend left=50] edge (s1);
			\path [draw=gray,ultra thick] (v1) [bend left=60] edge (s2);

			\node [label={[below = 5mm]$$},label={[above = 4mm]$$},minimum size= 3mm,fill=cyan] (v4) at (6,0)  {};
			\path [draw=gray,ultra thick] (u1) [bend left=60] edge (v4);
			
		\end{tikzpicture}
	\end{footnotesize}
	\caption{\small\sf The black squares represent the added vertices $u,w$ and they are connected to $v$.}
\label{fig:book-emb3}
\end{figure}
%%%%%%%%%%%%%%%%%%%%%%%%%%%%%%%%%%%%%%%

\begin{proof}
Let $d$ be as in the lemma.
To prove the claim, we define sequences $d', d''$,
of lengths $n', n''$ respectively, such that $d = d' \oplus d''$.

Intuitively, it would appear that if property (1) of the lemma holds,
then $d'$ and $d''$ can be defined such that
%$d$ can be decomposed (component-wise) into $d=d' \oplus d''$
%where
$\sum d' = 2n'-4$ and $\sum d'' = 2n''-2$, so both $d'$ and $d''$
can be realized by a forest and embedded on its own page,
implying a 2-page book embedding.
Proving this intuition is complicated by the need to ensure two properties:
\begin{inparaenum}[(a)]
\item that no edge $(u,w)$ occurs in \emph{both} pages, and
\item that the vertices occur in the same order on the outer face of both pages.
\end{inparaenum}

Instead, we select the sequence $d''$ (ignoring the $0$-degrees) so that it
meets the preconditions of \Cref{lem:OP-4n-1}, i.e., $d''$ satisfies
\begin{inparaenum}[(i)]
%\begin{enumerate}[(i.)]
\item \label{item:omega2}
        $\multipl''_2 = 2$,
\item \label{item:dn2}
        $d''_{n''} = 2$,  and
%$d''_{n''} \geq 2$
\item $d" \in \cD$, namely, $\sum d'' \leq 4 n'' - 6$ and
%$\sum d$
$\sum d''$ is even.
\label{item:sum}
\end{inparaenum}
Then, Lemma~\ref{lem:OP-4n-1} provides an outerplanar realization $G''$ of $d''$
and its embedding fills the first page of the book.
The embedding of $G''$ imposes an ordering on the vertices,
as they appear along the spine of the book
(which is the same ordering as on the outer-cycle of the embedding).
For the second page, we construct an outerplanar realization $G'$ of $d'$
such that the vertices follow the same ordering along the book spine.
To obtain $d''$, we apply the following two transformations to $d$.
\begin{description}
	\item[(TR1)] remove $q := \multipl_2 - 2$ degree-2 entries from $d$ to meet requirement~\eqref{item:omega2}.
	\item[(TR2)] decrease the total volume of the remaining sequence
	  by $2 q$ to satisfy requirement~\eqref{item:sum},
          while carefully avoiding the creation of new degrees $1$ or $2$.
\end{description}
%hl{(TR1)} First, we simply remove $q := \multipl_2 - 2$ degrees $2$
%hl{from $d$} to meet requirement~\eqref{item:omega2}.
%hl{(TR2)} Second, we decrease the total volume of the remaining sequence
%by $2 q$ to satisfy requirement~\eqref{item:sum}.
%Here, we need to be careful and avoid creating new degrees $1$ or $2$.
Denote the number of degrees that are $5$ or more by
$\ell := $$\sum_{i \geq 5} \multipl_i = n - \multipl_2 - \multipl_3 - \multipl_4$.
%, and the number of degrees that are $4$ or more by
%$h := n - \multipl_2 - \multipl_3$.
The remaining proof, describing the construction of $G'$ for $d'$,
is split into two cases, depending on the relation between $q$ and $\ell$.

\paragraph{Case A:}
$q < \ell$.
In this easier case, a path sequence $d' = (2^{n'-2},1^2)$ is a suitable choice,
and the construction of $G'$ is simplified considerably.

% Assume that $q \leq \ell$ holds.
%
We define the sequence $d'$ to have length $n'=n$.
It contains vertices of degree $0$ and we do \emph{not} consider it in
non-increasing order, to satisfy the constraint $d = d' \oplus d''$.
Once $d'$ is defined, $d''$ is implicitly defined as well,
as $d''=d\ominus d'$. It can be regarded as having length $n''=n$
where $d''_i = 0$ if $i > n - q$.
\begin{align*}
d_i' & =
\begin{cases}
1, & \text{ if } i = 1, \\
2, & \text{ if } i = 2, \ldots, q, \\
1, & \text{ if } i = q + 1, \\
0, & \text{ if } i = q+2, \ldots, n - q, \\
2, & \text{ if } i = n - q + 1, \ldots, n.
\end{cases}
&
d_i'' & =
\begin{cases}
d_i - 1, & \text{ if } i = 1, \\
d_i - 2, & \text{ if } i = 2, \ldots, q, \\
d_i - 1, & \text{ if } i = q + 1, \\
d_i, & \text{ if } i = q+2, \ldots, n - q, \\
0, & \text{ if } i = n - q + 1, \ldots, n.
\end{cases}
\end{align*}

We use the sequence $d=(6^2,5^4,4,3^2,2^6)$ as a running example for Case~A.
In this case, $n=15$, $q=4$, and $\ell=6$.
It is not hard to verify that $d$ satisfies the conditions of this lemma and the condition of Case~A.
We have that $d'=(1,2^3,1,0^2,2^4)$, $d''=(5,4,3^2,4,5,4,3^2,2^2,0^4)$.

%\[
%d_i' =
%\begin{cases}
%1, & \text{ if } i = 1 \\
%2, & \text{ if } i = 2, \ldots, q, \\
%1, & \text{ if } i = q + 1, \\
%0, & \text{ if } i = q+2, \ldots, n - q, \\
%2, & \text{ if } i = n - q + 1, \ldots, n.
%\end{cases}
%\]
%As $d = d' \oplus d''$, we have implicitly defined $d''$ as well.
%It can be regarded as having length $n''=n$ where $d''_i = 0$ if $i > n - q$.
%It follows that
%\[
%d_i'' =
%\begin{cases}
%d_i - 1, & \text{ if } i = 1, \\
%d_i - 2, & \text{ if } i = 2, \ldots, q, \\
%d_i - 1, & \text{ if } i = q + 1, \\
%d_i, & \text{ if } i = q+2, \ldots, n - q, \\
%0, & \text{ if } i = n - q + 1, \ldots, n.
%\end{cases}
%\]
%
%
For the next step, consider $\hat d = (d''_1,d''_2, \ldots, d''_{n-q})$,
which is $d''$ without the trailing $0$ degrees. In the above example,
we have that
$\hat{d}=(5^2,4^3,3^4,2^2)$.
For this $\hat d$, whose length is $\hat{n} = n-q$, we show the following.
%the shorter sequence (of length $n-q$)
%satisfies the preconditions of Lemma 10
%\Cref{lem:OP-4n-1}.
%Lemma 10
% \Cref{lem:OP-4n-1}
%generates
%implies an outerplanar realization to which we add $q$ isolated vertices.
%It follows that the resulting graph is an outerplanar realization of
%the original sequence $d''$.
%dror{We claim that we show $\hat{d}$ satisfies the preconditions of Lemma 10, then we state a claim about it. However, the proof is commented out.}

%\smallskip
%\noindent
%{\bf Claim.~}
\begin{claim}
\label{cl:in part A}
Sequence $\hat d$ satisfies the preconditions of Lemma~\ref{lem:OP-4n-1}.
\end{claim}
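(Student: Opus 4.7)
The plan is to verify the three preconditions of Lemma~\ref{lem:OP-4n-1}: namely that $\hat d \in \cD$, that $\hat d_{\hat n} = 2$, and that $\multipl_2(\hat d) = 2$.

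First, I would compute $\sum \hat d$ directly from the explicit definition of $d'$. The values of $d'$ restricted to the first $n-q$ positions sum to $1 + 2(q-1) + 1 = 2q$, and the last $q$ positions of $d$ (each equal to $2$) are zeroed out by $d'$, contributing another $2q$ to the removed volume. Combined with assumption~(1) that $\sum d = 4n-6$, this gives
\[
\sum \hat d ~=~ (4n - 6) - 4q ~=~ 4\hat n - 6, \qquad \hat n = n - q,
\]
which is even and attains the upper bound $4\hat n - 6 - 2\multipl_1(\hat d)$ provided $\multipl_1(\hat d)=0$. The lower bound $2\hat n \le \sum \hat d$ reduces to $\hat n \ge 3$, which is clear since $\hat d$ retains at least one high-degree entry and two degree-$2$ entries.

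Next, I would verify the multiplicities, which is the crux and uses the case hypothesis $q<\ell$. Because exactly $\ell$ positions of $d$ have degree $\ge 5$, the inequality $q+1 \le \ell$ ensures $d_i \ge 5$ for every $i \in \{1,\ldots,q+1\}$. The definition of $d'$ decreases these entries by $1,2,\ldots,2,1$, so each becomes at least $3$ in $\hat d$. The middle positions $q+2,\ldots,n-q-2$ are unchanged; a short index calculation from $\multipl_2 = q+2$ and $q<\ell$ shows $n \ge 2q+3$, so these positions contain only original entries of degree $\ge 3$. Finally, positions $n-q-1$ and $n-q$ of $d$ are degree-$2$ entries and are kept unchanged. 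Altogether $\hat d$ has no entry of value $0$ or $1$ and exactly two entries of value $2$, yielding $\multipl_1(\hat d)=0$, $\multipl_2(\hat d)=2$, and (after sorting if necessary) $\hat d_{\hat n} = 2$. Combined with the volume computation, this establishes that $\hat d \in \cD$ and satisfies all preconditions of Lemma~\ref{lem:OP-4n-1}.

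The main obstacle is the bookkeeping needed to confirm that (TR1) and (TR2) do not accidentally create new entries of degree $1$ or $2$; this is exactly what the case hypothesis $q<\ell$ is designed to prevent, by guaranteeing that every position whose value is altered starts at degree at least $5$ and therefore remains at degree at least $3$ after the subtractions.
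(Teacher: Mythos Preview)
Your argument is correct and follows essentially the same route as the paper: compute $\sum \hat d = \sum d - 4q = 4\hat n - 6$, then use the Case~A hypothesis $q<\ell$ to conclude that every altered position started at degree $\ge 5$ and hence stays $\ge 3$, leaving exactly the two original degree-$2$ entries at positions $n-q-1$ and $n-q$. You are in fact more careful than the paper in one respect: you explicitly verify the lower bound $2\hat n \le \sum\hat d$ (equivalently $\hat n\ge 3$) needed for $\hat d\in\cD$, which the paper's proof leaves implicit.
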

%\smallskip

\begin{proof}
By construction of $d''$, we have that $\sum \hat d = \sum d"$ is even,
and that $\hat{d}_{\hat{n}} = d''_{n -q} = 2$.
Observe that $\hat{\multipl}_2 = \multipl''_2 = 2$ as $q < \ell$ implies that
$d_{q+1} \geq 5$.
% and $\multipl''_2 = 2$.
%david{Didn't you say that the last statement requires proof/justification,
%by the fact that $q<\ell$?}
To complete the proof,
%verify that $\sum \hat d =
% 4 (n-q) - 6
%4\hat{n} - 6$ holds,
observe that
\[
\sum \hat d = \sum d - 4 q \leq
%%%%%%%4n - 6 - 4(\multipl_2 - 2) =
4n - 6 - 4q = 4(n-q) - 6 = 4\hat{n} - 6. \mbox{\hspace{2cm}}
\Box
\]
\end{proof}

By \Cref{lem:OP-4n-1}, $\hat{d}$ can be realized as the outerplanar graph
$\hat{G}$. (For our example, this graph is given in \Cref{fig:example-case-A}.)
We add $q$ isolated vertices to $V(\hat G)$ to generate an outerplanar
realization $G'' = (V,E'')$ of $d''$.
Let $V = \{ v_1, v_2, \ldots, v_n \}$ such that $\deg_{G''}(v_i) = d''_i$,
for $i = 1,2,\ldots,n$.
Hence, the set $R := \{ v_{n - q + 1}, \ldots, v_{n} \}$ contains the isolated vertices of degree $0$.
Consider an outerplanar embedding of $\hat G$, and note that the isolated
vertices in $R$ can be placed in arbitrary positions
%outside the outer-cycle
on the cyclic order to yield a planar embedding of $G''$.

%\vspace{-10pt}
%%%%%%%%%%%%%%%%%%%%%%%%%%%%%%%%%%%%%%%%%%%%%%%%%
%
% tikz version
%
\begin{figure}[t]
%\vspace{-5pt}
\centering
%\begin{small}
\begin{footnotesize}
\begin{tikzpicture}[scale=0.3]

\begin{scope}[
every node/.style={fill=blue,circle,inner sep=2pt},
]

\node(Pi1) at (0,0) {};
\node(Pi2) at (-3,4) {};
\node(Pi3) at (0,4) {};
\node(Pi4) at (9,4) {};
\node(Pi5) at (6,0) {};

\end{scope}

\begin{scope}[
every node/.style={fill=green,circle,inner sep=2pt},
]

\node(X1) at (-3,0) {};
\node(X2) at (3,0) {};
\node(X3) at (3,4) {};
\node(X4) at (12,0) {};
\node(X5) at (12,4) {};
\node(X6) at (15,4) {};

\path (Pi1) [-] edge (Pi2);
\path (Pi1) [-] edge (Pi3);
\path (Pi2) [-] edge (Pi3);
\path (Pi4) [-] edge (Pi5);

\end{scope}

\begin{scope}[
every edge/.style={draw=black,thick}
]

\path (Pi1) [-] edge (Pi2);
\path (Pi1) [-] edge (Pi3);
\path (X3) [-] edge (Pi1);
\path (X2) [-] edge (X3);
\path (X3) [-] edge (Pi5);
\path (Pi4) [-] edge (Pi5);
\path (X4) [-] edge (Pi4);
\path (X4) [-] edge (X5);

\end{scope}

\begin{scope}[
every edge/.style={draw=red,thick}
]

\path (X1) [-] edge (Pi1);
\path (X2) [-] edge (Pi1);
\path (X2) [-] edge (Pi5);
\path (X4) [-] edge (Pi5);
\path (X4) [-] edge (X6);
\path (X5) [-] edge (X6);
\path (X5) [-] edge (Pi4);
\path (X3) [-] edge (Pi4);
\path (X3) [-] edge (Pi3);
\path (Pi2) [-] edge (Pi3);
\path (X1) [-] edge (Pi2);

\end{scope}

\node () at (0,-1.2) {$v_{\pi(1)}$};
\node () at (-3.1,5.2) {$v_{\pi(2)}$};
\node () at (0.3,5.2) {$v_{\pi(3)}$};
\node () at (9,5.2) {$v_{\pi(4)}$};
\node () at (6,-1.2) {$v_{\pi(5)}$};

\end{tikzpicture}
%\end{small}
\end{footnotesize}
\caption{\small\sf The graph realizing $\hat{d}$.
Blue circles: vertices in $B$. Green circles: vertices in $V(\hat{G})\setminus B$.
}
\label{fig:example-case-A}
\end{figure}
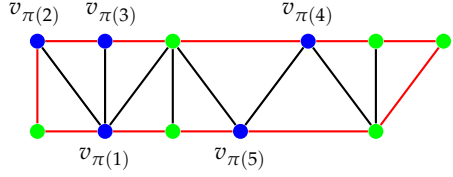
%
% PPT version
%
%\begin{figure}[h]
%\vspace{-20pt}
%\begin{center}
%\includegraphics[width=2in]{}
%\vspace{-10pt}
%\end{center}
%\caption{\small\sf The graph realizing $\hat{d}$.
%Blue circles: vertices in $B$. Green circles: vertices in $V(\hat{G})\setminus B$.
%%%%%The blue circles represent the vertices in $B$. The green circles represent the vertices in $V(\hat{G})\setminus B$.
%}
%\label{fig:example-case-A}
%\end{figure}
%%%%%%%%%%%%%%%%%%%%%%%%%%%%%%%%%%%%%%%%%%%%%%%%

Let $B := \{ v_1,v_2, \ldots,v_{q+1} \}$ be the set of vertices whose degrees
were reduced in $d''$ compared to the original $d$ and whose original degree is at least 5.
Let $\pi$ be the permutation by which the vertices of $B$ appear
%along the cyclic-order (relative to the other vertices)
%in the planar embedding of $G''$, i.e., in the order
%$(v_{\pi(1)}, v_{\pi(2)}, \ldots, v_{\pi(q+1)})$
in the cyclic-order defined by the planar embedding of $G''$.
Without loss of generality, let $\pi(1) = 1$.
Hence, there is an embedding of $G''$ into the page of a book where
the vertices in $B$ are arranged along the spine
%according to $\pi$
in the order $(v_{\pi(1)}, v_{\pi(2)}, \ldots, v_{\pi(q+1)})$.
The vertices in $R$ can be placed arbitrarily along the spine.

In our example
$\hat{G}$, $B=\{v_1,\ldots,v_5\}$, and its vertices are arranged
in the outer-cycle in the order
$v_{\pi_1},\ldots,v_{\pi_5}$, as depicted in \Cref{fig:example-case-A}.

To finish the proof of case A, construct a realization $G'=(V,E')$
of $d'$ such that
$\deg_{G'}(v_i) = d'_i$, for $i = 1,2,\ldots,n$.
The vertices $v_{q+2}, \ldots, v_{n - q}$ have vertices of degree $0$ in $G'$
and are ignored when constructing $G'$.
The vertices in $R$ must have degree $2$ in $G'$.
%and for ease of notation, we relabel them as follows. Let $u_i = v_{n - q + i}$, for $i = 1,2, \ldots, q$.} \
%david{Get rid of $u_i$ - more confusion than benefit.
%
In the embeddings of $G''$ and $G'$,
we place vertex $v_{n - q + i}$ between $v_{\pi(i)}$ and $v_{\pi(i+1)}$
along the spine, for $i  = 1, \ldots, q$.
Let index $j$ be such that $d''_{\pi(j)} = d_{q+1} - 1$.
For every $i = 1,\ldots$, $\pi(j)-2$, $\pi(j)$, $\pi(j)+1, \ldots$, $q$,
add to $G'$ the two edges
$(v_{\pi(i)},v_{n - q + i})$ and $(v_{n - q + i}, v_{\pi(i+1)})$.
Finally, add the edges
$(v_{\pi(j)-1}, v_{n - q + j})$ and $(v_{n - q + j}, v_{q+1})$.

By construction, $V(G'') = V(G')$.
By definition of $d'$ and $d''$,
it follows that
$\deg_{G'}(v_i) + \deg_{G''}(v_i) = d_i$ for $i = 1,2, \ldots, n$.
The vertices in $B \cup R$ form a path in $G'$ where $v_1$ and $v_{\pi(j)}$
are the two ends (vertices of degree $1$).
Finally, observe that $E'' \cap E' = \emptyset$
since each edge of $G'$ is incident to a vertex in $R$, which has vertices of
degree 0 in $G''$.
It follows that $G''$ and $G'$ form a valid book embedding
of a graph that realizes $d$. Notice that $G'$ is bipartite outerplanar
and $G''$ is an outerplanar graph.

Notice that in our example $v_{\pi(1)}=v_1$, $v_{\pi(2)}=v_3$, $v_{\pi(3)}=v_4$, $v_{\pi(4)}=v_2$, $v_{\pi(5)}=v_5$.
Then in $G'$, $d'(v_{\pi(1)})=1$, $d'(v_{\pi(2)})=d'(v_{\pi(3)})=d'(v_{\pi(4)})=2$, $d'(v_{\pi(5)})=1$, $R=\{u_1,\ldots,u_q\}$.
Figure~\ref{fig:path-book-spine} shows a valid embedding of $G'$
into the page of a book.
This completes the proof of Case A.

\paragraph{Case B:}
$q \geq \ell$.
This case is significantly more complicated than case A.
Note that as $2 \multipl_2 + \multipl_3 \leq n+1$,
it follows that
\[\multipl_4 = n - \multipl_2 - \multipl_3 - \ell \geq \multipl_2 - 1 - \ell = q+1 - \ell > 0~.\]
Also note that $w_4 + \ell \geq q+1$.

The construction generalizes that of case A.
Similar to case A, we
obtain sequence $d''$ from $d$ by removing
%$\multipl_2 - 2$
$q=\multipl_2-2$ vertices of degree $2$.
%, i.e., \emph{moving} them to $d'$.
The next steps, though, are more complicated than in case A.
The reason is that,
if we set (like above) $d''_i = d_i - 2$, for $i = 2,3, \ldots, q$,
then some vertices of degree $4$ in $d$ are decreased to $2$ in $d''$
and requirement~\eqref{item:omega2}, %of Lemma~\ref{lem:OP-4n-1},
$\multipl''_2 = 2$, might not be met.

Instead, the sequences $d'$ and $d''$ have to be defined differently.
In particular,
to define $d''$, we reduce only the degrees that are \emph{larger} than $4$
by $2$ (or more), and we reduce $q - \ell + 1$ vertices of degree $4$ to $3$.
At first, let sequence $\bar d$ be such that
\[
\bar{d}_i =
\begin{cases}
  %	d_i - 1, & \text{ if } i = 1,  \\
  d_i - 2, & \text{ if } i = 1, \ldots, \ell, ~~~~~~~~~~~~~~~~~(\text{where } d_i \geq 5)
  \\
  d_i - 1, & \text{ if } i = \ell + 1, \ldots, q+1 ~~~~~(\text{where } d_i = 4)
  \\
  d_i, & \text{ if } i = q+2, \ldots, n - q, ~~~(\text{where } 4 \geq d_i \geq 2)
  \\
	0, & \text{ if } i = n - q + 1, \ldots, n. ~~~ (\text{where } d_i = 2)
\end{cases}
\]
%For this case, we have that $h > q$
Recall that $d$ satisfies $w_4 + \ell \geq q+1$,
which implies
that $d_{q+1} \geq 4$ and therefore $\bar{d}_{q+1} \geq 3$.
As $q = \multipl_2 - 2$ out of the $\multipl_2$ vertices of degree $2$
of $d$ are transformed to $0$, it follows that $\bar{\multipl}_2 = 2$.

Note that $\sum \bar d \leq 4(n-q) - 6$ is not guaranteed,
so requirement~\eqref{item:sum} does not necessarily hold.
However, observe that
$\sum \bar d = \sum d - 2(\ell+q) - (q -\ell + 1)$, so
%\begin{equation}
%\label{eq:decrease-volum}
%\sum \bar d
%=    4n - 6 - 3q - \ell - 1
%=    4(n-q) - 6 + (q - \ell - 1)
%\end{equation}

\begin{equation}
\label{eq:decrease-volum}
\sum \bar d
=    \sum d - 2(\ell+q) - (q -\ell + 1)
%%%=    \sum d - 3q - \ell - 1
=    4n - 6 - 3q - \ell - 1
=    4(n-q) - 6 + (q - \ell - 1)
\end{equation}

%
%\[
%\sum \bar d
%=    \sum d - 2(\ell+q) - (q -\ell + 1)
%%=    \sum d - 3q - \ell - 1
%=    4n - 6 - 3q - \ell - 1
%=    4(n-q) - 6 + (q - \ell - 1)
%\]
does hold.
To meet requirement~\eqref{item:sum} of \Cref{lem:OP-4n-1}, we set $d'' = \bar d$ and
gradually reduce $\sum_{i=1}^{\ell} d''_i$ further by decreasing the large degrees using Algorithm~\ref{alg:dprimeprime}.

%%%%%%%%%%%%%%%%%%%%%%%%%%%%%%%%%%%%%%%%%%%%%%%%%%%
\begin{algorithm}[h]
\caption{\small\sf Generating sequence $d''$ from $\bar d$.}
\label{alg:dprimeprime}
%\begin{algorithmic}
	Set $d'' \leftarrow \bar d$ \\
	\While {$\sum d'' > 4(n-q) - 6$}{
		$d''_j \gets d''_j - 1$, where $j = \text{argmax}_{i \in [\ell]} ~ d''_i$\\
	}
        $\hat d \gets\pos(d'')$.
%\end{algorithmic}
\end{algorithm}
%%%%%%%%%%%%%%%%%%%%%%%%%%%%%%%%%%%%%%%%%%%%%%%%%%%%%%%%%

%Note that, in Algorithm~\ref{alg:dprimeprime}, index $j$ is chosen from $[\ell]$.
Note that the output $\hat d = (d''_1, d''_2, \ldots, d''_{n-q})$ of
Algorithm~\ref{alg:dprimeprime} is the prefix of length ${\hat n} = n-q$
of the sequence $d''$ at the end of
Algorithm~\ref{alg:dprimeprime}, omitting the trailing zeros.

\begin{claim}
\label{claim: hatd satisfies preconditions}
The sequence $\hat d$ satisfies the preconditions of \Cref{lem:OP-4n-1}.
\end{claim}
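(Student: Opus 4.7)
The plan is to verify in turn the three preconditions required by Lemma~\ref{lem:OP-4n-1}: namely, (i) $\hat d_{\hat n}=2$, (ii) $\hat\multipl_2=2$, and (iii) $\hat d\in\cD$ (a nonincreasing sequence of positive integers with even sum and $2\hat n\le\sum\hat d\le 4\hat n-6-2\hat\multipl_1$). I would begin with the structural observation that the only entries of $\bar d$ set to $0$ are the $q$ trailing $2$'s of $d$ at positions $n-q+1,\ldots,n$, so $\pos(d'')$ has exactly $\hat n=n-q$ entries. Since Algorithm~\ref{alg:dprimeprime} touches only positions in $[\ell]$, the two surviving degree-$2$ entries (originally at positions $n-1,n$) are untouched, giving $\hat d_{\hat n}=2$ immediately.

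The core of the argument is to establish $\hat\multipl_2=2$ (and hence $\hat\multipl_1=0$) by showing no entry of $d''$ is ever reduced to $2$ or below. Entries at positions $\ell+1,\ldots,q+1$ are set to $3$ by the construction of $\bar d$ and never touched afterwards; entries at positions $q+2,\ldots,n-q$ are unchanged. The entries at positions in $[\ell]$ start at $\bar d_i=d_i-2\ge 3$, and by Eq.~\eqref{eq:decrease-volum} the algorithm performs at most $\max(0,q-\ell-1)$ iterations. I would then combine the argmax selection rule --- which distributes decrements so the minimum of $\{d''_i:i\in[\ell]\}$ drops as slowly as possible --- with a structural lower bound $\sum_{i\in[\ell]} d_i\ge 2q+4\ell-6$ obtained from $\sum d=4n-6$, $d_i=4$ for $i\in\{\ell+1,\ldots,q+1\}$, and $d_i\le 4$ for $i>q+1$. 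This yields total slack $\sum_{i\in[\ell]}(\bar d_i-3)$ large enough to absorb all iterations without any entry dropping below $3$.

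The remaining conditions for $\hat d\in\cD$ decompose into three routine checks. The upper bound $\sum\hat d\le 4\hat n-6$ is exactly the termination condition of the algorithm, and $\sum\hat d\ge 2\hat n$ is immediate from the fact that every entry is at least $2$. For parity, I would use the identity $\sum d-\sum\hat d=\ell+3q+1+k$ where $k$ is the number of iterations; substituting the generic value $k=q-\ell-1$ gives total decrement $4q$, which is even, so $\sum\hat d$ inherits the evenness of $\sum d=4n-6$. The main obstacle will be the balanced-decrement analysis supporting the multiplicity claim, and a secondary subtlety is the boundary case $q=\ell$ (where $\bar d$ already has $\sum\bar d=4\hat n-7$ and no iterations occur); this case forces the very constrained structure $\multipl_4=1$ and small $\multipl_3$, which I would handle by a direct case analysis giving either an outright outerplanar realization or a trivial OP+bi realization bypassing the generic construction.
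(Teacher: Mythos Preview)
Your core strategy matches the paper's: argue that the total slack $\sum_{i\in[\ell]}(\bar d_i-3)$ is at least the number $q-\ell-1$ of iterations, so the argmax rule never pushes a high entry below $3$, whence $\hat\multipl_2=2$ and $\hat d_{\hat n}=2$ survive. The gap is in your slack estimate. From $\sum d=4n-6$, $d_i=4$ for $\ell<i\le q+1$, and merely $d_i\le4$ for $i>q+1$, one obtains only
\[
\sum_{i\in[\ell]} d_i \;\ge\; (4n-6)-4(q-\ell+1)-4(n-q-1)\;=\;4\ell-6,
\]
whereas you need $\sum_{i\in[\ell]}d_i\ge 4\ell+q-1$; your stated bound $2q+4\ell-6$ does not follow from the listed ingredients and would in any case fail for $q\le 4$. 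The paper closes this by using the \emph{exact} tail sum $\sum_{i>\ell}d_i=4\multipl_4+3\multipl_3+2\multipl_2$ to get $\sum_{i\in[\ell]}(d_i-5)=n-6-\multipl_4+\multipl_2-2\ell$, and then the single hypothesis $\multipl_2>2$ (equivalently $n-\multipl_4-\ell=\multipl_2+\multipl_3\ge 3$) yields $\ge q-\ell-1$ in one line. Replace your crude upper bound on the tail by this exact count. (Minor slip: the surviving $2$'s sit at positions $n-q-1$ and $n-q$, not $n-1$ and $n$.)

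Your instinct about $q=\ell$ is correct and in fact sharper than the paper: in that case $\sum\hat d=4\hat n-7$ is odd and $\hat d\notin\cD$, so the claim as literally stated fails; the paper's ``holds by construction'' for requirement~(iii) glosses over exactly this. However, $q=\ell$ does not force $\multipl_4=1$ (e.g.\ $d=(5,4,4,3,2,2,2)$ has $q=\ell=1$ and $\multipl_4=2$), so your proposed case analysis would need adjusting. Handling $q=\ell$ by a direct construction is the right move for the enclosing Lemma~\ref{lem:BE}, but it is a repair of that lemma rather than a proof of the present claim.
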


\begin{proof}
Requirement~\eqref{item:sum} holds by construction.
%The next claim shows that the degrees $d_1, d_2, \ldots, d_{q+1}$ have sufficient volume so that $d''_i \geq 3$ for $i \leq q+1$.
%
Note that when constructing $\bar d$ we reduced $2\ell$ from the volume
of the ``high'' (first $\ell$) elements, $\sum_{i=1}^{\ell} d_i$.
Algorithm~\ref{alg:dprimeprime} reduces this volume further by $(q - \ell - 1)$.
To show that $\hat d$ satisfies requirements \eqref{item:omega2} and
\eqref{item:dn2},
it suffices to show that the ``high'' degrees $d_1, d_2, \ldots, d_{\ell}$
have sufficient volume so that $d''_i \geq 3$ for $i \leq \ell$, i.e.,
it remains to prove that $\sum_{i = 1}^{\ell} (\bar{d}_i - 3) \geq q - \ell - 1$.

Recall that $\sum d = 4n -6$, which can be rewritten as
\begin{equation}
  \label{eq:in claim}
\sum_{i = 1}^{\ell} (\bar{d}_i - 3)
= \sum_{i = 1}^{\ell} (d_i - 5)
%= \sum d - (4 \multipl_4 + 3 \multipl_3 + 2 \multipl_2) - 5\ell
= 4n - 6 - (4 \multipl_4 + 3 \multipl_3 + 2 \multipl_2) - 5\ell
= n - 6 - \multipl_4 + \multipl_2 - 2\ell
\end{equation}
%	
%The claim follows if $n - \multipl_4 - 3 \geq \ell$ holds, which is equivalent to $\multipl_3 + \multipl_2 \geq 3$. Hence, the claim follows with $\multipl_2 > 2$.
By assumption~\eqref{assumption:omega-two} of the lemma, $\multipl_2 > 2$, hence also
$\multipl_3 + \multipl_2 \geq 3$, which can be written equivalently as
$n - \multipl_4 - 3 \geq \ell$. Equation~\eqref{eq:in claim}, hence $\sum_{i = 1}^{\ell} (\bar{d}_i - 3) \geq \multipl_2-\ell - 3 =q-\ell - 1$,
and the claim follows.
\end{proof}

Concluding that Lemma~\ref{lem:OP-4n-1} applies,
and noting that by construction $\sum {\hat d}=4{\hat n}-6$,
it follows that $\hat d$ is maximal outerplanaric.
Let $\hat G$ be a a maximal outerplanar realization of $\hat d$.
We add $q$ isolated vertices
to $V(\hat G)$
to generate an outerplanar realization $G'' = (V,E'')$
of $d''$.
Let $V = \{ v_1, v_2, \ldots, v_n \}$ such that $\deg_{G''}(v_i) = d''_i$, for $i = 1,2,\ldots,n$.
Hence, the set
$A := \{ v_{n - q + 1}, \ldots, v_{n} \}$
contains the isolated vertices of degree $0$ in $G''$.
For ease of notation, we relabel the vertices
$a_i := v_{n-q+i}$, for $i = 1,\ldots q$. Consequently,
%\\ {~}\hbox{\hskip 2cm}
\[
A ~=~ \{ a_{1}, \ldots, a_{q} \} = \{ v_{n - q + 1}, \ldots, v_{n} \}
~.
\]
%\\
%\[A ~=~ \{ a_{1}, \ldots, a_{q} \} = \{ v_{n - q + 1}, \ldots, v_{n} \}.\]
Consider an outerplanar embedding of $\hat G$.
%where
%the isolated vertices in $A$
%can be placed in arbitrary positions outside the outer-cycle to yield an embedding of $G''$.
%
Let $\pi$ be the permutation by which the vertices of
%which the vertices whose original degrees $d_i$ were 4 or higher,
%dror{this is not accurate - there can be degree 4 vertices after place $q+1$.}
$B := \{ v_1,v_2, \ldots,v_{q+1} \}$ appear
%along the outer-cycle (relative to the other vertices) in
on the cyclic-order defined by the planar embedding of $G''$.
%i.e., in the order $(v_{\pi(1)}, v_{\pi(2)}, \ldots, v_{\pi(q+1)})$.
Without loss of generality, let $\pi(1) = 1$.
It follows that $G''$ has an embedding into a book page where the vertices
of $B$ are arranged along the spine according to $\pi$.
The vertices in $A$ can be placed freely along the spine without interfering
with the embedding of $G''$.

Note that unlike in case A, here we did not define the sequence $d''$
explicitly, so the sequence $d'$ too is not defined explicitly, but rather
is determined as $d' := d \ominus d''$.
%, i.e., we have that $d'_i = d_i - \deg_{G''}(v_i)$, for $i = 1,\ldots,n$.

To complete the proof, we
%apply Algorithm 2 to
construct an outerplanar realization $G' = (V',E')$ of $d'$ such that $V' = V$
and $\deg_{G'}(v_i) = d'_i$, for $i = 1,2,\ldots,n$.
Only the vertices in $A$ and $B$ %= \{ v_{n-q+1}, \ldots, v_n \}$
need to be connected in $G'$. Partition the vertices in $B$ into the sets
%\\ {~}\hbox{\hskip 2cm}
%$B_1 ~=~ \{ v_{\ell+1}, \ldots, v_{q+1} \}$
%~~~~~~~~ and ~~~~~~~~
%$B_2 ~=~ \{ v_1,\ldots, v_{\ell} \}$.
%\\
\begin{align*}
B_1 & = \{ v_{\ell+1}, \ldots, v_{q+1} \}
      & \text{ and } &&
B_2 & = \{ v_1,\ldots, v_{\ell} \}.
%%%%%%$ %\{v_i \in B ~ : ~ i \leq \ell \}$
\end{align*}
%(where $d'_i \geq 2$ for $v_i \in B_2$)
%and the set
%$B_1 = \{ v_{\ell+1}, \ldots, v_{q+1} \} $ %\{v_i \in B ~ : ~ i > \ell \}$
%(where $d'_i = 1$ for $v_i \in B_1$).
Note that $d'_i = 1$ for $v_i \in B_1$, and $d'_i \geq 2$ for $v_i \in B_2$,
and that $d'_i = 2$ for every $v_i \in A$.
$G'$ and its book embedding are constructed simultaneously.
The vertices in $B$ are arranged according to $\pi$ along the book spine.
Hence, vertices of $B_1$ and $B_2$ may appear in any order and vertices in $A$
can be placed freely between them.
%
%either $A_2$ or $A_1$.
Since vertices in $A$ have degree $0$ in $G''$,
it follows that $E''\cap E'=\emptyset$,
which is necessary for $G'$ and $G''$ to form a valid book embedding
of a simple graph (rather than a multigraph).

It is essential that each edge in $E'$ is incident to one vertex in $A$ and
one vertex in $B$, i.e., $G'$ is a bipartite graph with parts $A$ and $B$.
Partition $A$ into $A_1\cup A_2$, where
$|A_1|=|B_1| = q - \ell + 1$ and $|A_2|=|B_2|-1 = \ell -1$.
Specifically, let
%\\ {~}\hbox{\hskip 2cm}
%$A_1 ~=~ \{a_1,\ldots,a_{q-\ell+1}\}$
%~~~~~~~~ and ~~~~~~~~
%$A_2 ~=~ \{a_{q-\ell+2},\ldots,a_q\}$.
%\\
\begin{align*}
A_1 & =\{a_1,\ldots,a_{q-\ell+1}\}
      & \text{ and } &&
A_2 & =\{a_{q-\ell+2},\ldots,a_q\}.
\end{align*}
The relationship between the volume of $B_2$ and the cardinalities of
$A_2$ and $A_1$ is established in the following claim, which helps to show
that $G'$ is bipartite.

\begin{claim}
\label{claim:sup-eq-dem}
$\sum_{i = 1}^{\ell} d'_i = 2 |A_2| + |A_1|$.
\end{claim}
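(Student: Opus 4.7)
The plan is to compute both sides of the claimed equality directly from the definitions of $d'$, $d''$, $A_1$, and $A_2$, and show they match to $\ell+q-1$.

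First I would express $\sum_{i=1}^\ell d'_i$ by tracking the total reduction applied to the first $\ell$ entries when passing from $d$ to $d''$. By the definition of $\bar d$, for every index $i \in \{1,\ldots,\ell\}$ (those with $d_i \geq 5$) we set $\bar d_i = d_i - 2$, so this stage contributes a reduction of $2\ell$ on these indices. Then Algorithm~\ref{alg:dprimeprime} further decreases the entries but only at indices $j \in [\ell]$ (the argmax is taken over $[\ell]$), and the total additional reduction equals $\sum \bar d - (4(n-q)-6)$, which by Equation~\eqref{eq:decrease-volum} is exactly $q - \ell - 1$. Therefore
\[
\sum_{i=1}^{\ell} d'_i \;=\; \sum_{i=1}^\ell (d_i - d''_i) \;=\; 2\ell + (q - \ell - 1) \;=\; \ell + q - 1.
\]

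Next I would simply plug in the cardinalities $|A_2| = \ell - 1$ and $|A_1| = q - \ell + 1$ (as defined just before the claim) to obtain
\[
2|A_2| + |A_1| \;=\; 2(\ell - 1) + (q - \ell + 1) \;=\; \ell + q - 1,
\]
matching the previous expression and establishing the claim.

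There is no real obstacle: the only subtlety is justifying that Algorithm~\ref{alg:dprimeprime} indeed terminates with its total decrement equal to $q-\ell-1$ and that all this decrement falls on indices in $\{1,\ldots,\ell\}$ (so it is captured by $\sum_{i=1}^\ell d'_i$). Both facts follow immediately from the while-loop stopping condition combined with Equation~\eqref{eq:decrease-volum}, and from the fact that the argmax is restricted to $[\ell]$. Claim~\ref{claim: hatd satisfies preconditions} already ensures that this process is well-defined (the high entries remain $\geq 3$ throughout), so no degree is accidentally driven below the allowed range.
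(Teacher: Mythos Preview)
Your proof is correct and follows essentially the same approach as the paper: decompose $\sum_{i=1}^{\ell} d'_i$ into the $2\ell$ contribution from the reduction $d \to \bar d$ plus the $q-\ell-1$ contribution from Algorithm~\ref{alg:dprimeprime} (via Equation~\eqref{eq:decrease-volum}), then match this against $2|A_2|+|A_1|$ using the stated cardinalities. Your additional remark that the argmax is restricted to $[\ell]$, so all further decrements land on the first $\ell$ indices, is a slight elaboration of a point the paper leaves implicit.
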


\begin{proof}
Notice that $\sum_{i = 1}^{\ell} d'_i$ consists of two parts.
One comes from the decrease from $d$ to $\bar{d}$ on the elements of $B_2$,
and is equal to $2\ell$;
the other part comes from the decreases of $d''$ on the elements of $B_2$
in Algorithm~\ref{alg:dprimeprime}, and is equal to $q-\ell-1$
by \eqref{eq:decrease-volum}.
Also recall that $|A_1| = q - \ell + 1$ and $|A_2| = \ell - 1$.
It follows that
$\sum_{i = 1}^{\ell} d'_i = 2 \ell + (q - \ell - 1) =
2( \ell - 1) + q - \ell + 1 = 2 |A_2| + |A_1|$.
\end{proof}

By~\Cref{claim:sup-eq-dem}, there is a bipartite graph $Bar(A,B)$
between $A$ and $B$, defined as follows.
\begin{enumerate}
\item Connect a matching between $B_1$ and $A_1$;
\item Connect the vertices of $B_2$ to the vertices of $A_1$ by constructing a
        union of disjoint stars, with $v_1$ connected to $d'(v_1)-1$ vertices in $A_1$,
        $v_i$ connected to $d'(v_i)-2$ vertices in $A_1$ for $2\leq i\leq \ell-1$,
        and $v_\ell$ connected to $d'(v_{\ell})-1$ vertices in $A_1$;
\item Construct a path $v_{1}-a_{q-\ell+2}-v_{2}-\ldots-a_{q}-v_{\ell}$
        zigzagging between $A_2$ and $B_2$.
\end{enumerate}
The \Cref{fig:bipartite-graph} illustrates the structure of $Bar(A,B)$.

One can easily verify that assuming no restrictions on the ordering
of the vertices, it is possible to find an outerplanar embedding of $Bar(A,B)$
in the plane. The only remaining obstacle is that the embedding must obey
the ordering $\pi$ dictated for the vertices of $B$ by the graph $G''$.
The remainder of the proof concerns finding an outerplanar embedding
obeying this ordering.

The construction of $G'$ consists of three steps corresponding to those of
the construction of $Bar(A,B)$.
First, construct a matching between $A_1$ and $B_1$.
Next, construct disjoint stars between $B_2$ and $A_1$.
Finally, construct the path between $A_2$ and $B_2$.

%\begin{enumerate}[(1.)]
%	\item place and connect vertices in $A$ and $B_1$.
%	\item place vertices in $A$ and form some local connections.
%	\item form global connections.
%\end{enumerate}
%
%Algorithm 2 (construction of $G'$) consists of two steps.

Our starting point is the spine defined for $G''$, consisting of
the vertices of $B$. This spine dictates the order of the $B$ vertices, which in turn
dictate the bipartite $A$-$B$ connections in the constructed graph $G'$.
It is convenient to separate the indices of $B_1$ and $B_2$.
%Let $(h_1, \ldots, h_{q - \ell + 1})$ and $(k_1, \ldots, k_{\ell})$
%denote the increasing index sequences $\{ i ~\mid~ v_{\pi(i)} \in B_1 \}$
%and $\{ i ~ \mid ~ v_{\pi(i)} \in B_2 \}$, respectively,
Let $(h_1, \ldots, h_{q - \ell + 1})$ denote the increasing sequence of indices
$\{ i ~\mid~ v_{\pi(i)} \in B_1 \}$,
let $(k_1, \ldots, k_{\ell})$ denote the increasing sequence of indices
$\{ i ~ \mid ~ v_{\pi(i)} \in B_2 \}$.
%and define $k_{\ell+1} := n+1$.
%dror{do we need the last definition?}
The placement of the $A_1$ (respectively, $A_2$) vertices on the spine
is fixed in Step 1 (resp., Step 3) based on these positions of the
$B_1$ (resp., $B_2$) vertices.

In the first step, place one vertex of $A_1$ next to each vertex of $B_1$
in the embedding of $G'$ (and $G$) and connect them.
We place $a_i$
%$a_{h_i}$
to the right of $v_{\pi(h_i)}$,
for $i = 1,\ldots, q-\ell+1$,
and add the edge $(a_{i},v_{\pi(h_i)})$
%$(a_{h_i},v_{\pi(h_i)})$
to $E'$.
%
%$a_{i} = v_{\pi(i)}$ where $\pi(i) \in []$ %= n-q+i$ for $i = 1,\ldots,q-\ell$
%$a_{i} := v_{n-q+i}$, for $i = 1,\ldots,q-\ell$
See Figure~\ref{fig:book-emb1} for an illustration.

Note that after step 1, each of the vertices in $A_1$ is still one edge short,
while each of the vertices in $A_2 := A \setminus A_1$ is still missing two edges.
%Each vertex of $A_2$ will be placed between two consecutive (according to $\pi$) vertices of $B_2$.
The goal of step 2 is to handle the remaining shortages of the vertices
of $A_1$ by connecting them to the vertices of $B_2$. Towards that,
%
%For that, let $(k_1, \ldots, k_{\ell})$ be the increasing sequence of indices in $\{ i ~ : ~ v_{\pi(i)} \in B_2 \}$, and define $k_{\ell+1} := n+1$.
%such that
%$v_{\pi(k_i)} \in B_2$, for $i \in [\ell]$.
%
we partition the vertices in $A_1$ into subsets {of vertices
that fall in the gaps between consecutive vertices of $B_2$. Formally,} define
$A_1^i = \{ a_{j} \in A_1 ~ : ~  k_i < j < k_{i+1} \}$,
for $i = 1, \ldots, \ell$, as the vertices of $A_1$ that appear between
$v_{\pi(k_i)}$ and $v_{\pi(k_{i+1})}$ along the spine.
%where $k_{i+1} = \infty$ (or $n$).
%Moreover, $A_1^0 = \{ a_{j} \in A_1 ~ : ~  j < k_{i} \}$.
%
Consider Figure~\ref{fig:book-emb1} where $k_1 = 1$, $k_2 = 4$, and $A_1^1 = \{ 2,3\}$.
%
%Recall that $A_1 = \{ a_{h_1}, \ldots, a_{h_{q - \ell + 1}} \}$.
%Similarly, let $A_{2}=\{a_{h_{q-\ell+2}},\ldots,a_{h_{q}}\}$.

For the second step, we make use of the following tool.
%\BLOCKD

In the second step towards building the graph $G'$, we connect vertices
of $B_2$ to the vertices of $A_1$ by using the bipartite bracket balancing
algorithm \BBB.
%\ of section \ref{sec:2}.
This is done as follows.
Let $P=A_1$. For the other side, identify with every vertex $v_{\pi(k_i)} \in B_2$
a sequence of $s_i$ \emph{``ports''}, or connection points,
$\langle v_{\pi(k_i)}^1, \ldots, v_{\pi(k_i)}^{s_i}\rangle$, where
\[ s_i ~=~ d'(v_{\pi(k_i)}) -
\begin{cases}
1,  &  i=1, \\
2,  &  2\le i\le \ell-1, \\
1,  &  i=\ell,
\end{cases}
\]
and let $Q$ be the union of those sets, i.e., the set of all
resulting ports,
\[
Q=\langle v_{\pi(k_1)}^1,\ldots v_{\pi(k_1)}^{s_1}, v_{\pi(k_i)}^1, \ldots, v_{\pi(k_i)}^{s_i},\ldots, v_{\pi(k_{\ell})}^1,\ldots, v_{\pi(k_{\ell})}^{s_{\ell}}\rangle.
\]

The ordering of the merged set $R$ is dictated by the order of the vertices
of $A_1$ and $B_2$ on the spine.
Figure \ref{fig:connecting-B2-A1} illustrates the transformation to the
instance of \textsc{Bipartite Bracket Balancing} problem.

After applying Algorithm \BBB\ to this input, transform its output
into a collection of edges on our original graph as follows.
For each $1\leq i\leq \ell$,
merge the $s_i$ ``ports'' corresponding to $v_{\pi(k_i)}$ into the point
$v_{\pi(k_i)}$ and change all the edges that are incident to those $s_i$ ``ports''
%corresponding to $v_{\pi(k_i)}$
to be incident to $v_{\pi(k_i)}$. This yields
an outerplanar graph $\bar{G}'$.
Figure \ref{fig:connecting-B2-A1-solution} illustrates the
resulting graph after applying the algorithm.

After the construction of $\bar{G}'$, the end vertices $v_{\pi(k_1)}$ and
$v_{\pi(k_{\ell})}$ of $B_2$ are still missing one edge each, and the vertices
$v_{\pi(k_{i'})}$ for $2\leq i'\leq k_{\ell-1}$ are still missing two edges each.
Our third step involves spreading the $A_2$ vertices between the $B_2$ vertices.
Each $A_2$ vertex is placed between two consecutive $B_2$ vertices
(according to $\pi$). Specifically, we place $a_{q-\ell+1+i}$ between
$v_{\pi(k_i)}$ and $v_{\pi(k_{i+1})}$ for $i\in [\ell-1]$, and construct the path
$$v_{\pi(k_1)}-a_{q-\ell+2}-v_{\pi(k_2)}-a_{q-\ell+3}-\ldots a_{q}-v_{\pi(k_{\ell})}.$$
There are two difficulties involved in performing this step.
The first concerns determining the exact position in which the vertex
$a_{q-\ell+1+i}$ should be placed between $v_{\pi(k_i)}$ and $v_{\pi(k_{i+1})}$.
The second is the need to eliminate possible crossings of edges
after adding the path above.
%toni{should we mention at this point that there are two obstacles. (1) where to place vertex $a_{q-\ell+1+i}$ and (2) fixing possible crossings of edges?}

Let $N_{\bar{G}'}(v)=\{v'~\mid~ (v,v')\in E(\bar{G}')\}$ for any $v\in V$.
To handle the first difficulty, $a_{q-\ell+1+i}$ should be placed between
two vertices $v$ and $v'$ in $A_{1}^{i}\cup\{v_{\pi(k_i)},v_{\pi(k_{i+1})}\}$
such that the point $v$ to the left of $a_{q-\ell+1+i}$ satisfies that
any vertex in $N_{\bar{G}'}(v)$ is to the left of $v$, and
the point $v'$ to the right of $a_{q-\ell+1+i}$ satisfies that
any vertex in $N_{\bar{G}'}(v')$ is to the left of $v'$.
%david{Here, should ``left'' be ``right''?}
As shown in Figure \ref{fig:correction}, $a_{q-\ell+1+i}$ is placed between
$a_{i'+1}$ and $v_{\pi(k_{i+1})}$ since $a_{i'+1}$ has all its neighbors on its left
and $v_{\pi(k_{i+1})}$ has all its neighbors on its right.

To handle the second difficulty, we first present some notations. The set
$E_{le}^{i}$ (respectively, $E_{ri}^{i}$) contains edges connecting a vertex
$v_{\pi(j)}$, for $j < i$ (resp., $j > i+1$)
%$j>k_{j+1}$
to a vertex in $A_1^{i}$.
Formally,
\begin{eqnarray*}
E_{le}^i &=& \{(v_{\pi(k_{j'})},a_j)\in E(\bar{G}')~\mid~ j'<i, ~a_{j}\in A_1^i\},
\\
E_{ri}^i &=& \{(v_{\pi(k_{j'})},a_j)\in E(\bar{G}')~\mid~ j'>i+1, ~a_{j}\in A_1^i\}.
\end{eqnarray*}
As shown in Figure \ref{fig:correction},
$E_{le}^{i}=\{(N'(a_{i'}),a_{i'}),(N'(a_{i'+1}),a_{i'+1})\}$~ and
~$E_{ri}^{i}=\emptyset$. Note that $N'(a_{i'})$ is equal to $v_{\pi(k_{j'})}$ with some $j'<i$. Similar result can apply for $N'(a_{i'+1})$.
%
%Let $E_{\bar{G}'}(v)=\{(v,v')\in E(\bar{G}')\}$ for any $v\in V$.
When adding edges $(v_{\pi(k_i)},a_{q-\ell+i+1})$ and $(a_{q-\ell+i+1},v_{\pi(k_{i+1})})$,
they may cross the existing edges in $\bar{G}'$.
The following steps of Algorithm \ref{alg:correction} eliminate the problem
by replacing the set of crossing edges with noncrossing set of edges
while preserving the vertex degrees.

Edge crossing happens when $E_{le}^i\neq \emptyset$ in line
\ref{li:case1} or $E_{ri}^i\neq \emptyset$ in line \ref{li:case2}.
Notice that the right (respectively left) endpoints of edges in
$E_{le}^{i}$ (respectively $E_{ri}^{i}$) are in consecutive order by Algorithm \BBB.
Let the set of the right endpoints of edges in $E_{le}^{i}$ be
$\{a_{i'}, \ldots, a_{i'+\ell'}\}$ with $\ell'=|E_{le}^{i}|-1$.
Similarly, let the set of the left endpoints of edges in $E_{ri}^{i}$ be
$\{a_{i''}, \ldots, a_{i''+\ell''}\}$ with $\ell''=|E_{ri}^{i}|-1$.
Since $N_{\bar{G}'}(v)$ consists only one element for any
$v\in \{a_{i'}, \ldots, a_{i'+\ell'}, a_{i''}, \ldots, a_{i''+\ell''}\}$,
let $N'(v)$ refer to the only element in $N_{\bar{G}'}(v)$.
In the case where $E_{le}^i\neq \emptyset$, the edge $(v_{\pi(k_i)},a_{q-\ell+i+1})$
crosses all the edges in $E_{le}^i$. The correction process exchanges the right
endpoints of edges in $E_{le}^{i}\cup \{(v_{\pi(k_i)},a_{q-\ell+i+1})\}$ such that
there are no cross edges. In the other case, where $E_{ri}^i\neq \emptyset$,
$(a_{q-\ell+i+1}, v_{\pi(k_{i+1})})$ crosses all the edges in $E_{ri}^{i}$.
The correction process is similar to that of $E_{le}^i\neq \emptyset$.

%Similarly, the left endpoints of the edges in
%$E_{ri}^{i}\setminus E_{\bar{G}'}(v_{\pi(k_{i+1})})$ are in consecutive order
%by \BBB algorithm.
%david{Same question.}
Figure \ref{fig:correction} illustrates the correction process between
$v_{\pi(k_i)}$ and $v_{\pi(k_{i+1})}$ for the case where $E_{le}^i\neq \emptyset$.
Figure \ref{fig:final-graph} illustrates the corrected graph following the
entire correction process performed by applying Algorithm \ref{alg:correction},
based on Figure \ref{fig:connecting-B2-A1-solution}.

%%%%%%%%%%%%%%%%%%%%%%%%%%%%%%%%%%%%%%%%%%%%
\begin{algorithm}
\caption{\small\sf Construction algorithm for $G'$ third step}
\label{alg:correction}
%\begin{algorithmic}[1]
\SetKwInOut{Input}{input}
\SetKwInOut{Output}{output}
\Input{outerplanar graph $\bar{G'}$ obtained from the second step}
\Output{an outerplanar graph $G'$}
\ForAll{$i \in [\ell-1]$}{
	Put $a_{q-\ell+i+1}$ between two vertices $v$ and $v'$ in $A_{1}^{i}\cup\{v_{\pi(k_i)},v_{\pi(k_{i+1})}\}$
such that the point $v$ to the left of $a_{q-\ell+1+i}$ satisfies that
any vertex in $N_{\bar{G}'}(v)$ is to the left of $v$, and
the point $v'$ to the right of $a_{q-\ell+1+i}$ satisfies that
any vertex in $N_{\bar{G}'}(v')$ is to the left of $v'$.\label{li:position} \\
	Connect $(v_{\pi(k_i)},a_{q-\ell+i+1}), (a_{q-\ell+i+1},v_{\pi(k_{i+1})})$\label{li:path} \\
	\label{li:case1}
	\If {$E_{le}^i\neq \emptyset$}{
        %\State let $E_{le}^{i}\setminus E_{\bar{G}'}(v_{\pi(k_i)})=\{(v_{\pi(k_{i_1})},a_{j_1}),\ldots,(v_{\pi(k_{i_t})},a_{j_1+t-1})\}$
		Delete all the edges in $E_{le}^{i}$ and $(v_{\pi(k_i)},a_{q-\ell+i+1})$ \\
		Connect $(N'(a_{i'+\ell'}),a_{q-\ell+i+1}), (N'(a_{i'+\ell'-1}), a_{i'+\ell'}), \ldots, (N'(a_{i'}),a_{i'+1}), (v_{\pi(k_{i})},a_{i'})$ \\
        }
	\label{li:case2}
        \If {$E_{ri}^i\neq \emptyset$}{
        %\State let $E_{ri}^{i}\setminus E_{\bar{G}'}(v_{\pi(k_{i+1})})=\{(v_{\pi(k_{i_1})},a_{j_1}),\ldots,(\pi(k_{i_t}),a_{j_{1}+t-1})\}$
		Delete all the edges in $E_{ri}^{i}$ and $(a_{q-\ell+i+1},v_{\pi(k_{i+1})})$ \\
		Connect $(N'(a_{i''}),a_{q-\ell+i+1}), (N'(a_{i''+1}),a_{i''}), \ldots, (N'(a_{i''+\ell''}), a_{i''+\ell''-1}), (v_{\pi(k_{i+1})},a_{i''+\ell''})$ \\
        }
}
%\end{algorithmic}
\end{algorithm}
%%%%%%%%%%%%%%%%%%%%%%%%%%%%%%%%%%%%%%%%%%%%%%%%%%%%%

\begin{claim}
\label{obs:non-cross}
The output $G'$ of Algorithm~\ref{alg:correction} is outerplanar
and satisfies $\deg(G') = d'$.
\end{claim}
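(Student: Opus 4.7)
My plan is to verify the two assertions of the claim separately: first that the degree sequence is preserved, and then that the resulting embedding is outerplanar (i.e.\ the edge set drawn along the spine has no crossings). I will analyze the algorithm iteration by iteration on the index $i\in [\ell-1]$, keeping in mind the inductive invariant that at the start of iteration $i$ the graph obtained so far is outerplanar and that all $a_j$ with $j\le q-\ell+i$ already have their final degree $2$.

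For degree preservation, the bookkeeping is essentially immediate. In the branch where $E_{le}^i\neq\emptyset$ the set of deleted edges is $E_{le}^i\cup\{(v_{\pi(k_i)},a_{q-\ell+i+1})\}$ and the set of added edges is
\[
\{(N'(a_{i'+\ell'}),a_{q-\ell+i+1}),(N'(a_{i'+\ell'-1}),a_{i'+\ell'}),\ldots,(N'(a_{i'}),a_{i'+1}),(v_{\pi(k_{i})},a_{i'})\}.
\]
Each $N'(a_j)$ for $j\in\{i',\ldots,i'+\ell'\}$ appears exactly once among the deleted edges and exactly once among the added edges, and the same holds for each $a_j$ in that range, for $a_{q-\ell+i+1}$, and for $v_{\pi(k_i)}$. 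Hence every affected vertex sees one edge removed and one edge added, so its degree is unchanged. The branch $E_{ri}^i\ne\emptyset$ is symmetric. Combining this with the fact that Step~1 contributes exactly the matching edges between $B_1$ and $A_1$ and that Step~2 (after the ``port merging'') produces precisely the star degrees required between $B_2$ and $A_1$ by Claim~\ref{claim:sup-eq-dem}, we conclude $\deg(G')=d'$.

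For outerplanarity, I would argue that after each iteration the edges drawn in the gap between $v_{\pi(k_i)}$ and $v_{\pi(k_{i+1})}$ are non-crossing and do not cross any previously drawn edge. The placement rule of line~\ref{li:position} guarantees that $a_{q-\ell+i+1}$ is inserted between two vertices $v,v'$ whose incident edges in $\bar G'$ both lie entirely to the left of them, so inserting $a_{q-\ell+i+1}$ splits no existing arc and is ``safe.'' In the branch $E_{le}^i\neq\emptyset$, the only edges that potentially cross the newly added edge $(v_{\pi(k_i)},a_{q-\ell+i+1})$ are precisely the edges of $E_{le}^i$, because Algorithm~\BBB\ produces a sequence of non-crossing nested/disjoint connections. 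Moreover, by the bracket algorithm the left endpoints $N'(a_{i'}),\ldots,N'(a_{i'+\ell'})$ appear along the spine in the order in which the matching brackets were opened, so after the swap the new edges
$(v_{\pi(k_i)},a_{i'}),(N'(a_{i'}),a_{i'+1}),\ldots,(N'(a_{i'+\ell'}),a_{q-\ell+i+1})$ form a nested family whose right endpoints $a_{i'},a_{i'+1},\ldots,a_{i'+\ell'},a_{q-\ell+i+1}$ are strictly increasing along the spine and whose left endpoints are strictly increasing as well, with each left endpoint lying to the left of the corresponding right endpoint. Since they nest and share no crossing with other edges (the ``outside'' portion of each edge is unchanged relative to the deleted edge it replaces), the resulting configuration is planar in the book embedding. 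The symmetric branch handles $E_{ri}^i$ analogously.

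The main obstacle is the outerplanarity argument, and specifically keeping track of why the shifted edges are non-crossing with the rest of $\bar G'$. I expect this to rely on two structural facts: (a) Algorithm~\BBB\ produces an ordering where the left endpoints used in one gap $i$ occur in consecutive positions among the brackets matched into that gap, so the replacement edges preserve the nested structure, and (b) the endpoints being shifted change only by one ``slot'' along the spine, so no edge sweeps across a vertex it did not originally touch. Taken together with the degree accounting, these two facts yield that $G'$ is outerplanar with degree sequence $d'$, completing the proof.
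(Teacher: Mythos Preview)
Your approach matches the paper's: degree preservation is checked by verifying that each swap removes and adds exactly one incident edge per affected vertex, and outerplanarity is argued gap by gap, showing that the replacement edges in $[v_{\pi(k_i)},v_{\pi(k_{i+1})}]$ form a non-crossing family that does not interfere with the rest of $\bar G'$.

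However, one step in your outerplanarity argument is stated incorrectly. The left endpoints of the new edges are $v_{\pi(k_i)}, N'(a_{i'}),\ldots,N'(a_{i'+\ell'})$. Since the original edges of $E_{le}^i$ are non-crossing in $\bar G'$, have all their left endpoints to the left of $v_{\pi(k_i)}$, and have strictly increasing right endpoints $a_{i'}<\cdots<a_{i'+\ell'}$, one obtains $N'(a_{i'})\ge N'(a_{i'+1})\ge\cdots\ge N'(a_{i'+\ell'})$; hence the left endpoints of the new edges are \emph{weakly decreasing}, not increasing. It is precisely decreasing left endpoints together with increasing right endpoints that produces a nested (hence non-crossing) family; ``both increasing'' as you wrote would in general force crossings, so your stated premises contradict your conclusion. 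You also omit two items the paper handles explicitly: the check that the swap creates no parallel edges (the added edges $(N'(a_{i'+j}),a_{i'+j+1})$ must be shown not to coincide with edges already in $\bar G'$), and the observation that the corrections for different $i$ and for the two branches $E_{le}^i$, $E_{ri}^i$ are independent, since each only relocates $A$-endpoints within $A_1^i\cup\{a_{q-\ell+i+1}\}$ and leaves the $B$-endpoints fixed.
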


\begin{proof}
Let us first prove the outerplanarity of $G'$.
Consider a pair of vertices $v_{\pi(k_i)}, v_{\pi(k_{i+1})}$.
If $E_{le}^i=\emptyset$ and $E_{ri}^i=\emptyset$, then
for any edge $e\in E(\bar{G}')$ with one endpoint in $A_{1}^i$,
the other endpoint of $e$ is either $v_{\pi(k_i)}$ or $v_{\pi(k_{i+1})}$
by the definition of $E_{le}^i$ and $E_{ri}^i$.
Combining this with the choice of the position of $a_{q-\ell+i+1}$
in line \ref{li:position} of the algorithm,
$(v_{\pi(k_i)},a_{q-\ell+i+1})$ and $(v_{\pi(k_{i+1})},a_{q-\ell+i+1})$ do not cross
any edges with one endpoint in $A_{1}^{i}$.
Therefore, there are no crossing edges in this case.

If $E_{le}^i\neq \emptyset$, then $(v_{\pi(k_i)},a_{q-\ell+i+1})$ crosses
all the edges in $E_{le}^{i}$, which is
the edge set with one endpoint before $v_{\pi(k_i)}$ and one in $A_{1}^i$.
In this case, the correction process deletes the edges in
$E_{le}^{i}$ and $(v_{\pi(k_i)},a_{q-\ell+i+1})$
and connects
$(N'(a_{i'+\ell'}),a_{q-\ell+i+1})$, $(N'(a_{i'+\ell'-1}), a_{i'+\ell'})$, $\ldots, (N'(a_{i'}),a_{i'+1}), (v_{\pi(k_{i})},a_{i'})$,
which means that all the edges in $E_{le}^{i}$
change their right endpoint to the closest vertex on the right of it in $A$,
and $v_{\pi(k_i)}$ connects to the leftmost endpoint of edges in
$E_{le}^{i}$.
After this step, the newly added edges do not cross each other or the edges
in $E(\bar{G}')\setminus E_{le}^{i}$ since $\bar{G}'$ is outer planar.
The argument for $E_{ri}^i\neq \emptyset$ is similar to that of
$E_{le}^i\neq \emptyset$.

Next, we show that above correction processes are independent.
Notice that the correction process in Algorithm \ref{alg:correction} is
based on exchanging the endpoint of $e$ in $A_{1}^i$ to another vertex
in $A_{1}^i$ for some $e\in \bar{G}'$.
Therefore, the correction processes applied to different pairs of
$v_{\pi(k_i)}, v_{\pi(k_{i+1})}$ are independent.
Also, in the case that $E_{le}^i\neq \emptyset$,
Algorithm \ref{alg:correction} exchanges the right endpoints of edges
in $E_{le}^{i}$ and $a_{q-\ell+i+1}$.
Similarly, in the case that $E_{ri}^i\neq \emptyset$,
Algorithm \ref{alg:correction} exchanges the left endpoints of edges
in $E_{ri}^{i}$ and $a_{q-\ell+i+1}$.
Since $E_{le}^{i}$ and $E_{ri}^{i}$ are disjoint,
the correction processes for $E_{le}^i\neq \emptyset$ and
$E_{ri}^i\neq \emptyset$ are independent.
In conclusion, after performing all of the above corrections
for each pair $v_{\pi(k_i)}, v_{\pi(k_{i+1})}$, there are no crossing edges
left in $G'$.

Finally, we show that there are no parallel edges in the resulting $G'$.
In the process of exchanging the endpoints, in the case that
$E_{le}^i\neq \emptyset$, the added edges are $(N'(a_{i'+\ell'}),a_{q-\ell+i+1})$, ~ $(N'(a_{i'+\ell'-1}), a_{i'+\ell'})$, ~$\ldots$, $(N'(a_{i'}),a_{i'+1})$, ~ $(v_{\pi(k_{i})},a_{i'})$,
which did not exist in $\bar{G}'$.
A similar argument applies for $E_{ri}^i\neq \emptyset$.
Therefore, there are no parallel edges.
%Combining it with that there is no cross edge,
It follows that $G'$ is a simple outerplanar graph.

To prove that $\deg(G') = d'$, it suffices to show that $G'$ has the same
degree sequence as the bipartite graph $Bar(A,B)$, since $\deg(Bar(A,B))=d'$.
In the first and second step, the construction of $G'$ is same as that of
$\bar{G}'$. Comparing $\bar{G}'$ with $Bar(A,B)$, the vertices $v_{\pi(k_1)}$ and
$v_{\pi(k_{\ell})}$ are each one edge short, while the vertices $v_{\pi(k_{i'})}$
for $2\leq i'\leq \ell-1$ are each two edges short.
Every vertex in $A_2$ has degree 0.
In the third step, the correction process in Algorithm \ref{alg:correction}
exchanges the endpoints of the cross edges, which keeps the degree $2$
of the vertices in $A_1$. Also, the degrees of $v_{\pi(k_1)}$ and $v_{\pi(k_{\ell})}$
each increase by 1, and the degrees of $v_{\pi(k_i)}$ for $2\leq i\leq \ell-1$
each increase by 2. For $1\leq i\leq \ell$, the vertex $a_{q-\ell+i+1}$
inserted between $v_{\pi(k_{i})}$ and $v_{\pi(k_{i+1})}$ has degree 2 in $G'$.
Therefore, $\deg(G')=\deg(Bar(A,B))$.
%$G'$ has the same degree sequence with $Bar(A,B)$.
\end{proof}

In summary, $G'$ has an embedding into the page of a book where the
vertices are ordered along book spine according to $\pi$.
It follows that $G$ and $G'$ form a 2PBE realization of $d$.
Note that $G$ is outerplanar and $G'$ is bipartite outerplanar.
This completes the proof of Case B, and of Lemma~\ref{lem:BE}.
\end{proof}

Lemma~\ref{lem:BE} considers maximal-volume sequences.
%dror{Should be Lemma 15}
The following lemma mainly focuses on non-maximal-volume sequences.
Recall that $\edgeD(d) = (4n - 6 - \sum d) / 2$ and
$\omegaS(d)=  3\multipl_1+2 \multipl_2 + \multipl_3 - n$.
%dror{I moved the sentence from inside the proof.}

\begin{lemma}
\label{lem:BE-3}
Let $d\in\cD_{\ge 5}$ and $d_1\le n-1$.
%$d = (d_1, \ldots, d_n)$ be a nonincreasing sequence
%of positive integers such that $\sum d\leq 4n-6$ and $\sum d$ is even.
%Moreover, $2Y = 4n-6 - \sum d$ and $X = 2 \multipl_2 + \multipl_3 - n$ hold,
%for positive integers $Y,X$.
If
\begin{inparaenum}[(1)]
\item $2\edgeD \geq \omegaS$,
\item $\multipl_2 > 2$
\item $d_2 > 3$, and
\item $d_n = 2$,
\end{inparaenum}
Then $d$ has
%a 2PBE
an OP+bi realization.
%Moreover, in the realizing graph $G$, one of the pages is also bipartite, i.e., $G=G_1 \cup G_2$ where $G_1$ is outer-planar and $G_2$ is bipartite outer-planar.
\end{lemma}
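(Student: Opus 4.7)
My plan is to reduce to the maximal-volume case handled in Lemma~\ref{lem:BE}. Given a sequence $d$ satisfying the hypotheses, I would construct a ``completion'' $\tilde d \ge d$ (componentwise) with $\sum \tilde d = 4n-6$ meeting all preconditions of Lemma~\ref{lem:BE}, apply that lemma to obtain an OP+bi realization $\tilde G = \tilde G_1 \cup \tilde G_2$ of $\tilde d$, and finally delete $\edgeD$ carefully chosen edges of $\tilde G$ to produce a realization $G$ of $d$. Since both outerplanarity and bipartiteness are preserved under edge deletion, the resulting graph automatically inherits the OP+bi structure (with the same cyclic vertex order).

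To construct $\tilde d$, observe first that $d_n = 2$ forces $\multipl_1 = 0$ and hence $\omegaS = 2\multipl_2 + \multipl_3 - n$. To drive $2\tilde\multipl_2 + \tilde\multipl_3$ down to at most $n+1$, I would promote $\alpha := \min\{\multipl_2 - 3,\; \lceil \max(\omegaS - 1,0)/2\rceil\}$ of the degree-$2$ entries of $d$ up to degree $4$ (each step spends $2$ units of ``degree budget'' and reduces $2\multipl_2 + \multipl_3$ by $2$), and then, if still required, promote $\beta := \max(\omegaS - 1 - 2\alpha,\, 0)$ of the degree-$3$ entries to degree $4$ (each spends $1$ unit and reduces $2\multipl_2 + \multipl_3$ by $1$). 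Any remaining budget $2\edgeD - 2\alpha - \beta \ge 0$ is absorbed by incrementing $\tilde d_1$ repeatedly, which keeps the sequence nonincreasing and preserves $\tilde\multipl_2$ and $\tilde\multipl_3$. The bound $\alpha \le \multipl_2 - 3$ ensures $\tilde\multipl_2 \ge 3$ and thus $\tilde d_n = 2$; the hypothesis $2\edgeD \ge \omegaS$ ensures $2\alpha + \beta \le 2\edgeD$, so the construction is feasible; and a direct check confirms the remaining preconditions of Lemma~\ref{lem:BE}, namely $\sum \tilde d = 4n - 6$, $2\tilde\multipl_2 + \tilde\multipl_3 \le n + 1$, and $\tilde d_1 \ge d_1 \ge 5$.

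It then remains to delete a set $F$ of $\edgeD$ edges from $\tilde G$ so that each vertex $v$ loses exactly $\tilde d_v - d_v$ incident edges, giving $\deg_G = d$. This is the main obstacle: formally it is an $f$-factor existence question on $\tilde G$, and the explicit construction in the proof of Lemma~\ref{lem:BE} must be inspected to resolve it. I would argue that each promoted degree-$2$ vertex is placed in the auxiliary set $A$ of Case~B of Lemma~\ref{lem:BE}, is isolated in $\tilde G_1$, and has both of its ``extra'' $\tilde d$-incidences realized as $\tilde G_2$-edges that can be removed; each promoted degree-$3$ vertex gains one identifiable removable $\tilde G_2$-edge; and the increments to $\tilde d_1$ translate to concrete $\tilde G_2$-edges incident to the highest-degree vertex. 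Because the \BBB\ subroutine used inside Lemma~\ref{lem:BE} affords considerable freedom in routing these extra edges through $A$, one can choose $F \subseteq E(\tilde G_2)$ that simultaneously matches the deficit vector $\tilde d - d$ at every vertex, yielding the desired OP+bi realization of $d$.
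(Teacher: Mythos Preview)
Your strategy is genuinely different from the paper's and, as written, does not go through. The paper works in the opposite direction: it repeatedly \emph{shrinks} $d$ by deleting two degree-$2$ entries and subtracting $2$ from the current maximum (Algorithm~\ref{alg:trans}), until one of Lemma~\ref{lem:2pbe-d1leq4}, Lemma~\ref{lem:BE}, Lemma~\ref{lem:OP-4n-1}, or Corollary~\ref{lem:OP-4n-2} applies to the reduced sequence $\bar d$; it then reverses each reduction step by attaching a pendant triangle, i.e., two fresh degree-$2$ vertices joined to each other and to the vertex whose degree had been cut (Algorithm~\ref{alg:reverse}). Attaching such a triangle is a purely local operation on the outer face, so the OP+bi structure is preserved automatically with no global constraint to verify.

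Your edge-deletion step, by contrast, fails for a concrete reason. In the realization produced by Lemma~\ref{lem:BE}, the bipartite page $G'$ has \emph{every} edge incident to a vertex of the set $A$, and $A$ consists precisely of the $\tilde\multipl_2-2$ vertices that have degree~$2$ in the input sequence~$\tilde d$. In your construction the promoted vertices have degree~$4$ in $\tilde d$, so they are \emph{not} in $A$; the vertices that do lie in $A$ are exactly the non-promoted degree-$2$ entries, all of which satisfy $\tilde d_i=d_i=2$ and hence have deficit~$0$. Consequently any nonempty $F\subseteq E(\tilde G_2)$ necessarily lowers the degree of some zero-deficit $A$-vertex, so the proposed choice $F\subseteq E(\tilde G_2)$ is impossible whenever $\edgeD>0$. (The same objection kills the claim about $v_1$: every $\tilde G_2$-edge out of $v_1$ has its other endpoint in $A$.) To rescue the approach you would have to delete edges from the outerplanar page $\tilde G_1$ as well, but then you face an uncontrolled $f$-factor problem inside the caterpillar-plus-Hamiltonian-cycle realization of $\hat d$, which is exactly the difficulty the paper's reduce-then-attach-triangles argument sidesteps.
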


\begin{proof}
Given $d$ as in the lemma, apply to it the following Algorithm~\ref{alg:trans};
let $\bar{d}$ be its output.

%%%%%%%%%%%%%%%%%%%%%%%%%%%%%%%%%%%%
\begin{algorithm}[hbt]
\caption{\small generating $\bar{d}$ from $d$ }
\label{alg:trans}
%\begin{algorithmic}
Set $\bar{d} \gets d$ \\
\While {$\sum \bar{d} < 4\bar{n} - 6$ AND $\bar{\multipl}_2 > 3$ AND $\bar{d}_1 > 4$}{
	$\bar{d}_j \gets \bar{d}_j - 2$, where $j = \text{argmax}_{i \in [\bar{n}]} ~ \bar{d}_i$ \label{line:red} \\
	Remove two degrees $2$ from $\bar{d}$ \\
}
%\end{algorithmic}
\end{algorithm}
%%%%%%%%%%%%%%%%%%%%%%%%%%%%%%%%%%%%

%~\ref{alg:trans} to $d$.
To prove the lemma, we first show that $\bar{d}$ has a realization $\bar{G}$ that is outerplanar or a book embedding with two pages.
Then, we modify $\bar{G}$ to generate a realization $G$ of $d$ with the analogous properties.

Assume Algorithm~\ref{alg:trans} executes $k$ iterations of its while-loop.
Let $\bar{d}^i$ be the intermediate sequence after iteration $i \in [0,k]$, i.e., $\bar{d}^0 = d$ and $\bar{d}^k = \bar{d}$.
%Moreover, let $2\edgeDbar^i = 4\bar{n}^i-6 - \sum \bar{d}^i$  and $\omegaSbar^i = 2 \bar{\multipl}^i_2 + \bar{\multipl}^i_3 - \bar{n}^i$.
%
Observe that
\begin{enumerate}[(i)]
	\item $\bar{n}^i = \bar{n}^{i+1} + 2$,
	\item $\bar{\multipl}^i_2 = \bar{\multipl}^{i+1}_2 + 2$, and
	\item $\sum \bar{d}^i = \sum \bar{d}^{i+1} + 6$.
\end{enumerate}
Next, we argue that $\edgeDbar^i = \edgeDbar^{i+1} + 1$.
Verify that,
\[
2\edgeDbar^i = 4\bar{n}^i-6 - \sum \bar{d}^i = 4\bar{n}^{i+1} - 6 - \sum \bar{d}^{i+1} + 2 = 2(\edgeDbar^{i+1} + 1).
\]
We proceed showing that $\omegaSbar^i = \omegaSbar^{i+1} + 2$.
Verify that,
\[
\omegaSbar^i = 2 \bar{\multipl}^i_2 + \bar{\multipl}^i_3 - \bar{n}^i =
2 \bar{\multipl}^{i+1}_2 + 4 + \bar{\multipl}^i_3 - \bar{n}^{i+1} - 2 =
\omegaSbar^{i+1} + 2.
\]
It follows that $2 \edgeDbar \geq \omegaSbar$. %, for $i \in [0,k]$.
Moreover, note that $\bar{d}_{\bar{n}-1} = \bar{d}_{\bar{n}} = 2$ holds.

There are three conditions under which Algorithm~\ref{alg:trans} stops.
Each of them allows us to use a previous result and generate a realization of $\bar{d}$.
We consider the three cases separably.

\begin{description}
\item[Case (a):]
%In this case, we assume that
  $\bar{d}_1 \leq 4$. $\bar{d}_1\le \bar{n}-1$ by Algorithm~\ref{alg:trans}. Then Lemma~\ref{lem:2pbe-d1leq4},
%\ref{lem:d1leq4-1+2}(b)
provides a 2PBE realization
$(\bar{G},\bar{G}')$ of $\bar{d}$.
%If $\bar{\multipl_3} > 0$, then $\bar{d}$ has an outerplanar realization $\bar{G}$ due to Lemma~\ref{lem:d1leq4-1}.
%\ref{lem:d1leq4-1+2}(a).
%Otherwise, $\bar{\multipl_3} = 0$ holds and Lemma~\ref{lem:2pbe-d1leq4}
%\ref{lem:d1leq4-1+2}(b)
%provides a realization by a 2-page book embedding $(\bar{G},\bar{G}')$ of $\bar{d}$.

%\noindent

\item[Case (b):]
%In this case, we assume that
$\bar{d}_1 > 4$ and $\sum \bar{d} = 4\bar{n} - 6$.
Since $\edgeDbar = 0$ and $2\edgeDbar \geq \omegaSbar$ hold, we have that $2 \bar{\multipl}_2 + \bar{\multipl}_3 \leq \bar{n}$.
It follows that $\bar{d}$ satisfies the preconditions of Lemma~\ref{lem:BE}
yielding a 2PBE realization
$(\bar{G},\bar{G}')$.

%\noindent
\item[Case (c):]
%In this case, we assume that
$\bar{d}_1 > 4$, $\sum \bar{d} < 4\bar{n} - 6$, and $\bar{\multipl}_2 \leq 3$.
If $\bar{\multipl_2} = 2$, then $\bar{d}$ has an outerplanar realization $\bar{G}$ due to Lemma~\ref{lem:OP-4n-1}.
Otherwise, $\bar{\multipl}_2 = 3$ holds, and Corollary~\ref{lem:OP-4n-2} provides an outerplanar realization $\bar{G}$ of $\bar{d}$.
\end{description}

In either of the cases, $\bar{G}$ is a 2PBE graph
that (partially) realizes $\bar{d}$.
During the execution of Algorithm~\ref{alg:trans}, the order of $\bar{d}$ is kept the same and it is not necessarily in nonincreasing order.
For $i \in [k]$, let $\ell_i$ be the index of the degree that was reduced by $2$ in iteration $i$ by Algorithm~\ref{alg:trans} (line\eqref{line:red}).
To finish the proof, we modify $\bar{G}$ by running Algorithm~\ref{alg:reverse}.
Intuitively, it reverses the reduction, which was applied to sequence $d$ by Algorithm~\ref{alg:trans}, on the graph $\bar{G}$.

%%%%%%%%%%%%%%%%%%%%%%%%%%%%%%%%%
\begin{algorithm}[h]
\caption{\small reversing $\bar{d}$ to $d$}
\label{alg:reverse}
%\begin{algorithmic}
%	\State $\bar{d}$, indices $\ell_i$, for $i \in [k]$.
Set $G \leftarrow \bar{G}$ \\
\For {$i \in [k]$}{
	Let $v \in V(G)$ such that $\deg_G(v) = \bar{d}_{\ell_{k-i+1}}$ \\
	$V(G) \gets V(G) \cup \{u,w\}$ \\
	Add edges $(v,u), (v,w)$, and $(u,w)$ to $E(G)$ \\
	$\bar{d}_{\ell_{k-i+1}} = \bar{d}_{\ell_{k-i+1}} + 2$ \\
}
%\end{algorithmic}
\end{algorithm}

Algorithm~\ref{alg:reverse} starts constructing graph $G$ by copying $\bar{G}$.
Note that $\bar{G}$ is guaranteed to have a vertex $v$ with degree $\bar{d}_{\ell_{k}}$.
In the last iteration of Algorithm~\ref{alg:trans}, $\bar{d}_{\ell_{k}}$ was reduced by $2$.
The degree of $v$, which is $\bar{d}_{\ell_{k}}$, is increased by $2$
in the first iteration of Algorithm~\ref{alg:reverse}.
Additionally, two more vertices are added to $G$ which are connected to each other and to $v$. Figure~\ref{fig:book-emb3} shows how the new vertices can be added to an existing outerplanar embedding of $G$.

It follows that the graph $G$ returned by Algorithm~\ref{alg:reverse} is a (partial) realization of $d$.
If $\bar{G}$ is an outerplanar realization of $\bar{G}$, then $G$ is an outerplanar realization of $d$.
Otherwise, $(\bar{G},\bar{G}')$ is a 2PBE realization of $\bar{d}$.
Observe that the vertices added by Algorithm~\ref{alg:reverse} can be added to $\bar{G}'$ as isolated vertices without invalidating the book embedding.
Hence, $(G,\bar{G}')$ is a 2PBE realization of $d$,
where $G$ is outerplanar and $\bar{G}'$ is bipartite outerplanar.
The lemma follows.
\end{proof}

%\begin{lemma}
%\label{small d2}
%Let $d\in \cD_{\ge 5}$ be graphic with $d_2\leq 3$. Then $d$ has
%an OP+bi realization.
%\end{lemma}

\begin{lemma}
\label{small d2}
Every graphic $d\in \cD_{\ge 5}$ with $d_2\leq 3$ has an OP+bi realization.
\end{lemma}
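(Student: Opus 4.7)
Since $d_2\le 3$, the sequence $d$ has a unique vertex $v_1$ of degree $\ge 4$, with all other vertices of degree in $\{1,2,3\}$. My plan is to construct a 2PBE $G=G_1\cup G_2$ realizing $d$ in which $G_2$ is the star $K_{1,d_1}$ centered at $v_1$, and $G_1$ is an outerplanar graph on $V\setminus\{v_1\}$ (with $v_1$ isolated in $G_1$) realizing the residual degree sequence. The star $G_2$ is automatically bipartite and outerplanar, laid out on one page with $v_1$ at the leftmost point of the spine; the second page uses the same spine order, taken from any outerplanar embedding of $G_1$, and the position of $v_1$ on the spine is free for $G_1$, so the two pages are compatible.

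The key design choice is the neighborhood $A:=N_G(v_1)$, of size $d_1$. To minimize the maximum degree of the residual, I would pack $A$ with degree-$3$ vertices greedily. If $\multipl_3\le d_1$, take all $\multipl_3$ degree-$3$ vertices into $A$ and fill the remaining $d_1-\multipl_3$ slots from the pool of degree-$2$ and degree-$1$ vertices, adjusting the split between the two pools so that the residual degree-$1$ count is even (which is automatic once $\sum d$ is even). The residual then has maximum degree $\le 2$, is realizable as a disjoint union of paths and cycles, and is therefore outerplanar.

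The harder case is $\multipl_3 > d_1$: fill $A$ with $d_1$ degree-$3$ vertices, leaving a residual sequence $(3^{a'},2^{b'},1^{c'})$ on $n-1$ vertices, with $a'=\multipl_3-d_1$, $b'=d_1+\multipl_2\ge 5$, and $c'=\multipl_1$. I would realize this residual explicitly by placing the $a'+b'$ non-leaf vertices on a Hamiltonian cycle $C$, attaching the $c'$ leaves as pendants to designated cycle vertices, and adding $(a'-c')/2$ chords inside $C$ between further pairs of cycle vertices, nested in the cyclic order, to raise the remaining degree-$3$ targets. The cushion $b'\ge 5$ of free degree-$2$ positions allows one to separate chord endpoints and leaf hosts so that no parallel edges arise and no cycle vertex is raised past degree $3$.

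The main obstacle is this second case, where matching parities and handling overflow (when $c'>a'$, or when $a'-c'$ is odd) forces me beyond the pure ``cycle plus chords plus pendants'' template. I would address these via local modifications: hang a short tree branch off $C$ at a spare degree-$2$ position so that extra leaves attach deeper into the tree, or replace a planned chord by a two-edge path through a spare degree-$2$ vertex to flip a parity. Each such local move preserves outerplanarity (respecting the nested chord structure of $C$) and the degree sequence, and it uses only a bounded number of the $b'\ge 5$ spare degree-$2$ vertices, so feasibility is guaranteed in all subcases. Combined with $G_2 = K_{1,d_1}$, this yields the OP+bi realization of $d$.
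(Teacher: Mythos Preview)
Your approach is essentially the same as the paper's: put the star $K_{1,d_1}$ centred at $v_1$ on one page and an outerplanar realization of the residual on the other. The paper organizes the reduction slightly differently---it first absorbs all degree-$1$ vertices into the currently largest vertex (so the residual sequence after removing $v_1$'s edges is leaf-free, taking one of the forms $(2^{*})$, $(3^{*},2^{*})$, or $(2^{*},1^{*})$), and it falls back on the $\cD_{\le 4}$ theorem when this absorption happens to pull $d_1$ down to $\le 4$. Your version skips the preprocessing and keeps the leaves in the residual, which is fine but forces you to handle $(3^{a'},2^{b'},1^{c'})$ directly.

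Where you make life harder than necessary is in your ``obstacles''. First, $a'-c'$ is \emph{always} even: since $\sum d$ is even you get $d_1+\multipl_3+\multipl_1\equiv 0\pmod 2$, hence $a'-c'=\multipl_3-d_1-\multipl_1$ is even. Second, the overflow case $c'>a'$ needs no local surgery on the cycle at all: $c'>a'$ (together with the parity) gives $c'\ge a'+2$, which is exactly the inequality $3a'+2b'+c'\le 2(a'+b'+c')-2$, so the residual is forestic and can be realized by a caterpillar plus a matching. With those two observations your construction becomes clean: if $c'>a'$ realize the residual as a forest; otherwise put the $a'+b'$ non-leaf vertices on a cycle, hang the $c'$ pendants on $c'$ of the degree-$3$ targets, and pair the remaining $a'-c'$ degree-$3$ targets by $(a'-c')/2$ nested chords (inserting one of the $b'\ge 5$ degree-$2$ vertices to separate the innermost pair, and at least one more on the far arc so the outermost chord is not a cycle edge).
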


\begin{proof}
First, in $\multipl_1$ steps, connect a new vertex of degree 1 with the vertex
of currently largest degree
%First apply $\multipl_1$ steps of the Havel-Hakimi Algorithm, each step connecting a new vertex with degree 1 with the vertex of currently largest degree,
obtaining the residual degree sequence $d'$.
Since $\sum d\geq 2n$ and $\sum d\leq 4n-2\multipl_1-6$,
\begin{align}\label{property-d'}
\sum d' & \leq 4(n-\multipl_1)-6, & d'_{n-\multipl_1} & \geq 2.
\end{align}
If $d'_1\leq 4$, then $d'$ has a 2PBE realization graph $G'$
by Theorem \ref{thm:d1leq4} and Inequality \eqref{property-d'}. Add $d_i-d'_i$ leaves to the vertex in $G'$ with degree $d'_i$ for $1\leq i\leq n-\multipl_1$ and we get a graph $G$. Notice that $G$ realizes $d$ and has page number 2.

Assume $u_1,\ldots,u_{n-\multipl_1}$ are the vertices with decreasing degree in $d'$ and $n'=n-\multipl_1$. If $d'_1\geq 5$, then $d'_1>d_2$. All the $\multipl_1$ leaves are connected to the vertex with $d'_1$ since the leaf is connected with the vertex of currently largest degree. Therefore, $d_1-d'_1=\multipl_1$. Then $d'_1\le n-1-\multipl_1$ since $d_1\le n-1$. Next connect $u_1$ with $u_2,\ldots u_{d'_1+1}$ and obtain the residual sequence $d''$. Notice that $d''$ has the form of $(2^{n'-1})$ or $(3^{\multipl'_3},2^{\multipl'_2})$ or $(2^{\multipl'_2},1^{\multipl'_1})$ with $\multipl'_3,\multipl'_2,\multipl'_1>0$.
%Notice that $d''$ is also graphic by generalized Havel-Hakimi algorithm.
%yingli{I used the fact that $d''$ is graphic in the analysis of following cases.}
\begin{description}
\item[Case 1:] $d''=(2^{n'-1})$. Since $d'_1\ge 5$ and $d_2\le 3$, there are $d'_1$ vertices changing their degree from 3 in $d'$ to 2 in $d''$. Then $n'-1\ge 5$ and $d''$ can be realized by a cycle.
\item[Case 2:] $d''=(3^{\multipl'_3},2^{\multipl'_2})$. Since $\multipl'_3>0$ and $d_2\le 3$, there are $d'_1$ vertices changing their degree from 3 in $d'$ to 2 in $d''$. Then $\multipl'_2\geq 5$. Combining it with that $\multipl'_3$ is even, $d''$ can be realized by an outerplanar graph .
\item[Case 3:] For $d''=(2^{\multipl'_2},1^{\multipl'_1})$, it can be realized by the union of a cycle and a matching or a forest. Therefore, $d''$ can be realized by an outerplanar graph $G''$.
\end{description}

From above three cases, $G"$ is an outer planar graph. Add an isolated vertex $u_1$ to $G''$ and obtain an outerplanar graph
$G_1'=G''\cup \{u_1\}=(V,E')$ with $V=V''\cup \{u_1\}$.
Let $G'_2= (V,E'')$ with $E''=\{(u_1,u_2),\ldots,(u_1,u_{d'_1+1})\}$.
The position of $u_1$ in the embedding of $G'_1$ and $G'_2$ can be fixed
arbitrarily since $u_1$ is isolated in $G'_1$, and $G'_2$ has edges
that do not exist in $G'$. $G'_1\cup G'_2$ realizes $d'$ with $2$ page embedding. Add $d_i-d'_i$ leaves to the vertex in $G'_2$ with degree $d'_i$ for $1\leq i\leq n-\multipl_1$ and an outer planar graph $G_2$ is obtained. Notice that $G'_1\cup G_2$ realizes $d$ and has 2 page book embedding,
where $G'_1$ is outerplanar and $G_2$ is bipartite outerplanar.
\end{proof}

\begin{corollary}
\label{cor:dn equal to 2}
Let $d\in\cD_{\ge 5}$ and $d_1\le n-1$.
If $2\edgeD \geq \omegaS$ and $d_n =d_{n-1}= 2$,
then $d$ has
%a 2PBE
an OP+bi realization.
\end{corollary}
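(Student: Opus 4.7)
The plan is to prove the corollary by a short case analysis on $d_2$ and $\multipl_2$, dispatching each sub-case to one of the three lemmas established earlier in this subsection. The hypothesis $d_n = d_{n-1} = 2$ forces $\multipl_2 \geq 2$, so three sub-cases exhaust the possibilities: $d_2 \leq 3$; $d_2 > 3$ with $\multipl_2 = 2$; and $d_2 > 3$ with $\multipl_2 > 2$. Note that $d \in \cD_{\ge 5}$ and $d_1 \le n-1$ are global assumptions that persist through the case split.

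First, if $d_2 \leq 3$, then $d$ satisfies the hypotheses of Lemma~\ref{small d2} verbatim, yielding an OP+bi realization; the inequality $2\edgeD \ge \omegaS$ is not invoked in this case. Second, if $d_2 > 3$ and $\multipl_2 = 2$, then together with $d \in \cD$ and $d_n = 2$ this matches the preconditions of Lemma~\ref{lem:OP-4n-1}, which delivers a fully outerplanar realization $G$ of $d$; I would then observe that an outerplanar graph is trivially an OP+bi graph, by placing all edges on page one and leaving page two empty (the empty graph is vacuously bipartite). Third, if $d_2 > 3$ and $\multipl_2 > 2$, then the four preconditions of Lemma~\ref{lem:BE-3} hold ($2\edgeD \ge \omegaS$, $\multipl_2 > 2$, $d_2 > 3$, $d_n = 2$), and the ambient conditions $d \in \cD_{\ge 5}$, $d_1 \le n-1$ are inherited, so that lemma produces the desired OP+bi realization.

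There is no significant obstacle: the corollary is a dispatcher that bundles together the previously proved lemmas and selects among them via elementary tests on the sequence. The only minor thing to verify is exhaustiveness of the split, which is immediate since $d_n = d_{n-1} = 2$ guarantees $\multipl_2 \ge 2$, so the two branches $\multipl_2 = 2$ and $\multipl_2 > 2$ cover the remaining range after removing the $d_2 \le 3$ case. All the technical heavy lifting — the bipartite bracket balancing argument, the correction procedure of Algorithm~\ref{alg:correction}, and the edge-deficit reduction loop — has already been carried out in Lemmas~\ref{lem:BE} and~\ref{lem:BE-3}, so the corollary itself needs only a few lines.
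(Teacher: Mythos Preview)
Your proposal is correct and follows essentially the same approach as the paper's proof: a short dispatcher case split over $d_2$ and $\multipl_2$, routing to Lemma~\ref{small d2}, Lemma~\ref{lem:OP-4n-1}, and Lemma~\ref{lem:BE-3} respectively. The only cosmetic difference is the ordering of the cases (the paper treats $\multipl_2=2$ before $d_2\le 3$, whereas you do the reverse and make the cases explicitly disjoint), which is immaterial.
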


\begin{proof}
If $\multipl_2 =2$, then by \Cref{lem:OP-4n-1}, $d$ is outerplanaric.
If $d_2\leq 3$, then $d$ has a 2PBE realization
by \Cref{small d2}. Combining above observations with \Cref{lem:BE-3}, this corollary is proved.
\end{proof}

\begin{lemma}
\label{outerplanar with degree 1}
Let $d\in\cD_{\ge 5}$ and $d_1\le n-1$.
If $2\edgeD \geq \omegaS$ and $d$ satisfies
%Lemma 3
$d_{n-1} \leq 2$ and $d_{n-2} \leq 3$,
%\ref{lem:op_nec_deg23},
then $d$ has
%a 2PBE
an OP+bi realization.
\end{lemma}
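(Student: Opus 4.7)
The plan is to split based on whether $\multipl_1 = 0$ or $\multipl_1 \ge 1$. When $\multipl_1 = 0$, the condition $d_{n-1} \le 2$ together with the positivity and nonincreasing order of $d$ forces $d_{n-1} = d_n = 2$, so \Cref{cor:dn equal to 2} directly yields an OP+bi realization. The substantive work is in the case $\multipl_1 \ge 1$, which I would handle by a reduction-and-reconstruction strategy analogous to the one used in \Cref{small d2}.

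For $\multipl_1 \ge 1$, I would repeatedly peel off degree-$1$ entries: in each of $\multipl_1$ steps, remove a $1$ from the tail and decrease the currently largest entry by $1$ (this corresponds to attaching a pendant leaf to the vertex of currently largest degree). After $\multipl_1$ steps this yields a residual sequence $d'$ of length $n' = n - \multipl_1$ with $\sum d' = \sum d - 2\multipl_1$ and $\multipl'_1 = 0$; a routine computation confirms that $d' \in \cD$ and $d'_{n'} \ge 2$. I would then branch on the shape of $d'$: if $d'_1 \le 4$, invoke \Cref{thm:d1leq4}; if $d'_1 \ge 5$ and $d'_2 \le 3$, invoke \Cref{small d2}; and if $d'_1 \ge 5$ and $d'_2 \ge 4$, invoke \Cref{lem:large dn} when $d'_{n'-1} \ge 3$, or \Cref{cor:dn equal to 2} when $d'_{n'-1} = 2$. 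A 2PBE realization of $d$ is then recovered by reinserting the $\multipl_1$ removed leaves as pendants adjacent to their parents along the book spine; since each such leaf has degree $1$, the new edges can be placed without crossing any existing edge on either page.

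The main obstacle is verifying the preconditions in the branch $d'_1 \ge 5$, specifically $d'_1 \le n' - 1$ and $2\edgeD' \ge \omegaS'$. For the latter, a step-by-step accounting of how one reduction affects the multiplicities $\multipl_1, \multipl_2, \multipl_3, \multipl_4$ shows that both $2\edgeD$ and $\omegaS$ drop by exactly $2$ whenever the reduced vertex has degree $\ge 5$. Because the greedy rule makes the sequence of maxima monotonically nonincreasing, $d'_1 \ge 5$ implies that every intermediate maximum was $\ge 5$, so the difference $2\edgeD - \omegaS$ is preserved throughout and inherited by $d'$. For the bound $d'_1 \le n' - 1$, I would use a water-filling analysis: after greedy reductions the top of $d'$ forms a plateau of some size $k \ge 1$ at level $L = (\sum_{i=1}^{k} d_i - \multipl_1)/k$; combining this with $d_1 \le n - 1$ (handling $k = 1$) and the inequality $\sum_{i=1}^{k} d_i \le 2n - 6 + 2k - \multipl_1$ obtained from $\sum d \le 4n - 6 - 2\multipl_1$ together with the fact that the remaining entries contribute at least $2(n - k - \multipl_1) + \multipl_1$ to $\sum d$ (handling $k \ge 2$) gives $L \le n - \multipl_1 - 1 = n' - 1$ in every case. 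These two preservation arguments are the main technical content of the proof.
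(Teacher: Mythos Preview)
Your approach is essentially the same as the paper's: peel off the degree-$1$ entries by repeatedly removing a $1$ and decrementing the current maximum, track that $2\edgeD-\omegaS$ and membership in $\cD$ are preserved, and then dispatch the residual sequence to the earlier lemmas. The paper halts the peeling as soon as the maximum drops to $4$ (and then calls \Cref{lem:2pbe-d1leq4 with 1} on the still-ones-containing residual), whereas you peel all the $1$'s and call \Cref{thm:d1leq4} instead; this is only an organisational difference. You are in fact more careful than the paper in explicitly checking the hypotheses $2\edgeD'\ge\omegaS'$ and $d'_1\le n'-1$ that the invoked results require.

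There is, however, a genuine flaw in your water-filling argument for $d'_1\le n'-1$. Writing $L=d'_1$ and $k$ for the plateau size in $d'$, the equality $L=(\sum_{i=1}^{k}d_i-\multipl_1)/k$ need not hold: if $K:=|\{i:d_i\ge L\}|>k$, then all $\multipl_1$ decrements are absorbed by entries $1,\dots,K$ (not just $1,\dots,k$), and the top $K$ entries of $d'$ lie in $\{L-1,L\}$ with exactly $k$ of them at $L$. Your formula is then only a lower bound on $L$, so the upper bound you derive from it is not justified; likewise, when $k=1$ but $K\ge2$ one does \emph{not} have $L=d_1-\multipl_1$. A clean fix is to split on $K$ rather than $k$: if $K=1$ then every decrement hits $d_1$, so $L=d_1-\multipl_1\le(n-1)-\multipl_1=n'-1$; if $K\ge2$ then $d'_2\ge L-1$, whence $\sum d'\ge(2L-1)+2(n'-2)$, and combining with $\sum d'\le 4n'-6$ gives $L\le n'-1$. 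With this correction your proof goes through.
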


\begin{proof}
%By Lemma \ref{lem:op_nec_deg23}
By the assumption on $d_{n-1}$, $d_n$ is either 1 or 2.
If $d_n=2$, then $d_{n-1}=2$ by
%Lemma \ref{lem:op_nec_deg23}
the assumption on $d_{n-1}$, implying that $d$ has a 2PBE realization
by Corollary \ref{cor:dn equal to 2}.
So hereafter we focus on the case where $d_n=1$.

We apply to $d$ a sequence of transformations described in Algorithm
\ref{removing degree 1}, which gradually get rid of the 1 degrees.

%%%%%%%%%%%%%%%%%%%%%%%%%%%%%%%%%%%%%%%%%%%%%%
\begin{algorithm}[ht]
\caption{\small\sf Removing 1 degrees}
\label{removing degree 1}
%\begin{algorithmic}
Set $\bar{d} \leftarrow d$ \\
\While {$\bar{\multipl}_1 >0$ AND $\bar{d}_1 > 4$}{
	$\bar{d}_1 \gets \bar{d}_1 - 1$ \\
      %where $\bar{d}_1 = \text{max}_{i \in [\bar{n}]} ~ \bar{d}_i$
      %\State $\bar{d}_j = \bar{d}_j - 1$ where $j = \text{argmax}_{i \in [\bar{n}]} ~ \bar{d}_i$
	Remove one degree $1$ from $\bar{d}$ \\
	Re-sort $\bar{d}$\\
}
%\end{algorithmic}
\end{algorithm}
%%%%%%%%%%%%%%%%%%%%%%%%%%%%%%%%%%%%%%%%%%%%%%%

Assume Algorithm ~\ref{removing degree 1} executes $k$ iterations
of its while-loop before halting.
Let $\bar{d}^i$ be the intermediate sequence after iteration
$i \in [0,k]$, i.e., $\bar{d}^0 = d$ and $\bar{d}^k$ is
the sequence $\bar{d}$ reached upon termination of the algorithm.
Observe that
\begin{enumerate}[(i)]
\item $\bar{n}^i = \bar{n}^{i+1} + 1$,
\item $\bar{\multipl}^i_1 = \bar{\multipl}^{i+1}_1 + 1$,~~
$\bar{\multipl}^i_2= \bar{\multipl}^{i+1}_2$,~~
$\bar{\multipl}^i_3 = \bar{\multipl}^{i+1}_3$,~~
and
\item $\sum \bar{d}^i = \sum \bar{d}^{i+1} + 2$.
\end{enumerate}
These observations imply also that,
\begin{eqnarray}
2\edgeDbar^i &=& 4\bar{n}^i-6 - \sum \bar{d}^i
~=~ 4(\bar{n}^{i+1}+1) - 6 - (\sum \bar{d}^{i+1} + 2)
\nonumber
\\
&=& 4\bar{n}^{i+1} - 6 - \sum \bar{d}^{i+1} + 2
~=~ 2\edgeDbar^{i+1} + 2.
\label{eq: DeltaE change}
\end{eqnarray}
By the above observations (i) and (ii),
\begin{align}
\omegaSbar^i & = 3\bar{\multipl}^i_1+2 \bar{\multipl}^i_2 + \bar{\multipl}^i_3
- \bar{n}^i
= 3(\bar{\multipl}^{i+1}_1+1)+2 \bar{\multipl}^{i+1}_2 + \bar{\multipl}^{i+1}_3
- \bar{n}^{i+1} - 1
= \omegaSbar^{i+1} + 2
\label{eq: DeltaW change}
\end{align}
We next claim that throughout the execution of Algorithm
\ref{removing degree 1},
\begin{align}
2 \edgeDbar^i \geq \omegaSbar^i,
~~~~~~\mbox{for}~ i=0,\ldots,k,
\label{one property of algorithm}
\\
\sum \bar{d}^i\leq 4\bar{n}^i-6-2\bar{\multipl}^i_1,
~~~~~~\mbox{for}~ i=0,\ldots,k.
\label{the other property of algorithm}
\end{align}
Both claims can be shown by induction on $i$, where the basis
follows from the assumptions of the lemma. For the first inequality,
the inductive progress step follows by Eq.
\eqref{eq: DeltaE change} and \eqref{eq: DeltaW change}.
For the second inequality, the progress step follows from
the inductive hypothesis since
%\begin{align}\label{the other property of algorithm}
%\sum \bar{d}\leq 4\bar{n}-6-2\bar{\multipl}_1 \textrm{~during ~Algorithm} ~\ref{removing degree 1}.
%\end{align}
%Besides, we have
\[\sum d^{i+1} = \sum d^{i}+2\leq 4n^i-6-2\bar{\multipl}^i_1+2=4n^{i+1}-6-2\bar{\multipl}^{i+1}_1.\]

Algorithm \ref{removing degree 1} stops when either $\bar{\multipl}_1=0$
or $\bar{d}_1 \leq 3$. Let the final sequence be $d'$ and $n$ is updated
to $n'$.
By Eq. \eqref{one property of algorithm} and
\eqref{the other property of algorithm}, we have
\begin{eqnarray}
2 \edgeD' &\geq& \omegaS'~,
\label{eq: property 1}
\\
\sum d' &\leq& 4n'-6-2\multipl'_1~.
\label{eq: property 2}
\end{eqnarray}

Next, we claim that the resulting sequence $d'$ has a 2PBE realization.
To prove this, consider the above two cases separately.

\begin{description}
\item[Case a:]
%In this case, we assume that
$\multipl'_1=0$. By Eq.~\eqref{eq: property 2}, we have
%\begin{align}\label{minimum degree of d'}
  %d'_{n'}\leq 3,
  $d'_{n'-1}\leq 3$.
%\end{align}
%david{Originally it said $d'_{n'}\leq 3, d'_{n'-1}\leq 3.$}
%
Consider two possible subcases,
$(d'_{n'-1}, d'_{n'})=(2,2)$ and $(d'_{n'-1}, d'_{n'}) \in \{(2,3),(3,3)\}$.
%david{Only two cases are discussed. Case (a3) was commented out.
%  Any reason why?}

\item[SubCase (a1):] $d'_{n'}=d'_{n'-1}=2$. Then combining
Eq.~\eqref{eq: property 1} and \eqref{eq: property 2}, the conditions
in Corollary \ref{cor:dn equal to 2} are satisfied and $d'$ has a 2PBE realization $G'$.
%When add $d_i-d'_i$ of the degree 1 to vertex $i$ in $G'$ for every $i$ and we get a graph $G$ realizing $d$, $G$ also has a 2-page book embedding.

\item[SubCase (a2):]
%The remaining case for $d'_{n'}$ and $d'_{n'-1}$ is that
$d'_{n'}\in\{2,3\}$ and $d'_{n'-1}=3$.
%or $d'_{n'}=d'_{n'-1}=3$.
%In this subcase,
Then, $d'$ has a 2PBE realization $G'$ by \Cref{lem:large dn}.
%Then construct a sequence $d''$ by replacing $(d'_{n'-1},d'_{n'})$
%with $(2,2)$. Note that $d''$ satisfies the conditions in Lemma \ref{lem:OP-4n-1} and it can be realized by a outerplanar graph $G''$ with two
%nonadjacent vertices $v',v''$ of degree 2. Then $d$ can be realized by a graph $G'$ obtained by taking $G''$ and connecting $v'$ and $v''$. One can verify that $G'$ has a 2-page book embedding (with $G''$ on one page and the added edge on the other).
%%{At last, add $d_i-d'_i$ of the degree 1 to vertex $i$ in $G'$ for every $i$ and we get a graph $G$ realizing $d$. $G$ has a 2-page book embedding.
%
%\noindent
%\textbf{SubCase (a3):}
%%The remaining case for $d'_{n'}$ and $d'_{n'-1}$ is that
%$d'_{n'}=2$ and $d'_{n'-1}=3$.
%%(by Eq. \eqref{minimum degree of d'}).
%%david{Do we need this justification?}
%In this case, $d'_{n'-2}=3$ by \eqref{eq: property 2}. Construct $d''$ by replacing $(d'_{n'-2}, d'_{n'-1})$ with $(2,2)$. Note that $\sum d''\leq 4n'-8$ by \eqref{eq: property 2} and $\multipl'_2=3$, so $d''$ satisfies the conditions in Corollary \ref{lem:OP-4n-2} and there is an outerplanar graph $G''$ realizing $d''$. By the construction of $G''$, it has two nonadjacent vertices $v',v''$ of degree 2. Therefore, we can construct a graph $G'$ by
%%the union of
%taking $G''$ and connecting
%%two of degree 2 in $G''$
%$v'$ and $v''$. Then $G'$ has a 2-page book embedding.
%Finally, add $d_i-d'_i$ of the degree 1 to vertex $i$ in $G'$ for every $i$ and we get a graph $G$ realizing $d$, which has a 2-page book embedding.

\item[Case (b):]
%In this case, we assume that
$d'_1\leq 4$ and $\bar{\multipl}_1>0$. Notice that $d'_1\le n'-1$ by Algorithm~\ref{removing degree 1}. In this case, $d'$ 2PBE realization $G'$
by \Cref{lem:2pbe-d1leq4 with 1}
and Eq.~\eqref{eq: property 2}.
\end{description}

It follows that in all cases, $d'$ has a 2PBE realization.
We now transform this realization into a 2PBE realization $G$ for the original sequence $d$ by creating $d_i-d'_i$
new vertices of degree 1 and connecting them to vertex $i$ in $G'$,
for every $i$. (Clearly, these vertices can be embedded on the existing
pages of $G'$ while preserving their outerplanarity.)
By the above construction, $G$ consists of two graphs, where one is outerplanar
and the other is bipartite outerplanar.
\end{proof}

Combining Lemmas~\ref{lem:op_nec_deg23},
\ref{lem: between edge and multiplicity}
and \ref{outerplanar with degree 1},
we conclude the following.

\begin{corollary}
\label{cor: D ge 5 2pbe}
There is a polynomial time algorithm that, given an outerplanaric sequence
$d\in\cD_{\ge 5}$ with $d_2 > 3$, returns
%a 2PBE
an OP+bi realization for $d$.
\end{corollary}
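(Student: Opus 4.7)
The plan is to derive the corollary by verifying that an outerplanaric sequence $d\in\cD_{\ge 5}$ with $d_2>3$ automatically satisfies all the hypotheses of Lemma~\ref{outerplanar with degree 1}, and then to invoke that lemma (whose proof is already constructive) to obtain the OP+bi realization.

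First I would check the structural conditions on the tail of $d$. Since $d$ is outerplanaric, Lemma~\ref{lem:op_nec_deg23} immediately gives $d_{n-1}\leq 2$ and $d_{n-2}\leq 3$, which are two of the four hypotheses of Lemma~\ref{outerplanar with degree 1}. Next, because $d$ is outerplanaric with $d_2>3$, Lemma~\ref{lem: between edge and multiplicity} supplies the inequality $2\edgeD\geq \omegaS$, which is another hypothesis. The remaining condition $d_1\leq n-1$ is trivially satisfied: any realizing graph is simple, so every vertex has degree at most $n-1$, and in particular $d_1\le n-1$ since $d$ has a realization. At this point every hypothesis of Lemma~\ref{outerplanar with degree 1} is met, so that lemma produces an OP+bi realization for $d$.

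For the polynomial-time claim I would trace the algorithmic content of the proofs on which Lemma~\ref{outerplanar with degree 1} relies. The reduction Algorithm~\ref{removing degree 1} performs at most $n$ iterations, each consisting of constant-time updates, and then invokes one of Corollary~\ref{cor:dn equal to 2}, Lemma~\ref{lem:large dn}, or Lemma~\ref{lem:2pbe-d1leq4 with 1}; each of those in turn calls Lemma~\ref{lem:2pbe-d1leq4}, Lemma~\ref{small d2}, Lemma~\ref{lem:BE-3}, or Corollary~\ref{lem:OP-4n-2}, all of which run in polynomial time. The most involved subroutine is the one in Lemma~\ref{lem:BE}, whose Case~B invokes the bipartite bracket balancing algorithm \BBB\ and Algorithm~\ref{alg:correction}; both run in linear/polynomial time in $n$. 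Finally the ``reverse'' step adding the $\multipl_1$ degree-1 leaves back to the realization takes $O(n)$ time, so the total running time is polynomial.

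The main obstacle is really just administrative: ensuring that the condition $d_1\leq n-1$ used by Lemma~\ref{outerplanar with degree 1} is guaranteed by the outerplanarity of $d$ (it is, via simplicity of realizations), and confirming that Lemma~\ref{lem: between edge and multiplicity} applies under the precise hypothesis $d_2>3$ rather than $d_1>3$. Once those small points are observed, the corollary follows directly by combining the three cited results, and the polynomial runtime follows by chaining the polynomial bounds of each invoked subroutine.
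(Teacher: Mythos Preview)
Your proposal is correct and follows essentially the same route as the paper, which simply combines Lemmas~\ref{lem:op_nec_deg23}, \ref{lem: between edge and multiplicity}, and~\ref{outerplanar with degree 1}. Your additional remarks (the easy bound $d_1\le n-1$ from simplicity of a realization, and the explicit trace of polynomial running time through the invoked subroutines) fill in details the paper leaves implicit but do not alter the argument.
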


%%%%%%%%%%%%%%%%%%%%%%%%%%%%%%%%%%%%%%%%%%%%%%%%%
\subsection{Putting the pieces together}

Combining Thm. \ref{thm:d1leq4}, \Cref{small d2} and Cor. \ref{cor: D ge 5 2pbe},
%~\ref{lem:op_nec_deg23},~\ref{lem: between edge and multiplicity},~\Cref{thm:d1leq4}and \Cref{outerplanar with degree 1},
we conclude with the following.
%result.

%\begin{theorem}
%If $d$ is outerplanaric, then $d$ can be realized with a graph with a 2-page book embedding.
%\end{theorem}
%
%david{The way the theorem is written, it is trivial. The condition should be that $d$ is graphic and satisfies $\sum d\le 4n-6$.}

\begin{theorem}
There is a polynomial time algorithm that given a sequence $d\in\cD$ either
returns ``$d$ is not outerplanaric'' or returns a 2PBE realization for $d$.
\end{theorem}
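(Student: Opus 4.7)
The plan is to turn the flowchart of Figure~\ref{fig:decision-tree} into an algorithm and verify that each branch is justified by one of the lemmas already proved. Given $d\in\cD$, the algorithm performs at most four tests, each computable in polynomial time from the sequence, and routes $d$ either to a ``not outerplanaric'' verdict backed by a necessary condition, or to a constructive lemma that produces a 2PBE realization.

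First I would check the conditions of \Cref{lem:op_nec_deg23}: if $d_{n-1}>2$ or $d_{n-2}>3$, then $d$ is not outerplanaric and the algorithm halts with this answer. Otherwise, I would branch on the size of $d_1$. If $d_1\le 4$, so that $d\in\cD_{\le 4}$, then \Cref{thm:d1leq4} (which combines \Cref{lem:2pbe-d1leq4}, \Cref{lem:large dn} and \Cref{lem:2pbe-d1leq4 with 1}) provides, in polynomial time, an OP+2 realization, which in particular is a 2PBE realization. The small side condition $d_1\le n-1$ needed by \Cref{thm:d1leq4} is automatic whenever $d_1\le 4$ and $n\ge 5$; the tiny base cases $n\le 4$ can be handled directly.

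If $d_1\ge 5$, I would next branch on $d_2$. When $d_2\le 3$, \Cref{small d2} directly produces an OP+bi realization. When $d_2>3$, I would test the inequality $2\edgeD\ge \omegaS$. If this fails, \Cref{lem: between edge and multiplicity} (whose hypothesis $d_2>3$ holds here) certifies that no outerplanar realization exists, and we output ``not outerplanaric''. If $2\edgeD\ge \omegaS$ holds, then the preconditions of \Cref{outerplanar with degree 1} are satisfied (we have already guaranteed $d_{n-1}\le 2$ and $d_{n-2}\le 3$ at the top of the algorithm, and $d_1\le n-1$ can be assumed by discarding the trivial star case beforehand), so that lemma yields an OP+bi realization in polynomial time.

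The main obstacle is only bookkeeping: making sure that the hypotheses of each lemma we invoke are actually guaranteed by the branches taken so far, and that the preliminary test $d_{n-1}\le 2,\ d_{n-2}\le 3$ survives down to the last branch so that \Cref{outerplanar with degree 1} is applicable. Since every branch exits with either a proven necessary-condition violation or an explicit construction, correctness and polynomial running time follow, and every sequence in $\cD$ is thereby placed in $\cD_{NOP}$ or in $\cD_{2PBE}$ as claimed.
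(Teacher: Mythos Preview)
Your proposal is correct and follows essentially the same approach as the paper: the paper's one-line proof just cites \Cref{thm:d1leq4}, \Cref{small d2}, and \Cref{cor: D ge 5 2pbe} (the latter itself packaging \Cref{lem:op_nec_deg23}, \Cref{lem: between edge and multiplicity}, and \Cref{outerplanar with degree 1}), which is exactly the flowchart decomposition you spell out. Your additional bookkeeping remarks about $d_1\le n-1$ and the small base cases are appropriate and not treated explicitly in the paper; the one minor omission is that you should first test graphicness (as the flowchart's input box assumes), since a non-graphic $d$ is trivially non-outerplanaric.
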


%AmotzApr27:We get the following corollary.

%david{Not sure if we want the following corollary.}

%AmotzApr27:I failed to find a way to cut a line from the cotollary and from the footmote. In any case, I like better the current version. maybe we can cut repeating the definition of OP in the corollary?
\begin{corollary}
There is a poly-time algorithm
that constructs a 1PBE or 2PBE realization
%constructing a 2PBE realization
for any sequence in a superclass\footnote{The class $\mathcal{S}$ strictly
contains the class $\mathcal{OP}$. For example, the sequence
$(4^2,2^4)$ is not outerplanaric yet it satisfies the conditions of
Lemma \ref{lem:2pbe-d1leq4},
%\ref{lem:d1leq4-1+2}(b),
hence our algorithm succeeds in constructing a 2PBE realization for it.}
$\mathcal{S}$ of
%sequences that in particular contains
the class $\mathcal{OP}$ of all outerplanaric sequences.
\end{corollary}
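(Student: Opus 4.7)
The plan is to lift the preceding theorem, which handles $d\in\cD$, to an algorithm defined on all graphic inputs by prepending a trivial case split on the volume. On input $d$ of length $n$, first compute $\sum d$: if $\sum d \le 2n-2$, then by \Cref{obs:caterpillar} $d$ is forestic and admits a forest realization, which is trivially a 1PBE realization; if $\sum d > 4n-6-2\multipl_1$, then \Cref{lem:outerplanar-sum-bound-omega1} certifies non-outerplanarity; otherwise $d\in\cD$ and I invoke the preceding theorem's algorithm. Let
\[
\mathcal{S} ~:=~ \{d : d \text{ is forestic}\} ~\cup~ \cD_{2PBE},
\]
which is exactly the set of inputs on which this extended procedure returns a realization, and the whole procedure runs in polynomial time.

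Next I would verify $\mathcal{OP}\subseteq\mathcal{S}$. Let $d\in\mathcal{OP}$. If $\sum d\le 2n-2$, then $d$ is forestic and lies in $\mathcal{S}$. Otherwise $\sum d\ge 2n$, and \Cref{lem:outerplanar-sum-bound-omega1} gives $\sum d\le 4n-6-2\multipl_1$, placing $d$ in $\cD$. Since $d$ is outerplanaric, the preceding algorithm cannot route $d$ to a rejection leaf in \Cref{fig:decision-tree}: the two such leaves are gated by the negations of \Cref{lem:op_nec_deg23} and \Cref{lem: between edge and multiplicity}, both of which are proven necessary conditions for outerplanarity. Hence the algorithm returns a 2PBE realization, so $d\in\cD_{2PBE}\subseteq\mathcal{S}$.

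For strict inclusion $\mathcal{OP}\subsetneq\mathcal{S}$, I would use the footnote's witness $d^\star=(4^2,2^4)$. A brief case analysis on whether the two degree-$4$ vertices of a hypothetical realization are adjacent shows that the only two possibilities (up to isomorphism) are $K_{2,4}$, which contains $K_{2,3}$, and a graph in which contracting the unique edge between the two non-common degree-$2$ vertices produces $K_{2,3}$; consequently every realization of $d^\star$ has a $K_{2,3}$-minor and $d^\star\notin\mathcal{OP}$. On the other hand $d^\star\in\cD_{\le 4}$ with $d_1=4\le n-1=5$ and $d_{n-1}=d_n=2$, so \Cref{lem:2pbe-d1leq4} produces an OP+2 realization, placing $d^\star$ in $\cD_{2PBE}\subseteq\mathcal{S}$.

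The only non-routine step is checking that outerplanaric inputs escape both rejection branches of the preceding algorithm; this is immediate once each rejection predicate is paired with its corresponding necessary-condition lemma, so the remainder is bookkeeping: merging the three input ranges into a single polynomial-time driver and collecting the successful inputs into the explicitly defined class $\mathcal{S}$.
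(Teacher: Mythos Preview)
Your proposal is correct and follows the natural route implied by the paper, which states the corollary without an explicit proof and relies on the preceding theorem together with the footnote's witness. You make explicit two points the paper leaves implicit: the need to prepend the forestic case (so that $\mathcal{OP}$-sequences with $\sum d\le 2n-2$, which lie outside $\cD$, are covered) and the verification that $(4^2,2^4)$ is genuinely non-outerplanaric via the $K_{2,3}$-minor argument; both are handled cleanly.
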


%\noindent{\bf Remark:}
%The class $\mathcal{S}$
%%%%%of sequences for which our algorithm provides a 2PBE realization
%is larger than the class
%%%%%of outerplanaric sequences
%$\mathcal{OP}$. For example, the sequence $(4^2,2^4)$ is not outerplanaric,
%yet it satisfies the conditions of Lemma
%\ref{lem:2pbe-d1leq4},
%%%\ref{lem:d1leq4-1+2}(b),
%hence our algorithm succeeds in constructing a 2PBE realization for it.
%Therefore, our algorithm realizes a graph with a 2-page book embedding
%not only for any sequence containing all outerplanaric sequences, but also
%for some sequences which are not outerpalanric.

%%%%%%%%%%%%%%%%%%%%%%%%%%%%
\clearpage
\bibliographystyle{abbrv}
\bibliography{realizations}

%%%%%%%%%%%%%%%%%%%%%%%%%%%%%%%%%%%%%%%%%%%%%%%%%
\end{document}